\definecolor{Cerulean}{rgb}{0.0, 0.48, 0.65}
\definecolor{MidnightBlue}{rgb}{0.1, 0.1, 0.44}
\definecolor{NavyBlue}{rgb}{0.0, 0.0, 0.5}
\newtheorem{theorem}{Theorem}
\newtheorem{lemma}{Lemma}
\newtheorem{definition}{Definition}
\newtheorem{claim}{Claim}
\newtheorem{example}{Example}
\tikzset{
box/.style ={
rectangle, %
rounded corners =1pt, %
minimum width =50pt, %
minimum height =20pt, %
inner sep=5pt, %
draw=black, %
align = center
}}
\newcommand{\calD}{\mathcal{D}}
\newcommand{\calX}{\mathcal{X}}
\newcommand{\calY}{\mathcal{Y}}
\colorlet{entry}{Cerulean}
\colorlet{string}{cyan}
\definecolor{mypurple}{HTML}{57068c}
\newcommand{\protoAMM}{\mathbf{Prot}_\AMMName}
\newcommand{\puser}{\textcolor{mypurple}{\mathcal{P}}}
\newcommand{\parbtrgr}{\textcolor{mypurple}{\mathcal{A}}}
\newcommand{\stringlitt}[1]{\textcolor{string}{\text{``#1''}}}
\newcommand{\msgorder}{\stringlitt{order}}
\newcommand{\msgarbitrage}{\stringlitt{arbitrage}}
\newcommand{\onrecv}{\textcolor{entry}{\bf On receive}\xspace}
\newcommand{\ontime}{\textcolor{entry}{\bf On time}\xspace}
\newcommand{\MCMsettle}{\textcolor{magenta}{\bf settle}\xspace}
\renewcommand{\pccomment}[1]{\textcolor{gray}{\scriptsize // #1}}
\newcommand{\protocolsidebyside}[3]{
\begin{boxedminipage}[t]{\textwidth}
	\begin{center}
	\scriptsize{\textbf{#1}}
	\end{center}
	\vspace{-\baselineskip}
	\begin{pchstack}[center]
		\procedure[mode=text, linenumbering, codesize=\footnotesize]{}{
		    #2
		}
		\procedure[lnstart=21,mode=text, linenumbering, codesize=\footnotesize]{}{
			#3
		}
	\end{pchstack}
\end{boxedminipage}
}
\def\TX{\textsf{TX}}
\def\MEV{\textsf{MEV}}
\def\SB{\textsf{SB}}
\newcommand{\AMMName}{\textsf{RediSwap}\xspace}
\newcommand{\cowswap}{CoWSwap\xspace}
\newcommand{\uniswapx}{UniswapX\xspace}
\newcommand{\set}[1]{\left \{ #1 \right \} }
\title{\AMMName: MEV Redistribution Mechanism for CFMMs}
\author{
Mengqian Zhang
\\\textit{Yale University} \\
\texttt{mengqian.cs@gmail.com}
\and Sen Yang \\\textit{Yale University}\\ \texttt{sen.yang@yale.edu}
\and Fan Zhang \\\textit{Yale University}\\ \texttt{f.zhang@yale.edu}
}
\date{}
\begin{document}

\maketitle
\begin{abstract}

Automated Market Makers (AMMs) are essential to decentralized finance, offering continuous liquidity and enabling intermediary-free trading on blockchains. However, participants in AMMs are vulnerable to Maximal Extractable Value (MEV) exploitation. Users face threats such as front-running, back-running, and sandwich attacks, while liquidity providers (LPs) incur the loss-versus-rebalancing (LVR).

In this paper, we introduce \AMMName, a novel AMM designed to capture MEV at the application level and refund it fairly among users and liquidity providers. 
At its core, \AMMName features an MEV-redistribution mechanism that manages arbitrage opportunities within the AMM pool. 
We formalize the mechanism design problem and the desired game-theoretical properties. 
A central insight underpinning our mechanism is the interpretation of the maximal MEV value as the sum of LVR and individual user losses. We prove that our mechanism is incentive-compatible and Sybil-proof, and demonstrate that it is easy for arbitrageurs to participate. 

We empirically compared \AMMName with existing solutions by replaying historical AMM trades. Our results suggest that \AMMName can achieve better execution than \uniswapx in 89\% of trades and reduce LPs' loss to under 0.5\% of the original LVR in most cases.

\vspace{0.5cm}
\noindent\textbf{Keywords:} Decentralized Finance, MEV Redistribution, Mechanism Design
\end{abstract}

\section{Introduction}
Automated Market Makers (AMMs) have become the leading design for facilitating trades on decentralized exchanges (DEXs). A key feature of AMMs is their use of liquidity pools, which are composed of tokens contributed by liquidity providers (LPs) and accumulated from past trades. This structure enables users to trade directly with the pool, eliminating the need for order matching as seen in the traditional order book system used by centralized exchanges (CEXs). %

Despite various benefits, two types of participants in AMMs—users and liquidity providers—suffer from the phenomenon known as Maximal/Miner Extractable Value (MEV)~\cite{daian2020flash}. MEV refers to the profit that intermediaries, such as searchers, builders, and proposers, can extract during the block production. Because trades in DEXs are publicly visible in the mempool before they are confirmed, those monitoring the mempool (e.g., searchers) can profit by front-running, back-running, or sandwiching user trades~\cite{qin2022quantifying,torres2021frontrunner,zhou2021high}. As a result, user trades execute at a worse price. For liquidity providers, the primary risk comes from Loss-Versus-Rebalancing (LVR)~\cite{milionis2022automated}, which quantifies the cost they incur when arbitrageurs rebalance the AMM pools in response to price movements at external markets. The existence of LVR makes it challenging for liquidity providers to earn sustainable profits, even with swap fees.

One promising direction to mitigate users' loss is \textit{refunding} them partial MEV extracted from their trades. MEV-Share~\cite{flashbots2024mevshare} and MEV Blocker~\cite{MEVBlocker} are primary examples used in practice.
However, these refunding mechanisms are rather limited, mainly because they operate near the end of the MEV supply chain, after the trades have been exploited by various intermediaries, such as searchers/arbitrageurs and builders.
Moreover, some (e.g., application-level) information is lost as transactions pass through the supply chain. One consequence is unfair redistribution. For instance, in MEV Blocker, the majority of refunds for \cowswap~\cite{cow2022protocol} orders were erroneously delivered to solvers rather than users~\cite{cowprotocol2023mevblocker}. 

On the other hand, \uniswapx~\cite{uniswapx2023overview} and \cowswap are two notable solutions at the application level, an \textit{upstream} stage of the supply chain. In both systems, users submit intents specifying their desired outcomes, and a market of solvers compete to search for the best settlement. These schemes are expected to provide better execution because solvers can aggregate liquidity from various sources and access more resources (information, capital, and sophisticated execution strategies) than users.
However, empirical evidence (as detailed in~\cref{sec:empiricalevidence}) suggests that solvers may not exhibit the expected level of sophistication, emphasizing the need for simpler mechanism designs.

Several parallel works have been proposed to mitigate LVR and compensate liquidity providers. 
A widely studied approach is to auction off exclusive rights earlier before block generation in exchange for compensation to LPs~\cite{adams2024amm,josojo2024McAMM}. 
A major concern with these \textit{ex-ante} auctions is that arbitrageurs must bid based on their estimation of future profits, which makes it challenging to engage and discourages risk-averse players. Additionally, other LVR mitigation ideas have been explored, such as reducing the inter-block time and involving dynamic fees~\cite{fritsch2024measuring,milionis2023effect}. However, like all previous MEV mitigation solutions, these approaches focus solely on protecting the interests of one group (LPs in this case), while do not take the other group of participants into account.

The limitations of existing solutions motivate us to explore AMM designs that can capture MEV at the application level, redistribute MEV fairly among users and LPs, and ensure ease of participation for solvers (arbitrageurs). Particularly, we tackle the research question through the lens of mechanism design.

\subsection{This Work}
In this paper, we propose an MEV-redistribution Constant Function Market Maker (CFMM)~\cite{angeris2020improved} called \AMMName.
\AMMName addresses the above limitations: the refund mechanism takes both users and LP into consideration; the arbitrageurs only need to take simple actions to participate. Looking ahead, our empirical evaluation suggests that \AMMName achieves better execution results than existing systems while mitigating LVR.

\AMMName has two main components: a CFMM and an MEV-redistribution mechanism that manages arbitrage opportunities\footnote{We focus on non-atomic arbitrage, which capitalizes on price discrepancies between on-chain DEXs and exchanges in another venue (i.e., CEXs like Binance or DEXs on other blockchains). Non-atomic arbitrage is one of the dominant forms of MEV. Empirical studies indicate that over a quarter of the trading volume on Ethereum's biggest five DEXs is likely attributed to non-atomic arbitrage~\cite{heimbach2024non}, and since the Ethereum Merge~\cite{ethereum2024merge}, the total profits from CEX-DEX arbitrage have reached \$131.77M~\cite{sorella2024mevdashboard}. } within the CFMM pool. We focus on the CFMM with a risky asset and a num\'{e}raire. The CFMM follows the standard design, except that users send trades to the MEV-redistribution mechanism, which uses the CFMM to execute them; the CFMM does not accept trades directly from users.

The overall workflow of \AMMName is as follows: users privately send transactions to the MEV-redistribution mechanism, e.g., via an RPC channel. 
Arbitrageurs interested in the potential arbitrage opportunity provide the MEV-redistribution mechanism with their beliefs regarding the external market price of the risky asset. The MEV-redistribution mechanism is essentially an \textit{ex-post} auction to sell arbitrage opportunities to arbitrageurs, specifying three key rules: (1) a bundle generation rule, which constructs a list of transactions (i.e., a bundle) with optimal arbitrage profit based on the inputs from users and arbitrageurs; (2) a payment rule, which decides arbitrageurs' payments, capturing a portion of the MEV within the bundle; and (3) a refund rule, which rebates the captured MEV among users and liquidity providers.

While \AMMName relies on arbitrageurs (as with \uniswapx and \cowswap), arbitrageurs only need to take simple actions, i.e., to provide their beliefs of the external prices and capital. 
That is, \AMMName internalizes the complexity of computing optimal arbitrage into the mechanism.
Moreover, \AMMName has full control over transaction ordering, which enables transparent MEV capturing and can enforce fair redistribution. Unlike MEV-Share or MEV Blocker, \AMMName does not expose transaction information to arbitrageurs.

\subsection{Our Contributions}
We present the first truthful and Sybil-proof MEV-redistribution mechanism for CFMMs.

In~\cref{sec:model}, we formalize the problem of designing an MEV-redistribution mechanism and define the desired properties. Intuitively, we aim for a mechanism that is (for arbitrageurs) \textit{incentive-compatible} (in the sense that truthfully reporting beliefs regarding the price of the risk asset is a dominant strategy) and \textit{Sybil-proof} (in the sense that creating Sybil transactions will not decrease any user's utility). Such a mechanism would make it simple for arbitrageurs to participate and not require much sophistication.

Capturing MEV and refunding it to every player requires us to understand not only how to maximize MEV, but also each player's contribution to the MEV. Thus in \cref{sec:optimalMEV}, we revisit a simpler problem, namely, how to construct an optimal non-atomic arbitrage strategy in the scenario with public order flows. We answer this question from a new perspective using potential functions. Based on the potential function characterization,  we argue that the optimal MEV value can be interpreted as the sum of LVR and each user's loss so that \emph{we can quantify the loss of each player}. 

This insight naturally leads to a strawman MEV-redistribution mechanism that sells all MEV opportunities to a single arbitrageur and refunds the winner's payment to users and liquidity providers proportional to each player's loss (\cref{sec:strawman}). Since the MEV-redistribution mechanism is designed to operate in an open and decentralized environment, arbitrageurs can pose as users and mount \textit{Sybil attacks}. The strawman solution would allow arbitrageurs to submit ``fake'' transactions to steal users' refunds. As a result, such a mechanism is shown to be truthful but, unfortunately, fails to be Sybil-proof.

The key idea in \AMMName (\cref{sec:ourmechanism}) is to sell the MEV opportunities from each pending transaction and the rebalancing arbitrage separately. The challenge is that the execution of a trade in CFMM is sensitive to the ordering, and so is the MEV value it creates. \AMMName solves this problem by ensuring independence among transactions (and corresponding refunds).
Theoretically, we prove that Sybil attacks will not reduce any user's utility, whether being truthful or not (\Cref{thm:sybilproof}). Moreover, if an arbitrageur in \AMMName has the required resource (i.e., tokens) to mount Sybil attacks, there exists a Sybil strategy such that using this strategy and truthfully reporting is a Nash equilibrium (\Cref{thm:NE}); otherwise, truthfully reporting is a dominant strategy (\Cref{theorem:truthful}).

In~\cref{sec:evaluation}, we evaluate our mechanism by replaying historical AMM trades from September 1st, 2023, to August 31st, 2024. To simulate arbitrageurs' beliefs in external prices, we use prices in Binance (a centralized exchange) to build price distribution.
By comparing the execution prices of the same order on \uniswapx or \cowswap with \AMMName, we show that our mechanism can achieve better execution prices than \uniswapx in 89\% of cases, and comparable execution prices to \cowswap. Our evaluation also shows that \AMMName effectively reduces LVR for the two Uniswap v2 liquidity pools (WETH-USDC and WETH-USDT) we experiment with, and the efficiency strengthens as more arbitrageurs participate.
For example, WETH-USDC liquidity providers incur less than 0.5\% of the LVR they would face without \AMMName in most cases.

\section{Related Work}
\parhead{MEV mitigation through AMM design} 
Several works focus on composing networks of AMMs to achieve better settlement and minimize arbitrage and slippage~\cite{angeris2022optimal,engel2021composing,zhou2021a2mm}. Instead of executing transactions sequentially, some designs~\cite{cow2022protocol,ramseyer2023speedex} process transactions in batches at a uniform clearing price, eliminating the risk of cyclic arbitrage and sandwich attacks within a block, though not all forms of manipulation are fully addressed~\cite{zhang2024computation}. Some studies have also explored the idea of integrating batch trading with CFMMs~\cite{canidio2023batching,ramseyer2022augmenting}. Orthogonally, Ferreira and Parkes~\cite{xavier2023credible} initiate the study of verifiable sequencing rules and propose a concrete greedy sequencing rule, which structurally inhibits the feasibility of sandwich attacks. Chan, Wu, and Shi~\cite{chan2024mechanism} initiate the mechanism design approach/philosophy for mitigating MEV and study a model where relevant strategic players (user or miner) aim to obtain \emph{risk-free profit}, whereas we focus on non-atomic arbitrage. AnimaguSwap~\cite{wadhwa2023data} presents an AMM design to achieve data-independent ordering of user transactions at the application level. V0LVER~\cite{mcmenamin2023amm} presents an AMM against LVR and MEV based on an encrypted transaction mempool.

\parhead{Other MEV mitigation solutions}
Mitigating the negative effects of MEV is a research focus in both academia and industry, with various studies conducted from different perspectives.
A commonly used practice for MEV mitigation is using private channels~\cite{MEVBlocker,flashbots2024protect}, where user transactions are sent directly to block builders, bypassing the public mempool and thus preventing MEV attacks like front-running and sandwich attacks. 
Another approach focuses on ensuring time-based order fairness~\cite{kelkar2020order, kelkar2023themis, zhang2020byzantine, cachin2022quick}. Miners or validators adhering to time-based order fairness must order transactions based on the time they receive the transaction, preventing MEV caused by ex-post order manipulation. 
A different method to achieve fair ordering involves users first committing to their transactions and revealing them only after the order has been determined. Thus, the transaction order is determined independently of the content. 
We refer readers to an SoK paper~\cite{yang2022sok} for a comprehensive understanding of various MEV mitigation approaches.

\parhead{Sybil-proofness in mechanism design in blockchain.} The permissionless nature of blockchain makes it very easy for participants to create multiple identities, thus launching Sybil attacks. To see the challenges, the most successful mechanisms in classic auction theory, such as second-price auctions and VCG auctions, are vulnerable to Sybil attacks. Since the pioneered work by Roughgarden on transaction fee mechanism design~\cite{DBLP:conf/sigecom/Roughgarden21} and also for practical consideration, Sybil-proofness has become one of the most desired properties for mechanism design in blockchain. Notable examples include the rich transaction fee mechanism design literature~\cite{DBLP:conf/sigecom/Roughgarden21,DBLP:conf/aft/BahraniGR24,chung2024collusion,gafni2024barriers}, voting mechanism~\cite{lenzi2024efficient}, and the very recent mechanism in ZK-Rollup prover markets~\cite{wang2024mechanism}. Mazorra and Penna~\cite{mazorra2023towards} consider a similar MEV rebates problem in the context of MEV-share combinatorial order flow auctions. They discuss the Shapley value-based mechanism and show that in their setting, any symmetric, efficient, and Strong monotonic rebate mechanism is weak against Sybil strategies.

\section{Model}\label{sec:model}

\subsection{The Basic Model}\label{subsec:basic}
This section describes the basic CFMM pool and all involved players, including liquidity providers, users, and arbitrageurs.

\parhead{CFMM Pool}\label{subsec:CFMM}
We consider a CFMM pool with a risky asset $\mathcal{X}$ (e.g., ETH) and a num\'{e}raire asset $\mathcal{Y}$ (e.g., USDC). The pool state $s=(x,y)$ is specified by the current reserves of tokens and should satisfy a constant function $F(x,y)$. The slope at a state $(x,y)\in F$ is the spot price or marginal exchange rate $\left|\frac{\partial F/\partial x}{\partial F/\partial y}\right|$. 

We do not prescribe a constant function but only require two natural properties on $F$: (1) for any two points $(x,y),(x', y')\in F$, we have $x>x'\Leftrightarrow y<y'$, namely, when the reserve of $\calX$ increases, the reserve of $\calY$ decreases and vice versa; (2) $F(x,y)$ is differentiable and the marginal exchange rate $\left|\frac{\partial F/\partial x}{\partial F/\partial y}\right|$ is decreasing with respect to $x$. Most CFMMs satisfy these two properties, including Uniswap v2 and v3. The two properties imply that for any $x$, there is exactly one $y$ such that $(x,y)\in F$ and vice versa. To simplify notations, we use $F_y(x)$ to denote that $y$ such that $(x,y)\in F$ and similarly define $F_x(y)$.

We call the pool's state after the latest block its \textit{initial state}  $s_0 = (x_0, y_0)$.

\parhead{Liquidity Providers} 
Liquidity providers contribute pairs of tokens (e.g., ETH and USDC) to a CFMM pool, which is then used to fulfill users' trade orders. In exchange, LPs earn fees from each trade in the pool. A small fraction $f\in[0,1)$ of each trade's input tokens are charged as swap fees and are shared by liquidity providers proportional to their contribution to liquidity reserves. 
The fees are temporarily kept by the pool and can be withdrawn by liquidity providers by burning their liquidity tokens. 
In our design, the swap fees are stored separately like in Uniswap v3, rather than being continuously deposited in the pool as liquidity (e.g., Uniswap v2). 
Throughout this paper, we only consider swap-like transactions and assume no liquidity deposit or redemption.

\parhead{Users/Noise Traders} Users are a population of noise traders who intend to trade through the CFMM pool. Each user holds one type of asset and seeks to exchange it for another by creating a swap transaction. 
Specifically, a transaction $\TX=(\calX \to \calY, \delta_{\calX}^{in}, \delta_{\calY}^{out})$ specifies that the user is willing to spend up to $\delta_{\calX}^{in}$ units of asset $\calX$, provided they receive at least $\delta_{\calY}^{out}$ units of asset $\calY$. Similarly, the signer of transaction $\TX=(\calY \to \calX, \delta_{\calY}^{in}, \delta_{\calX}^{out})$ allows at most $\delta_{\calY}^{in}$ units of $\calY$ to be taken from their account if they receive at least $\delta_{\calX}^{out}$ units of $\calX$. For simplicity of notations, the terms $\delta_{\calX}^{in}$ and $\delta_{\calY}^{in}$ refer to the actual quantities of input tokens available for trading, after the fees have been deducted.
In this paper, we use trade/order/transaction interchangeably. 

\parhead{Arbitrageurs/Informed Traders} Arbitrageurs are informed traders with access to external markets. They continuously monitor the price disparities between the CFMM and external markets, seeking to profit by arbitrage. %

Each arbitrageur $i$ holds a private belief $v_i^*$ regarding the external price of the risky asset $\calX$. These beliefs decide when and how to execute arbitrage opportunities. 
Arbitrageurs may have different beliefs for various reasons. First, they may focus on different external markets or operate on different timelines for executing off-chain trades. Second, prices can exhibit significant volatility, leading to varying price expectations among arbitrageurs. Lastly, arbitrageurs' trading costs may be different, especially in off-chain trading venues.

\parhead{Remarks}
In today's MEV supply chains, when arbitrageurs compete for such MEV opportunities through MEV auctions~\cite{daian2020flash}, most MEV is captured by block builders and proposers. 
This is undesirable for redistribution and is avoided in \AMMName because arbitrageurs' competition happens on the CFMM side. 
A notable feature of the MEV-redistribution mechanism is its ability to insert MEV transactions on behalf of arbitrageurs,
which do not need to pay swap fees (namely, $f=0$ for them). This enables arbitrageurs to correct even tiny price
discrepancies~\cite{adams2024amm} and makes it possible to compute the optimal MEV strategy efficiently (otherwise, the problem becomes NP-hard~\cite{zhang2024computation}).

\subsection{MEV-Redistribution Mechanism}
The key novelty of \AMMName is a new MEV-redistribution mechanism, which decides which user transactions should be included in the bundle, which arbitrageurs have the right to insert arbitrage transactions, the sequence of these transactions, and how much MEV value should be refunded to users and liquidity providers. These decisions are formalized by three functions: a bundle generation rule, a payment rule, and a refund rule.

Before presenting the formal definitions, we first introduce necessary notations.
We use $N$ to denote the set of $n$ arbitrageurs, and $\mathbf{q}$ to denote a $n$-sized vector where $q_i$ is arbitrageur $i$'s report on the external price of the risky asset $\calX$. 
Their true belief is $v_i^*$, which may differ from $q_i$.
We assume each arbitrageur $i$ can create any number of Sybil transactions, the set of which is denoted by $S_i$. Thus, we use $M = R \cup S$ to denote the set of pending transactions in the CFMM pool, where $R$ is a set of \textit{real} transactions from users, and $S = S_1 \cup S_2 \cup \cdots \cup S_n$.

\begin{definition}[Bundle Generation Rule]
    A \textit{bundle generation rule} is a function $\mathbf{b}$ from the pool's initial state $s_0$, pending transactions $M$, and arbitrageurs' report $\mathbf{q}$ to an ordered set $B$ of transactions (a bundle). The elements in $B$ are $T\cup A$, where $T\subseteq M$ is a subset of pending transactions, and $A = A_1 \cup \cdots \cup A_n$ is the set of MEV transactions inserted by the rule on behalf of arbitrageurs. 
\end{definition} 

\begin{definition}[Payment Rule]
    A \textit{payment rule} is a function $\mathbf{p}$ from the pool's initial state $s_0$, pending transactions $M$, and arbitrageurs' report $\mathbf{q}$ to a non-negative number $p_i(s_0, M, \mathbf{q})$ for each arbitrageur $i\in [n]$.
\end{definition}

\begin{definition}[Refund Rule]
    A \textit{refund rule} is a function $\mathbf{r}$ from the pool's initial state $s_0$, pending transactions $M$, and arbitrageurs' report $\mathbf{q}$ to a non-negative number $r(s_0, M, \mathbf{q}, \TX_j)$ for each transaction $\TX_j \in M$; the remaining payments from arbitrageurs, i.e., $\sum_{i\in [n]} p_i(s_0, M, \mathbf{q}) - \sum_{\TX_j \in M} r(s_0, M, \mathbf{q}, \TX_j)$ are refunded to liquidity providers.
\end{definition}

When the context is clear, we may shorten $r(s_0, M, \mathbf{q}, \TX_j)$ to $r(\TX_j)$ or $r_j$.

\begin{definition}[MEV-Redistribution Mechanism]
    An \textit{MEV-redistribution mechanism} is a triple $(\mathbf{b}, \mathbf{p}, \mathbf{r})$ in which $\mathbf{b}$ is a bundle generation rule, $\mathbf{p}$ is a payment rule, and $\mathbf{r}$ is a refund rule.
\end{definition}

\subsection{Desired Properties}

This section formalizes what it means for an MEV-redistribution mechanism to be game-theoretically sound. 
First of all, an arbitrageur should be incentivized to report their true belief in the external price of the risky asset $\calX$ (defined as ``Truthful'' below). 
Meanwhile, we must prevent arbitrageurs from stealing user refunds by posing as users in Sybil attacks.
As a reminder, we assume that arbitrageurs can submit any number of Sybil transactions, which may receive some refunds. 
We require that inserting Sybil transactions will not harm users' utilities (defined as ``Sybil-proof'' below).

\begin{definition}[Arbitrageur Utility Function]\label{def:attacker_utility}
     For an MEV-redistribution mechanism $(\mathbf{b}, \mathbf{p}, \mathbf{r})$, pool's initial state $s_0$, pending transactions $M$, arbitrageurs' report $\mathbf{q}$, and generated bundle $B$, the utility of arbitrageur $i$ with true belief $v_i^*$ and Sybil transactions $S_i \subseteq M$ is 
    \begin{subequations}   
        \begin{align}
        u\big(S_i, q_i; S_{-i}, \mathbf{q}_{-i}\big) \coloneqq & \quad \sum_{\TX_{j} \in S_i\cap B}
        \Big[ \left(x_{j-1}-x_j\right) \cdot v_i^* + (y_{j-1}-y_j) + r\left(\TX_{j}\right) \Big] \label{eq:sybiltxs}\\
        &+ \sum_{\TX_{j} \in A_i}
        \Big[ \left(x_{j-1}-x_j\right) \cdot v_i^* + (y_{j-1}-y_j) \Big] \label{eq:arbtxs} \\
        &- \quad p_i,   \label{eq:payment}
        \end{align}
        \label{eq:attacker_utility}
    \end{subequations}
     where $A_i \subseteq B$ is $i$'s arbitrage transactions inserted by the bundle generation rule $\mathbf{b}$ and $(x_j, y_j)$ is the pool's state after the $j$-th transaction in the bundle.
\end{definition}

On the LHS of (\ref{eq:attacker_utility}), we highlight the dependence
of the utility function on the two arguments that are under arbitrageur $i$'s direct control: the choices of
Sybil transactions $S_i$ and the report on external price $q_i$. 
We also explicitly show its dependence on other arbitrageurs' strategy $S_{-i}$ and $\mathbf{q}_{-i}$. 
Here, $\mathbf{q}_{-i}$ means the vector $\mathbf{q}$ of all reports, but with the $i$-th component removed. This is a standard notation in economics, and we will also use similar notations for other objects (e.g., $S_{-i}$ above). 
The formula (\ref{eq:sybiltxs}) sums over $i$'s Sybil transactions included in the generated bundle, the first two terms of which represent $i$'s revenue from transaction execution, and the third term is the refund received by this transaction. The formula (\ref{eq:arbtxs}) sums over the revenue from $i$'s arbitrage transactions added in the bundle by the mechanism. The formula (\ref{eq:payment}) is the cost $i$ needs to pay for the MEV opportunity, the output of the payment rule.

\begin{definition}[Truthful] 
    An MEV-redistribution mechanism $(\mathbf{b}, \mathbf{p}, \mathbf{r})$ is truthful if, for every pool's initial state $s_0$, pending transactions $M = R \cup S_{-i}$, and other arbitrageurs' report $\mathbf{q}_{-i}$, for every  arbitrageur $i$, it  maximizes its utility (\ref{eq:attacker_utility}) by reporting its true belief (i.e., setting $q_i = v_i^*$), namely,

    \begin{center}
        $u(S_i=\emptyset,v_i^*;S_{-i},\mathbf{q}_{-i})\geq u(S_i=\emptyset,q_i;S_{-i},\mathbf{q}_{-i})$ for all $q_i\in \mathbb{R}$.
    \end{center}
    
\end{definition}

Note that we separate the discussion of truthfulness from Sybil proofness, and the above definition assumes that arbitrageur $i$ does not add Sybil transactions.

As mentioned above, 
arbitrageurs may steal refunds by pretending to be users and submitting fake (Sybil) transactions.
An MEV-redistribution mechanism is \textit{Sybil-proof} in that creating Sybil transactions does not reduce any user's utility.

\begin{definition}[User Utility Function]\label{def:userutility}
    For an MEV-redistribution mechanism $(\mathbf{b}, \mathbf{p}, \mathbf{r})$, pool's initial state $s_0$, pending transactions $M=R\cup S$, arbitrageurs' report $\mathbf{q}$, the utility of the originator of a transaction $\TX_j \in R$ is 
    \begin{equation}
        u(\TX_j \mid S,\mathbf{q}) = \frac{\delta_{\calY}}{\delta_{\calX}} \cdot \calX_j + \calY_j,
        \label{eq:userutility}
    \end{equation}
    where $(\calX_j, \calY_j)$ is the number of tokens the originator has after executed through the mechanism, and $\delta_{\calX} = \delta_{\calX}^{in}$, $\delta_{\calY} = \delta_{\calY}^{out}$ if $\TX=(\calX \to \calY, \delta_{\calX}^{in}, \delta_{\calY}^{out})$; $\delta_{\calX} = \delta_{\calX}^{out}$, $\delta_{\calY} = \delta_{\calY}^{in}$ if $\TX=(\calY \to \calX, \delta_{\calY}^{in}, \delta_{\calX}^{out})$.
\end{definition}

Initially, it's assumed that the user of an $\calX \to \calY$ transaction holds $\delta_{\calX}^{in}$ units of $\calX$ token and no $\calY$ token, while the user of an $\calY \to \calX$ transaction has $\delta_{\calY}^{in}$ units of $\calY$ token and no $\calX$ token. After execution, the user's utility depends on the results of the mechanism, where $\calX_j$ and $\calY_j$ in \Cref{eq:userutility} counts not only the direct execution of $\TX_j$ but also the extra refund it receives. Additionally, we highlight the dependence of the utility function on the (partial) inputs $S$ and $\mathbf{q}$ of the mechanism, which are under arbitrageurs' control though.

\begin{definition}[Sybil-proof]
    An MEV-redistribution mechanism $(\mathbf{b}, \mathbf{p}, \mathbf{r})$ is Sybil-proof if, for any initial state $s_0$, pending transactions $M = R \cup S_{-i}$, and arbitrageurs' report $\mathbf{q}$, creating Sybil transactions $S_i$ will not reduce users' utility:
    \begin{center}
        $u(\TX_j \mid \emptyset \cup S_{-i}, \mathbf{q})\geq u(\TX_j \mid S_i \cup S_{-i}, \mathbf{q})$ for all $\TX_j\in R$ and all $S_i$.
    \end{center}
    
\end{definition}

\section{Optimal MEV with Public Orders}\label{sec:optimalMEV}
Before delving into \AMMName, it is helpful to take a slight detour to consider an optimization problem in standard AMMs where user trades are \textit{public}. (To reiterate, \AMMName hides user transactions from the public, including arbitrageurs.)

In this setting, transactions are visible to everyone, so arbitrageurs can create their own MEV bundles to exploit arbitrage opportunities between the CFMM and external markets (assuming no swap fee). The key question for each arbitrageur is: what is the optimal MEV strategy to maximize utility given the public order flows?%

Without loss of generality, we can discuss this problem from the perspective of an arbitrary arbitrageur, whose belief in the external price of the risky asset $\calX$ is assumed to be $v^*$. Note that for an arbitrary $v^* >0$, there is exactly one pool state $s^* = (x^*, y^*)$ at which the pool's marginal exchange rate $\left|\frac{\partial F/\partial x}{\partial F/\partial y}\right| = v^*$. So we use $v^*$ and $(x^*, y^*)$ interchangeably to represent the arbitrageur's belief. Also note that arbitrageurs do not need to create Sybil transactions in this setting, as they can insert arbitrary transactions when constructing the bundle.

\subsection{Optimal MEV Strategy}\label{subsec:optimalMEV}

\begin{algorithm}[!ht]
	\caption{Optimal MEV Strategy under Public Order}
	\label{alg:optimalMEV}
	\KwIn{An initial state $s_0=(x_0,y_0)$, a set of users transactions $\set{\TX_j}_{j\in[m]}$, and the arbitrageur's belief $v^*$ which corresponds to the state $s^* = (x^*, y^*)$.
	}
	\KwOut{A bundle for the arbitrageur to obtain the maximal MEV.
	}
	
	\BlankLine
	
	\BlankLine

    \For{each $j \in [1:m]$}{ \label{algline:enumerate}
    Current state $s_{j-1} = (x_{j-1}, y_{j-1})$.{\hfill\tcp{Token reserves}}

    \If{$\TX_j = (\calX \to \calY, \delta_{\calX}^{in}, \delta_{\calY}^{out})$}{\label{line:preprocessBegin}
        $(x_j^\ell, y_j^\ell)$ is the limit state satisfying $y_j^\ell - F_{y}(x_j^\ell + \delta_{\calX}^{in}) = \delta_{\calY}^{out}$.
        
        Let $\Delta x \gets \delta_{\calX}^{in}$, $\Delta y \gets -\delta_{\calY}^{out}$.

    }
    \ElseIf{$\TX_j=(\calY \to \calX, \delta_{\calY}^{in}, \delta_{\calX}^{out})$}{
        $(x_j^\ell, y_j^\ell)$ is the limit state satisfying $x_j^\ell - F_x(y_j^\ell + \delta_{\calY}^{in}) = \delta_{\calX}^{out}$.
    
        Let $\Delta x \gets -\delta_{\calX}^{out}$, $\Delta y \gets \delta_{\calY}^{in}$. 
    }\label{line:preprocessEnd}

    Let $\Delta \phi \gets \Delta x \cdot v^* + \Delta y$.   
    
    \If{$\Delta \phi \geq 0$}{   \label{line:insertBegin}
        \If{$x_{j-1} < x_j^\ell$}{
            Insert a transaction $\TX =\Big( \calX \to \calY, \delta_{\calX}^{in} = x_j^\ell-x_{j-1}, \delta_{\calY}^{out} = y_{j-1}-y_j^\ell \Big)$.
        }
        \ElseIf{$x_{j-1} > x_j^\ell$}{
            Insert a transaction $\TX =\Big(\calY \to \calX, \delta_{\calY}^{in}=y_j^\ell-y_{j-1}, \delta_{\calX}^{out}=x_{j-1} - x_j^\ell \Big)$.
        }
    
        Place the user transaction $\TX_j$.
    }   \label{line:insertEnd}
    
    }\label{algline:enumerateEnd}
    Suppose the current state is $(x, y)$. \label{algline:arbitrage}

    \If{$x < x^*$}{
        Add a transaction $\TX = \Big( \calX \to \calY, \delta_{\calX}^{in}=x^*-x, \delta_{\calY}^{out}=y-y^* \Big)$.
    }
    \ElseIf{$x > x^*$}{
        Add a transaction $\TX =\Big(\calY \to \calX, \delta_{\calY}^{in}=y^* - y,  \delta_{\calX}^{out}=x - x^* \Big)$.
    }\label{algline:arbitrageEnd}
\end{algorithm}

Algorithm~\ref{alg:optimalMEV} presents an efficient strategy to extract the maximal MEV. A formal definition of the MEV optimization problem and a detailed elaboration of Algorithm~\ref{alg:optimalMEV} are shown in~\cref{sec:optimal}, in the interest of space.  
Our optimal MEV strategy generalizes the polynomial-time algorithm for constant product AMM by Bartoletti et al.~\cite{bartoletti2022maximizing} to work for all CFMMs satisfying the two natural properties defined in \cref{subsec:basic}. 
Below, we briefly present the main idea, focusing on the new perspective gained by interpreting the optimal MEV strategy using \textit{potential function}.

In a special case where there is no user transaction, it is not hard to show that the arbitrageur's best strategy is to insert an arbitrage transaction to change the state $(x, y)$ to $(x^*, y^*)$ at which $\left|\frac{\partial F/\partial x}{\partial F/\partial y}\right| = v^*$. Given $v^*$, the profit from arbitrage only depends on the starting point $(x, y)$. In other words, each state $(x, y)$ corresponds to an arbitrage profit, which can be understood as the potential value of that state. So, for the arbitrageur with belief $v^*$, we define the \textit{potential value} of any pool state $s=(x,y)$ as
\begin{equation}\label{eq:potential}
    \phi(s, v^*) = (x \cdot v^* + y) - (x^* \cdot v^* + y^*).
\end{equation}
When the context is clear, we abbreviate this as $\phi(s)$.

When there is a set of $m$ pending user transactions, the key observation is that the execution of certain user transactions can increase the potential profits of arbitrage. Intuitively, the user of $\TX_j$ and arbitrageur form a zero-sum game, so the worst execution of $\TX_j$ is the best for the arbitrageur, which happens if $\TX_j$ is executed at its limit state\footnote{$\TX_j$'s \textit{limit state} $s_j^\ell = (x_j^\ell, y_j^\ell)$ is defined as the state at which, when executed, the transaction will pay exactly the maximum amount of input token and receive the minimum amount of output token.}: $(x_j^l, y_j^l) \xrightarrow[]{\TX_j} (x_j^l + \Delta x_j, y_j^l + \Delta y_j)$, where $(\Delta x_j, \Delta y_j)$ is $\TX_j$'s impact on the trading pool when executed at its limit state, namely,
\begin{equation}  
    (\Delta x_j, \Delta y_j) \coloneqq
    \left\{  
        \begin{array}{lr}  
            (\delta_{\calX}^{in}, -\delta_{\calY}^{out}), & \text{ ~~when } \TX_j = (\calX \to \calY, \delta_{\calX}^{in}, \delta_{\calY}^{out}); \\  
            (-\delta_{\calX}^{out}, \delta_{\calY}^{in}), & \text{ ~~when } \TX_j=(\calY \to \calX, \delta_{\calY}^{in}, \delta_{\calX}^{out}).    
        \end{array}  
    \right.
    \label{eq:impact}
\end{equation}
From the perspective of the arbitrageur, whether a transaction $\TX_j$ should be executed depends on its impact on the potential value of the pool state, namely, the difference between the potential value of the post-execution state and that of the pre-execution state. We define such difference as $\TX_j$'s \textit{potential value} $\Delta\phi_j$:
\begin{equation}\label{eq:deltaPhi}
    \Delta \phi_j \coloneqq \phi(x_j^l + \Delta x_j, y_j^l + \Delta y_j) - \phi(x_j^l, y_j^l) = \Delta x_j \cdot v^* + \Delta y_j,
\end{equation}
where $v^*$ is the arbitrageur's belief in the external price. If $\Delta\phi_j \geq 0$, $\TX_j$ is profitable, then it will be placed in the bundle and executed at its limit state; otherwise, it will be ignored and bring zero profit. Combining these two cases, the \textit{actual} value of an arbitrary transaction $\TX_j$ is denoted by
\begin{equation}
    V(\TX_j) = \max\{0, \Delta\phi_j\}.
    \label{eq:TXvalue}
\end{equation}

Based on the above intuition, we reach the following result:

\begin{theorem}\label{theorem:publicOF} 
The arbitrageur's maximal MEV is $\phi(s_0, v^*) + \sum_{j\in[m]} V(\TX_j)$. Furthermore, Algorithm~\ref{alg:optimalMEV} can construct the bundle that obtains the maximal MEV in polynomial time. 
\end{theorem}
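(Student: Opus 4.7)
The plan is to treat the potential function $\phi(\cdot,v^*)$ from Equation~(\ref{eq:potential}) as an accounting device and obtain the claimed value by telescoping over the states visited by any bundle. I would start by establishing two structural properties of $\phi$ on the curve $F$. First, since the marginal exchange rate is decreasing in $x$ (Property~2 in~\cref{subsec:basic}), $F_y$ is convex, so $x v^* + y$ is convex in $x$ along $F$ with a unique minimum at $s^*$; consequently $\phi(s,v^*)\geq 0$ for every $s\in F$, with equality only at $s^*$. Second, a fee-free arbitrage swap moving the pool between two points $s,s'\in F$ yields the arbitrageur a $v^*$-valued profit of exactly $\phi(s,v^*)-\phi(s',v^*)$, by direct computation from Equation~(\ref{eq:potential}).

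With these in hand I would prove the upper bound. Fix any bundle and let $s_0,s_1,\ldots,s_k$ be the sequence of pool states it produces. Each arbitrage step from $s_{i-1}$ to $s_i$ contributes $\phi(s_{i-1})-\phi(s_i)$ to the arbitrageur's profit, while each user-transaction step contributes $0$ directly. Telescoping,
\begin{equation*}
\text{MEV} \;=\; \phi(s_0,v^*) \;-\; \phi(s_k,v^*) \;+\; \sum_{\TX_j \in T} \bigl[\phi(s_j^{\text{post}}) - \phi(s_j^{\text{pre}})\bigr],
\end{equation*}
where $T\subseteq\{\TX_1,\ldots,\TX_m\}$ is the set of user transactions included and $s_j^{\text{pre}},s_j^{\text{post}}$ are the states just before and after executing $\TX_j$. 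Since $\phi(s_k,v^*)\geq 0$, the second term may be dropped. The key technical lemma asserts that, for any valid execution of $\TX_j$, the potential change $\phi(s_j^{\text{post}})-\phi(s_j^{\text{pre}})$ is at most $\Delta\phi_j$ from Equation~(\ref{eq:deltaPhi}), with equality at the limit state $s_j^\ell$. For a $\calX\to\calY$ transaction this change equals $\delta_\calX^{in}\cdot v^* + F_y(x+\delta_\calX^{in})-F_y(x)$ where $x$ is the pre-state reserve; convexity of $F_y$ makes this expression nondecreasing in $x$, so it is maximized at the largest admissible pre-state $x$, which is precisely $x_j^\ell$ by definition of the limit state. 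The $\calY\to\calX$ case is symmetric. Since the arbitrageur may exclude any transaction, including one with $\Delta\phi_j<0$ strictly decreases the total, so the optimal choice is to include exactly those with $\Delta\phi_j\geq 0$, yielding $\sum_j\max(0,\Delta\phi_j)=\sum_j V(\TX_j)$.

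To close, I would verify that \cref{alg:optimalMEV} attains this bound in polynomial time. By construction the algorithm (i)~includes exactly those $\TX_j$ with $\Delta\phi_j\geq 0$, (ii)~inserts an arbitrage swap before each included $\TX_j$ that drives the pool to $s_j^\ell$, and (iii)~appends a final arbitrage to $s^*$. Plugging these choices into the telescoping identity, every intermediate potential cancels and the total profit equals $\phi(s_0,v^*)+\sum_{j\in[m]}V(\TX_j)$, matching the upper bound. The running time is $O(m)$, hence polynomial, assuming constant-time access to $F_x,F_y$, since the algorithm performs a single pass with $O(1)$ work per transaction.

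The step I expect to be the main obstacle is the monotonicity lemma asserting that the potential change from executing $\TX_j$ is maximized at its limit state. The input amount is fixed by the user, but the output is determined by the CFMM and the pre-state; the claim that a larger pre-state $x$ (for $\calX\to\calY$) yields a larger potential change requires careful use of convexity of $F_y$, and identifying the limit state as the boundary of the feasible pre-state region relies on the decreasing-marginal-rate property. Both structural properties of $F$ thus do real work here, whereas the rest of the proof is a clean telescoping argument handled uniformly for both trade directions.
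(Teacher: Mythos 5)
Your proof is correct and follows essentially the same route as the paper's: both establish $\phi(\cdot,v^*)\geq 0$, observe that each arbitrage swap contributes exactly $\phi(s^{\text{pre}})-\phi(s^{\text{post}})$ to the profit (so profit plus potential is conserved across arbitrage steps), telescope to bound each executed user transaction's contribution by $\Delta\phi_j\leq V(\TX_j)$, and verify that Algorithm~\ref{alg:optimalMEV} attains equality by driving each profitable transaction to its limit state and terminating at $s^*$ where the potential vanishes. The one place you diverge is the per-transaction bound: the paper obtains $\phi(s^{\text{post}})-\phi(s^{\text{pre}})\leq \delta_{\calX}^{in}\cdot v^*-\delta_{\calY}^{out}$ in one line directly from the constraint that the user deposits at most $\delta_{\calX}^{in}$ and receives at least $\delta_{\calY}^{out}$, whereas your monotonicity-via-convexity-of-$F_y$ argument is a valid but heavier route to the same inequality, with the modest benefit of explicitly identifying the limit state as the maximizing pre-state.
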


The proof is non-trivial and can be found in~\cref{subsec:mevproof}.

In simple terms, Algorithm~\ref{alg:optimalMEV} enumerates each pending transaction $\TX_j$ ($j\in[m]$), inserts a frontrunning transaction to make $\TX_j$ execute at its limit state if it has non-negative potential value (i.e., $\Delta\phi_j \geq 0$), and at the end, concludes with a single backrunning transaction to capture all arbitrage profits and stop at the no-arbitrage state $(x^*, y^*)$ aligned with the arbitrageur's belief $v^*$.  
To provide intuition, we use~\Cref{ex:optimalMEV} to demonstrate how Algorithm~\ref{alg:optimalMEV} operates and illustrate its capability to capture the maximal MEV.
\begin{example}\label{ex:optimalMEV}
    Consider a trading curve defined by $F(x,y) = xy = 400$, with an initial state of $s_0 = (4,100)$ and a swap fee $f=0$. The arbitrageur holds a belief about the external price of asset $\calX$, valuing it at $v^* = 4$, which corresponds to the no-arbitrage state $(10, 40)$. There are three pending transactions: $\TX_1=(\calX \to \calY, \delta_{\calX}^{in}=8, \delta_{\calY}^{out}=25)$, $\TX_2=(\calX \to \calY, \delta_{\calX}^{in}=30, \delta_{\calY}^{out}=12)$, and $\TX_3=(\calY \to \calX, \delta_{\calY}^{in}=20, \delta_{\calX}^{out}=10)$. 
    
    We now compute the potential value of the initial state, as well as each transaction's limit state, its impact on the trading pool, potential value, and actual value:
    \begin{itemize}
        \item Initial state $s_0$: $\phi(s_0, v^*) = 36$;

        \item $\TX_1$: $(x_1^\ell, y_1^\ell) = (8,50)$, $(\Delta x_1, \Delta y_1) = (8, -25)$, $\Delta \phi_1 = 7$, $V(\TX_1) = 7$;
        
        \item $\TX_2$: $(x_2^\ell, y_2^\ell) = (20,20)$, $(\Delta x_2, \Delta y_2) = (30, -12)$, $\Delta \phi_2 = 108$, $V(\TX_2) = 108$;

        \item $\TX_3$: $(x_3^\ell, y_3^\ell) = (20,20)$, $(\Delta x_3, \Delta y_3) = (-10, 20)$, $\Delta \phi_3 = -20$, $V(\TX_3) = 0$.
    \end{itemize}
    
    According to~\Cref{theorem:publicOF}, the arbitrageur's maximal MEV is $\phi(s_0, v^*) + \sum_{j\in[1:3]} V(\TX_j) = 151$. 
    To illustrate the operations and correctness of Algorithm~\ref{alg:optimalMEV}, we first provide a more intuitive (but less efficient) way to extract the maximal MEV, and then show its simplification by steps, which corresponds to the result of Algorithm~\ref{alg:optimalMEV}. 
    
    The more intuitive way is to first ``sandwich'' $\TX_1$ and $\TX_2$ (by frontrunning each transaction so it executes at its limit state, followed by a backrunning transaction to revert to the initial state), and then arbitrage the pool to eventually reach the no-arbitrage state $(10, 40)$, as illustrated in~\Cref{fig:exp1-sandwich}. To calculate the arbitrageur's profit from this bundle, we can split each backrunning transaction into two smaller ones, allowing the revenue from one of them to offset that from the frontrunning transaction. For instance, $\TX_1^{\textsf{Back}}$ can be divided into $\TX_1^{\textsf{Back1}}$ and $\TX_1^{\textsf{Back2}}$, changing the state as follows: $(16, 25) \xrightarrow[]{\TX_1^{\textsf{Back1}}} (8, 50) \xrightarrow[]{\TX_1^{\textsf{Back2}}} (4, 100)$. In this way, the revenue from $\TX_1^{\textsf{Back2}}$ cancels out the revenue from $\TX_1^{\textsf{Front}}$, making it easy to verify that the profit from the first ``sandwich'' (i.e., $\TX_1^{\textsf{Front}}$, $\TX_1$, and $\TX_1^{\textsf{Back}}$) is exactly $V(\TX_1)$. Similarly, the profit from the second ``sandwich'' (i.e., $\TX_2^{\textsf{Front}}$, $\TX_2$, and $\TX_2^{\textsf{Back}}$) is $V(\TX_2)$. Finally, the profit from the last arbitrage transaction $\TX_0^{\textsf{Arb}}$ is $-(10-4)\times 4 + (100-40) = \phi(s_0, v^*)$. Hence, the bundle in~\Cref{fig:exp1-sandwich} extracts the maximal MEV for the arbitrageur. 

    Another approach to split the MEV transitions is shown in~\Cref{fig:exp1-split}, where $\TX_2^{\textsf{Front}}$ and $\TX_2^{\textsf{Back}}$ are divided so that the total profit from transactions with the same symbol is 0. Removing the four transactions with symbols leaves a simplified bundle, shown in~\Cref{fig:exp1-bundle}, which is the outcome of Algorithm~\ref{alg:optimalMEV}. This bundle extracts the same profits as the previous one, confirming that Algorithm~\ref{alg:optimalMEV} can capture the maximal MEV. 
\end{example}

\begin{figure}[htbp]
    \centering
    \begin{subfigure}{0.95\textwidth}
        \centering
        \includegraphics[width=\linewidth]{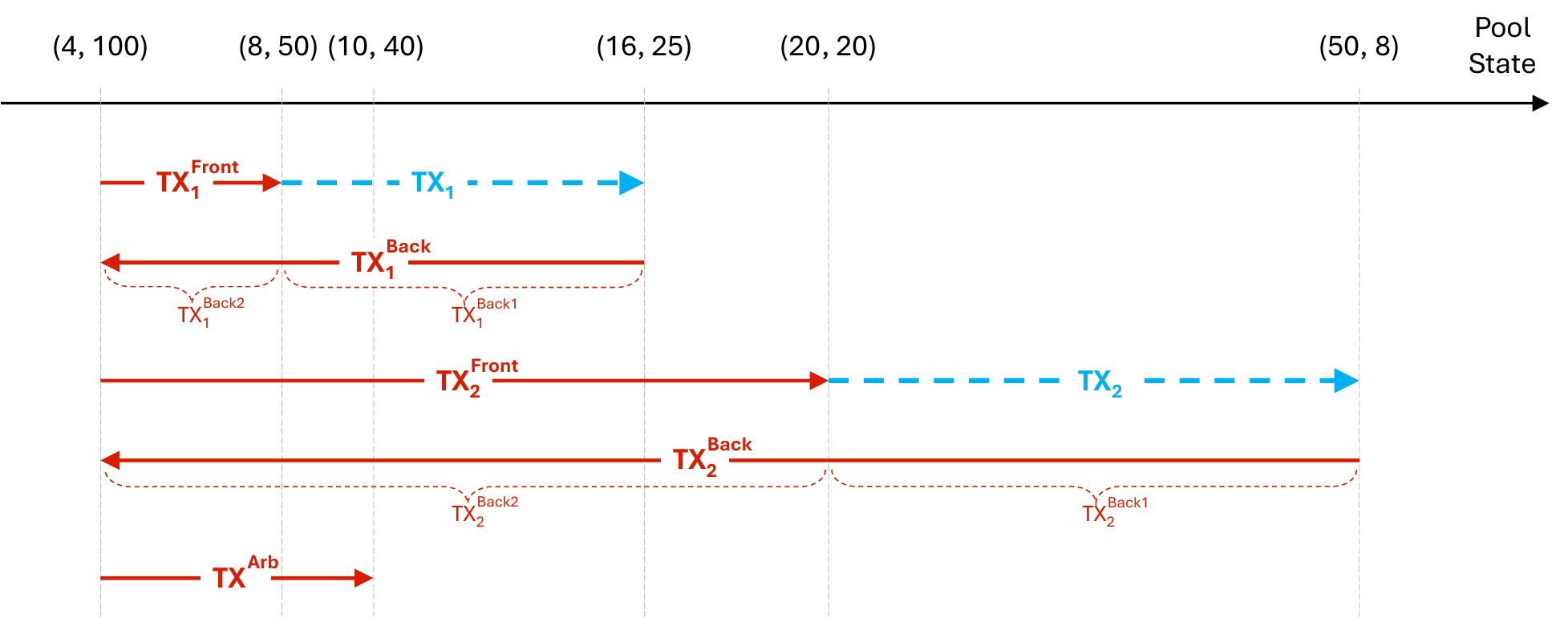}
        \caption{A bundle to extract the maximal MEV by ``sandwiching'' each pending transaction with a non-negative potential value, followed by an arbitrage transaction to reach the no-arbitrage state. $\TX_{1(2)}^{\textsf{Back}}$ can be divided into $\TX_{1(2)}^{\textsf{Back1}}$ and $\TX_{1(2)}^{\textsf{Back2}}$ and the total profit from $\TX_{1(2)}^{\textsf{Back2}}$ and $\TX_{1(2)}^{\textsf{Front}}$ is 0.}
        \label{fig:exp1-sandwich}
    \end{subfigure}

    \vspace{0.5cm}
    
    \begin{subfigure}{0.95\textwidth}
        \centering
        \includegraphics[width=\linewidth]{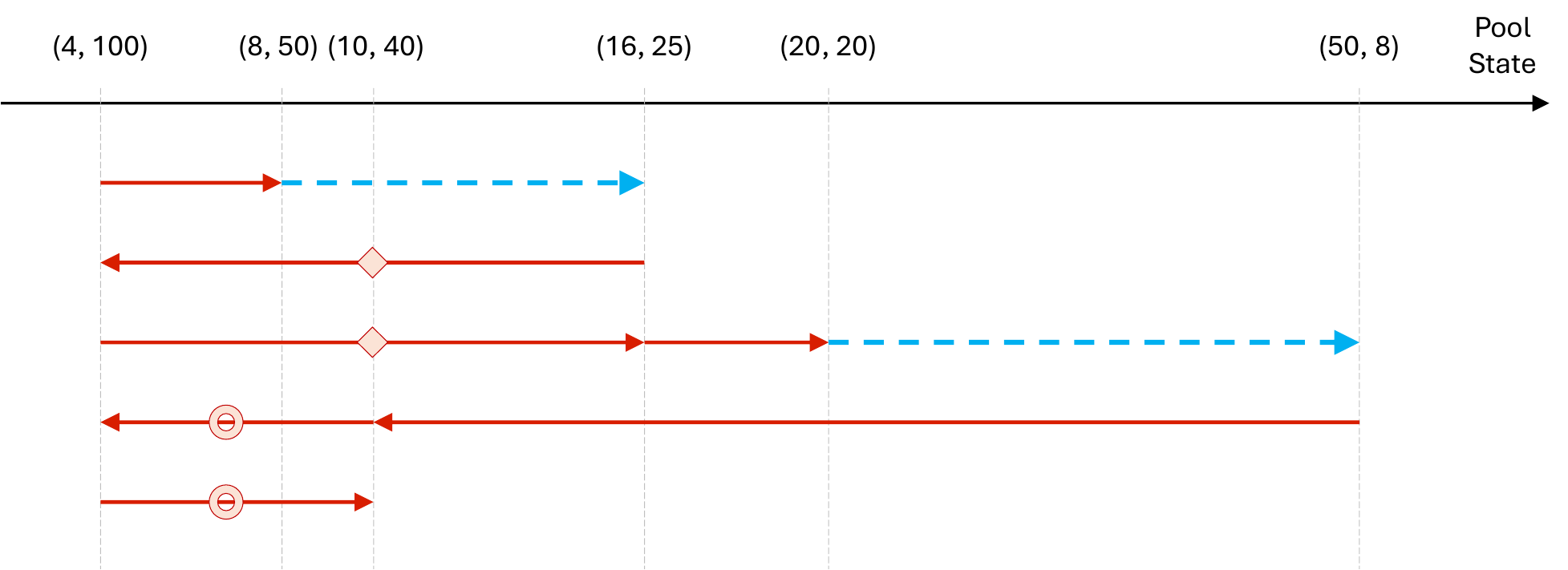}
        \caption{Splitting $\TX_2^{\textsf{Front}}$ and $\TX_2^{\textsf{Back}}$ into two smaller ones. The total profit from transactions marked with the same symbol (\scalebox{1.3}{$\diamond$} or ${\circledcirc}$) is 0.}
        \label{fig:exp1-split}
    \end{subfigure}
    
    \vspace{0.5cm}
    
    \begin{subfigure}{0.95\textwidth}
        \centering
        \includegraphics[width=\linewidth]{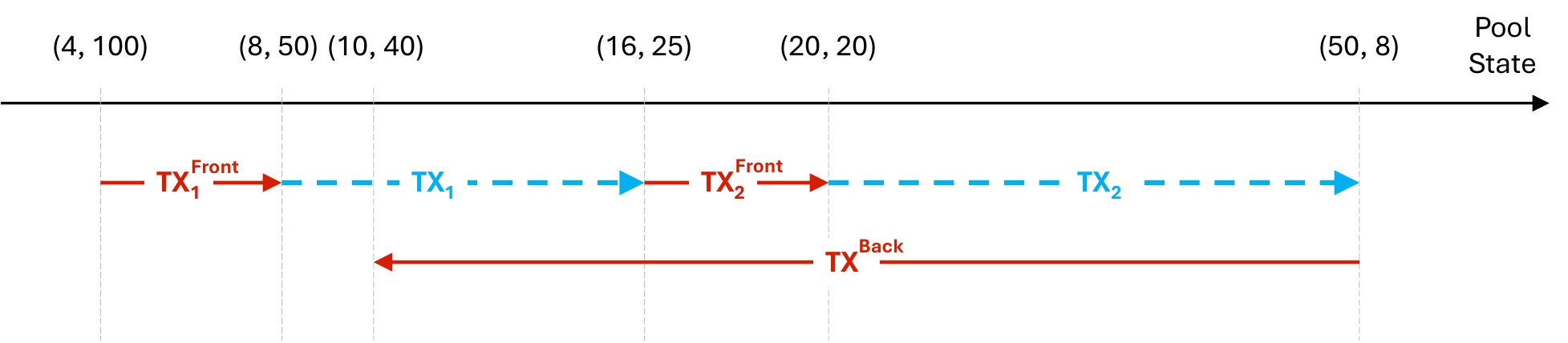}
        \caption{Removing the symbol-marked transactions from (b) results in a simplified bundle, which corresponds to the outcome of Algorithm~\ref{alg:optimalMEV} and captures the maximal MEV.}
        \label{fig:exp1-bundle}
    \end{subfigure}
    \caption{An illustration of the optimal MEV strategy for~\Cref{ex:optimalMEV}. Blue dotted lines represent pending transactions from users, while red solid lines represent MEV transactions from the arbitrageur. The transactions in each subgraph form an MEV bundle. Pool states $(x,y)$ in the figure are shown in ascending order based on the value of $x$. }
    \label{fig:alg1}
\end{figure}

\subsection{Optimal MEV = \textsf{LVR} + Each User's Loss}\label{sec:MEVnLVR}

The maximal MEV value, namely, the arbitrageur's total profit from Algorithm~\ref{alg:optimalMEV}, represents the total loss of liquidity providers and users. One key observation in \AMMName is to attribute MEV to the loss of LP and each user as follows:

\begin{equation}
    U = \underbrace{\phi(s_0)}_{\text{LPs' loss \textsf{LVR}}} + \underbrace{V(\TX_1)}_{\text{User $1$'s loss}} + \underbrace{V(\TX_2)}_{\text{User $2$'s loss}} + \cdots +
    \underbrace{V(\TX_m)}_{\text{User $m$'s loss}}.
    \label{eq:MEVnLVR}
\end{equation}

\vspace{0.8em}
\parhead{Justification of $\phi(s_0)$ = \textsf{LVR}}
LVR~\cite{milionis2022automated} quantifies the costs incurred by LPs due to \textit{rebalancing arbitrage}, a process that corrects the pool's stale price when the external market price changes. LVR is based on a \textit{rebalancing strategy}, which manages an off-chain portfolio of tokens $\calX$ and $\calY$ to mirror the CFMM pool's holdings of the risky asset $\calX$ at all times. Specifically, whenever an arbitrage transaction occurs in the pool in response to external price movements---causing the pool to either buy or sell the risky asset---the rebalancing strategy replicates the pool's sell/buy behavior in the off-chain portfolio at the external market using the CEX price. The profit from this rebalancing strategy, representing the difference in monetary value between the LPs' portfolio in the AMM pool and the off-chain rebalancing portfolio, is what defines LPs' \textsf{LVR}. More concretely, let $\Delta$ denote the number of risky assets \textit{sold} by the pool due to the rebalancing arbitrage. Let $p^{\textsf{CEX}}$ represent the external price and $p^{\textsf{AMM}}$ the execution price of the arbitrage transaction in the pool. LPs' \textsf{LVR} from this trade is then calculated as $\Delta \cdot (p^{\textsf{CEX}} - p^{\textsf{AMM}})$, which is always positive\footnote{When the pool sells risky assets, $\Delta > 0$ and $p^{\textsf{CEX}} > p^{\textsf{AMM}}$; when the pool buys risky assets, $\Delta < 0$ and $p^{\textsf{CEX}} < p^{\textsf{AMM}}$.}.

Note that in addition to \textit{rebalancing arbitrage}, the authors of LVR also mention the \textit{reversion arbitrage}, which occurs when user transactions cause the DEX price to deviate from the CEX price, creating arbitrage opportunities. At a high level, LVR focuses on rebalancing arbitrage, whereas the maximal MEV value above considers both rebalancing and reversion arbitrage. Despite this distinction, it is natural to apply the rebalancing strategy in our context to compute LPs' loss within the bundle of Algorithm~\ref{alg:optimalMEV}. Specifically, we can apply the rebalancing strategy described above to all transactions in the bundle and compute the cumulative profits, which corresponds to \textsf{LVR}. Suppose there are $k$ transactions in the bundle and the $j$-th trade alters the pool state from $(x_{j-1}, y_{j-1})$ to $(x_j, y_j)$, with the end state being $(x_k, y_k) = (x^*, y^*)$. Then, by definition, we have: 
\begin{small}
    \begin{equation*}
    \mathsf{LVR} = \sum_{j\in [1:k]} (x_{j-1} - x_j) \cdot (v^* - \frac{y_j - y_{j-1}}{x_{j-1} - x_j}) = \sum_{j\in [1:k]} \Big[(x_{j-1} - x_j) \cdot v^* - (y_j - y_{j-1}) \Big] = \phi(s_0).
\end{equation*}
\end{small}

In the paper that introduces LVR~\cite{milionis2022automated}, the authors state that ``Our model allows us to quantify the magnitude of profits of rebalancing arbitrageurs, but not reversion arbitrageurs.'' The above analysis provides a perspective to understand this statement. It does not mean the rebalancing strategy is not applicable to scenarios involving reversion arbitrage. Rather, when taking the strategy to replicate both noise trades and informed trades without distinction, the profit of each step can be positive or negative. Ultimately, the cumulative profits from noise trades and reversion arbitrage will sum to zero, leaving only the profits derived from rebalancing arbitrage.

\vspace{0.5em}
\parhead{User loss}

Recall that $V(\TX_j) = \max\{0, \Delta\phi_j\}$ represents the \textit{actual} value of transaction $\TX_j$ to the arbitrageur, where $\Delta\phi_j$ represents $\TX_j$'s maximal \textit{potential} arbitrage value. If $\Delta\phi_j \geq 0$, executing $\TX_j$ increases the arbitrageur's profit. In other words, $V(\TX_j)$ is $\TX_j$'s contribution to arbitrage profits, which, from another perspective, is the loss incurred by the owner of $\TX_j$.

\parhead{Summary}
The analysis above gives a clean interpretation for the maximal MEV value $\phi(s_0) + \sum_{j\in[m]} V(\TX_j)$: arbitrageur's profit is the sum of LPs' loss and the individual user losses. This interpretation helps elucidate the relationship between the maximal MEV, LVR, and user losses within a block and paves the way to develop an MEV-redistribution mechanism.

\section{Strawman Design}\label{sec:strawman}
Now, we are ready to proceed with the study of MEV-redistribution mechanisms. In the mechanism design problem, we have $n$ arbitrageurs (rather than a single arbitrageur like in \cref{sec:optimalMEV}); everyone has their own belief $v_i^*$ of the external price. The inputs of the mechanism contain (1) pool's initial state $s_0 = (x_0, y_0)$; (2) pending transactions $M$; and (3) arbitrageurs' report $\mathbf{q} = (q_1, \cdots, q_n)$.

Note that arbitrageurs may misreport beliefs and/or submit Sybil transactions. So, the report $q_i$ may differ from the true belief $v_i^*$ for any arbitrageur $i\in [n]$. For clarity, we use $\hat{s_i} = (\hat{x}_i, \hat{y}_i) \in F$ to denote the pool state corresponding to $i$'s report $q_i$, and use $s_i^* = (x_i^*, y_i^*) \in F$ to denote the pool state corresponding to $i$'s true belief $v_i^*$.

Given the characterization of the optimal MEV and its interpretation as the loss of LPs and users in previous sections, it is natural to have the following design.

\subsection{Mechanism Description}
The main idea of the strawman mechanism is to simulate each arbitrageur $i$'s maximal MEV based on their report $q_i$, sell all MEV opportunities to the one who obtains the highest MEV, and refund his/her payments to LPs and users following the interpretation in \Cref{eq:MEVnLVR}. In detail, the strawman mechanism contains the following three rules.

\begin{itemize}
    \item \textbf{Bundle generation rule:} For each arbitrageur $i\in [n]$, calculate the maximal MEV value that $i$ can obtain based on his/her report $q_i$:
    \begin{equation}
        \MEV_i \coloneqq \phi_i(s_0) + \sum_{\TX_j \in M} V_i(\TX_j),
        \label{eq:strawman_MEV}
    \end{equation}
    where $\phi_i(s_0) = \phi(s_0, q_i)$ represents the potential value of the initial state $s_0$ to arbitrageur $i$, and $V_i(\TX_j) = \max \big \{ 0, \Delta x_j \cdot q_i + \Delta y_j \big \}$ represents the actual value of $\TX_j$ to arbitrageur $i$. Note that these are measured by their report $q_i$ (rather than their true value $v_i^*$).
    
    Let's denote this winner by $w \in [n]$ who corresponds to the highest MEV value, i.e., $\MEV_w = \max_{i\in [n]} \MEV_i$ (breaking uniformly at random in the case of a tie). Then we construct a single bundle for the winner $w$ by Algorithm~\ref{alg:optimalMEV}, which takes the initial state $s_0$, all pending transactions $M$, and arbitrageur $w$'s report $q_w$ as inputs and outputs a bundle including a subset of pending transactions and some newly added MEV transactions from arbitrageur $w$.
    
    \item \textbf{Payment rule:} Arbitrageur $w$ pays the second highest MEV value in units of the num\'{e}raire. Namely, the winner $w$ pays $p_w = \max_{i\neq w} \MEV_i$; for any other arbitrageur $i\neq w$, $p_i = 0$.
    
    \item \textbf{Refund rule:} Each pending transaction $\TX_j \in M$ gets refunds
    $r_j = \frac{V_w(\TX_j)}{\MEV_w} \cdot p_w$. The remaining part, which is $\frac{\phi_w(s_0)}{\MEV_w}\cdot p_w$, is refunded to LPs in the form of swap fees.
\end{itemize}

\subsection{Analysis of Strawman Mechanism}
\begin{theorem}\label{thm:strawmantruthful}
    The strawman mechanism is truthful.
\end{theorem}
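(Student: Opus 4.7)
The strawman mechanism is a second-price-style auction in which arbitrageur $i$'s ``bid'' is the score $\MEV_i$ computed from its report $q_i$, and the winner pays the runner-up's score. The twist that prevents a direct reduction to Vickrey is that the generated bundle---and hence the winner's \emph{realized} profit---depends on the winner's report $q_w$ rather than on its true belief $v_w^*$. My plan is to reduce the proof to a standard second-price argument via a ``single-peakedness'' lemma: if arbitrageur $i$ wins under report $q_i$ when its true belief is $v_i^*$, then $i$'s true profit from the generated bundle is at most $\MEV_i$ computed at $v_i^*$, with equality when $q_i = v_i^*$.

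Step 1, the lemma, is the main work. For any belief $v$ and any bundle $B = T \cup A$ ending at a state $(x_{\mathrm{end}}, y_{\mathrm{end}})$, I start from the telescoping identity
\[
\sum_{\TX_j \in B}\bigl[(x_{j-1} - x_j)\, v + (y_{j-1} - y_j)\bigr] \;=\; (x_0 - x_{\mathrm{end}})\, v + (y_0 - y_{\mathrm{end}}).
\]
Splitting the sum into pending transactions (each executed at its limit state in Algorithm~\ref{alg:optimalMEV} and hence contributing $-(\Delta x_j \cdot v + \Delta y_j)$ exactly as in \eqref{eq:deltaPhi}) and MEV transactions, I rewrite the arbitrageur's realized profit $\sum_{\TX_j \in A_i}\bigl[(x_{j-1} - x_j)\, v_i^* + (y_{j-1} - y_j)\bigr]$ as the profit, at belief $v_i^*$, of one particular feasible bundle in the sense of \cref{sec:optimalMEV}. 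Theorem~\ref{theorem:publicOF} then upper-bounds this quantity by $\phi(s_0, v_i^*) + \sum_j V(\TX_j)$ evaluated at $v_i^*$, which is exactly $\MEV_i$ computed under truthful report; this is the lemma.

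Step 2 is the standard second-price argument. Fix $\mathbf{q}_{-i}$ and set $B^\star := \max_{j \neq i} \MEV_j$ so that $i$'s payment upon winning is $B^\star$, independent of $q_i$. Under truthful reporting $i$'s utility is $\max\{\MEV_i(v_i^*) - B^\star,\, 0\}$. A deviation $q_i \neq v_i^*$ either makes $i$ win, in which case the true utility is at most $\MEV_i(v_i^*) - B^\star$ by the lemma, or makes $i$ lose, in which case the utility is $0$; in both sub-cases the deviation utility is no larger than the truthful utility. Tie-breaking at $\MEV_i(v_i^*) = B^\star$ is immaterial because both outcomes yield utility $0$.

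The main obstacle I anticipate is justifying the lemma rigorously: Theorem~\ref{theorem:publicOF} bounds the maximum MEV over all feasible bundles at a given belief, whereas here I apply it to a bundle that was optimized for a \emph{different} belief. The observation that settles this is that the user-side feasibility constraints (the $\delta$-parameters of each $\TX_j$) do not depend on any arbitrageur's belief, so the bundle Algorithm~\ref{alg:optimalMEV} produces for $q_i$ is itself a valid feasible bundle when re-evaluated at belief $v_i^*$; the optimality guarantee of Algorithm~\ref{alg:optimalMEV} at $v_i^*$ then yields the desired inequality, and no more than minor bookkeeping about the telescoping decomposition is needed.
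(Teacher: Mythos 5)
Your proposal is correct and follows essentially the same route as the paper's proof: both reduce to the standard second-price case analysis after observing that the winner's realized profit at its true belief $v_i^*$ is bounded by $\MEV_i^*$ via \Cref{theorem:publicOF}, since the bundle generated under any report is a feasible strategy whose profit at $v_i^*$ cannot exceed the optimum at $v_i^*$. Your Step 1 merely spells out in more detail the feasibility observation that the paper invokes implicitly when it writes ``\Cref{theorem:publicOF} tells us that $\sum_{\TX_{\pi(j)} \in A_i} [\cdots] \leq \MEV_i^*$.''
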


We postpone the proof to \cref{subsec:proofofstrawmantruthful}, which is conceptually similar to the reasoning behind the truthfulness of second-price auctions. Note that there is a key distinction: In the second-price auction (and most classic auction settings), a bidder's true value is fixed and will not be affected by their bid. In contrast, in the strawman mechanism above, an arbitrageur's report (analogous to the bid in the auction) determines the MEV bundle and, consequently, their profit from it (analogous to the true value). In this way, the mechanism can be seen as an auction, where misreporting not only alters one's bid (and potentially the winner), but also his/her true value if the player wins. This makes the game strategically more complex and interesting than in the second-price auction.

However, this mechanism is not Sybil-proof. As shown in \Cref{ex:Sybil} below, an arbitrageur can steal refunds by creating Sybil transactions, which decreases users' utility.

\begin{example}\label{ex:Sybil}
    Consider a similar setting to~\Cref{ex:optimalMEV}, where the trading curve is $F(x,y) = xy = 400$, with the same initial state $s_0 = (4,100)$ and swap fee $f=0$. There are three pending transactions, identical to those described in~\Cref{ex:optimalMEV}: $\TX_1=(\calX \to \calY, \delta_{\calX}^{in}=8, \delta_{\calY}^{out}=25)$, $\TX_2=(\calX \to \calY, \delta_{\calX}^{in}=30, \delta_{\calY}^{out}=12)$, and $\TX_3=(\calY \to \calX, \delta_{\calY}^{in}=20, \delta_{\calX}^{out}=10)$. None of these are Sybil transactions. Unlike~\Cref{ex:optimalMEV} where there is only one arbitrageur with a price belief of $4$, this scenario involves two arbitrageurs, each holding a different belief about the external price of $\calX$, valued at $v_1^* = 4$ and $v_2^* = 1$, corresponding to the pool states $(10, 40)$ and $(20, 20)$, respectively. 
    
    Both players truthfully report their beliefs. By definitions, the following results are derived:

\begin{table}[!ht]
\centering
\begin{tabular}{r|c|c|c|c|c|c} \hline
Arbitrageur $i$ & $q_i$ & $\phi_i(s_0)$ & \makecell[c]{$V_i(\TX_1)$} & $V_i(\TX_2)$ & $V_i(\TX_3)$ & $\MEV_i$  \\ \hline
1 & 4 & 36 & 7 & 108 & 0 & 151  \\ \hline
2 &	1 &	64 & 0 & 18 & 10 & 92\\ \hline
\end{tabular}
\end{table}
\noindent Following the strawman mechanism, arbitrageur 1 is the winner, and the mechanism constructs a bundle as illustrated in~\Cref{fig:exp1-bundle}, where all MEV transactions are arbitrageur 1's. Additionally, arbitrageur 1 is required to pay $92$ as costs, which are refunded to users and liquidity providers. Based on \Cref{def:attacker_utility}, arbitrageur 1's utility is $151 - 92 = 59$. By \Cref{def:userutility}, the utilities for users of $\TX_1$, $\TX_2$, and $\TX_3$ are $25+\frac{7}{151}\cdot 92$, $12+\frac{108}{151}\cdot 92$, and $20+\frac{0}{151}\cdot 92$, respectively.

However, arbitrageur 1 can improve his/her utility by submitting a Sybil transaction $\TX_4=(\calX \to \calY, \delta_{\calX}^{in}=260, \delta_{\calY}^{out}=271)$. This leads to the following outcome:

\begin{table}[!ht]
\centering
\begin{tabular}{r|c|c|c|c|c|c|c } \hline
Arbitrageur $i$ & $q_i$ & $\phi_i(s_0)$ & $V_i(\TX_1)$ & $V_i(\TX_2)$ & $V_i(\TX_3)$ & \textbf{$V_i(\TX_4)$} & $\MEV_i$  \\ \hline
1 & 4 & 36 & 7 & 108 & 0 & \textbf{769} & \textbf{920}  \\ \hline
2 &	1 &	64 & 0 & 18 & 10 & \textbf{0} & 92\\ \hline
\end{tabular}
\end{table}

\noindent By doing so, arbitrageur 1 still wins, but his/her utility increases to $-769 + \frac{769}{920}\cdot 92 + 920 - 92 = 59 \mathbf{+ \frac{769}{920}\cdot 92}$, while the utilities for $\TX_1$ and $\TX_2$ decrease to $25+\frac{7}{\mathbf{920}}\cdot 92$ and $12+\frac{108}{\mathbf{920}}\cdot 92$, respectively. 
\end{example}

\section{Our Mechanism}\label{sec:ourmechanism}
This section introduces the MEV-redistribution mechanism in \AMMName. Recall that the strawman mechanism sells all MEV opportunities to a single arbitrageur. In contrast, the core idea of our mechanism is to allocate the MEV opportunities to multiple arbitrageurs simultaneously.

\subsection{Mechanism Description}

\begin{itemize}

    \item \textbf{Bundle generation rule:} Our mechanism constructs the bundle following Algorithm~\ref{alg:ourMechanism}, which consists of two parts. The first part (see line \ref{algline:MCM_enumerate} - \ref{algline:MCM_enumerateEnd}) goes over pending user transactions, and each iteration starts with computing the limit state $(x_j^\ell, y_j^\ell)$ of transaction $\TX_j \in M$ and the corresponding impact on the pool $(\Delta x_j, \Delta y_j)$ by \Cref{eq:impact}. Then, the mechanism computes the potential value of $\TX_j$ to each arbitrageur $i\in [n]$ and selects the winner $w_j$ of this iteration who values $\TX_j$ the most. If this highest value $\Delta \phi_{w_j} < 0$, the mechanism skips this transaction. Otherwise, it assigns $\TX_j$ to the winner $w_j$ by constructing a ``sandwich'' on behalf of arbitrageur $w_j$. Specifically, the mechanism inserts a frontrunning transaction from state $(x_0, y_0)$ to $(x_j^l, y_j^l)$ so that $\TX_j$ executes at its limit state, followed by a backrunning transaction from the post-execution state $(x_j^l + \Delta x_j, y_j^l + \Delta y_j)$ to the initial state $(x_0, y_0)$.

    After enumerating all pending transactions, the second part of Algorithm~\ref{alg:ourMechanism} (see line \ref{algline:MCM_arbitrageStart} - \ref{algline:MCM_arbitrageEnd}) sells the rebalancing arbitrage opportunity by computing the potential value of the initial state to each arbitrageur $i\in [n]$ and assigning it to the arbitrageur who values it the most. Specifically, the mechanism adds an arbitrage transaction on behalf of the winner to reach the no-arbitrage state corresponding to his/her report.

    \item \textbf{Payment rule:} Through the bundle generation, there are $|M|+1$ winners (note that an arbitrageur may win multiple times), corresponding to $|M|$ pending transactions and the initial state. The payment rule requires each winner to pay the second highest value in units of the num\'{e}raire. Specifically, for each transaction $\TX_j$, the winner $w_j = \arg \max_{i\in [n]} V_i(\TX_j)$ pays $\max_{i\neq w_j} V_i(\TX_j)$, which is non-negative by definition, and all others pay 0. For the initial state, the winner $w' = \arg \max_{i\in [n]} \phi_i(s_0)$ pays $\max_{i\neq w'} \phi_i(s_0)$, while others pay 0.

    \item \textbf{Refund rule:} The above payment is refunded to users and liquidity providers, respectively. Specifically, the owner of transaction $\TX_j$ gets $\max_{i\neq w_j} V_i(\TX_j)$ and LPs get $\max_{i\neq w'} \phi_i(s_0)$.

\end{itemize}

We emphasize that all the quantities $\phi_i(\cdot), \Delta\phi_i, V_i(\cdot)$ are based on arbitrageur $i$'s report $q_i$. Payments and refunds are done by the mechanism.

Here, we reuse the basic setting from~\Cref{ex:Sybil} to showcase how the above mechanism operates.

\begin{example}\label{ex:ourmechanism}
    Recall that the setting is as follows: The trading curve is defined by $F(x,y) = xy = 400$, with the initial state $s_0 = (4,100)$ and swap fee $f=0$. There are three pending transactions: $\TX_1=(\calX \to \calY, \delta_{\calX}^{in}=8, \delta_{\calY}^{out}=25)$, $\TX_2=(\calX \to \calY, \delta_{\calX}^{in}=30, \delta_{\calY}^{out}=12)$, and $\TX_3=(\calY \to \calX, \delta_{\calY}^{in}=20, \delta_{\calX}^{out}=10)$. Two arbitrageurs hold a different belief about the external price of $\calX$, with values of $v_1^* = 4$ and $v_2^* = 1$, corresponding to the pool states $(10, 40)$ and $(20, 20)$, respectively. Both players report their beliefs truthfully, leading to the following outcomes:
    \begin{table}[!ht]
    \centering
    \begin{tabular}{r|c|c|c|c|c} \hline
    Arbitrageur $i$ & $q_i$ & $\phi_i(s_0)$ & \makecell[c]{$V_i(\TX_1)$} & $V_i(\TX_2)$ & $V_i(\TX_3)$  \\ \hline
    1 & 4 & 36 & \textbf{7} & \textbf{108} & 0  \\ \hline
    2 &	1 &	\textbf{64} & 0 & 18 & \textbf{10} \\ \hline
    \end{tabular}
    \end{table}
    
    According to our mechanism, arbitrageur 1 wins the MEV opportunity from $\TX_1$ and $\TX_2$, while arbitrageur 2 wins the MEV opportunity from $\TX_3$ and the initial state $s_0$. To be specific, the bundle generation rule forms a bundle as shown in~\Cref{fig:ourmechanism}. Additionally, arbitrageur 1 pays $18$, which is finally refunded to the owner of $\TX_2$; arbitrageur 2 pays $36$, which is refunded to LPs.
\end{example}

\begin{figure}[!t]
    \centering
    \includegraphics[width=0.95\linewidth]{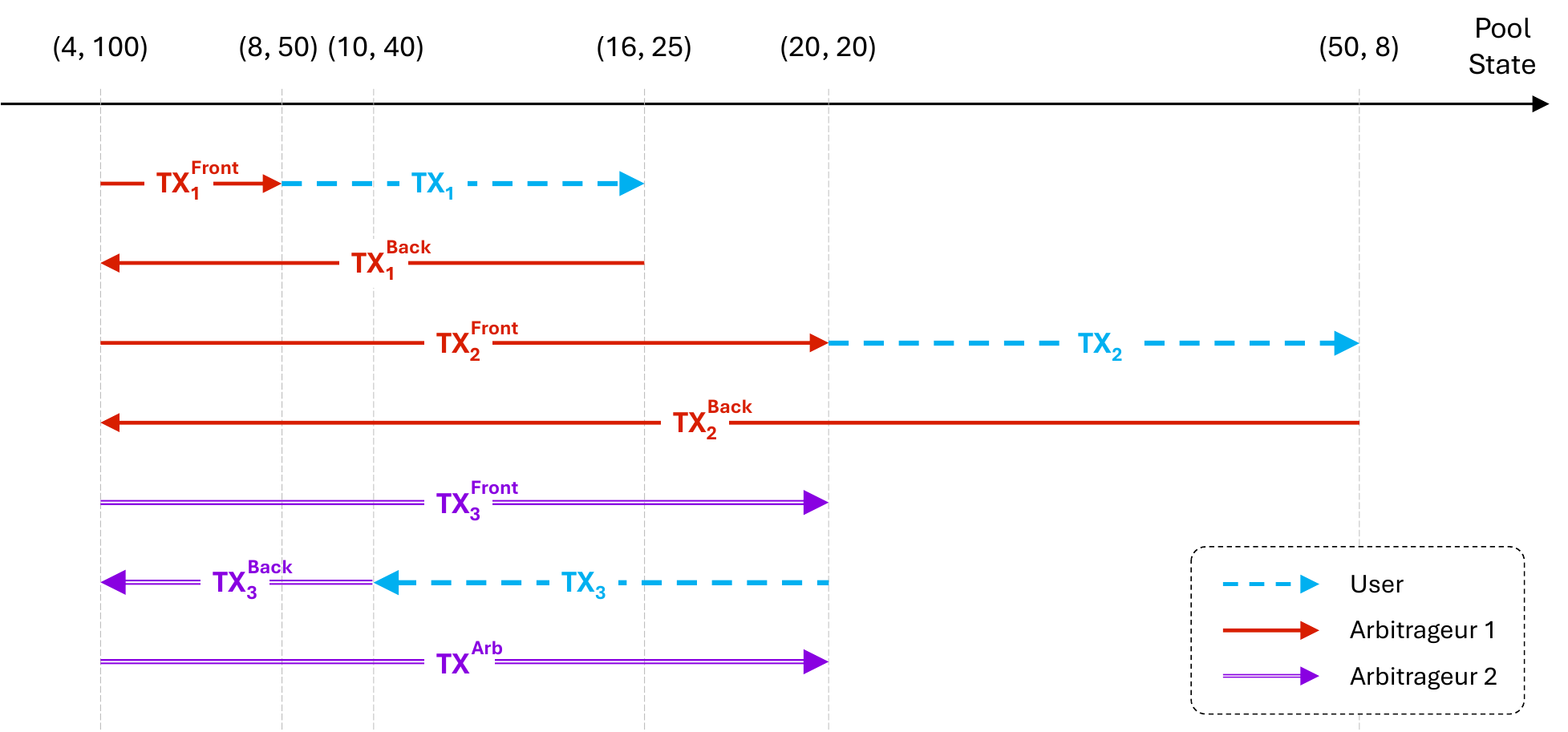}
    \caption{The bundle constructed by our mechanism for~\Cref{ex:ourmechanism}.}
    \label{fig:ourmechanism}
\end{figure}

Intuitively, each pending transaction $\TX_j \in M$ \textit{and} the initial state $s_0$ can create some MEV. 
The proposed mechanism auctions off these MEV opportunities separately, based on the value of $\TX_j$ or $s_0$ to each arbitrageur $i\in [n]$, namely, $V_i(\TX_j)$ or $\phi_i(s_0)$. The key observations are two-fold.

First, $V_i(\TX_j)$ and $\phi_i(s_0)$ solely depend on the objective information of the transaction/state (and the arbitrageur $i$'s report $q_i$), independent of other pending transactions in the pool (and other players' reports). This makes it possible to switch from a single-item auction with very complicated valuation functions, as seen in the strawman mechanism, to $|M|+1$ separate auctions.

Second, winners of these separate auctions can independently capture their MEV value, i.e., $V_{w_j}(\TX_j)$ or $\phi_{w'}(s_0)$. Enabling one arbitrageur to extract their MEV value is relatively straightforward; however, due to the ``ripple effect,''\footnote{In the CFMM context, transactions are executed sequentially. The execution of a transaction in the bundle impacts not only its own outcome (and its owner's utility), but also alters the state of the pool, which then affects the outcomes of subsequent transactions and the utility of other participants. This creates a ``ripple effect,'' where a change in one transaction can cascade through the pool and impact all others, which complicates the task of managing multiple arbitrageurs' MEV within a single bundle.} allowing all auction winners to obtain their value within a single bundle is more complex. Our mechanism overcomes this by forming the final bundle as $|M|+1$ independent sub-bundles. Each pending transaction in $M$ forms its own sub-bundle, which is a ``sandwich'' if the transaction's potential value is non-negative for at least one arbitrageur (otherwise, the sub-bundle is empty). The final sub-bundle corresponds to the initial state and consists of a single rebalancing arbitrage transaction. By having each ``sandwich'' return to the initial state, these sub-bundles are independent of one another, meaning their construction, the winner's revenue, and the associated payments do not interfere with or rely on other sub-bundles. %

\begin{algorithm}[htbp]
	\caption{Bundle Generation Rule of MEV-Redistribution Mechanism in \AMMName}
	\label{alg:ourMechanism}
	\KwIn{An initial state $s_0=(x_0,y_0)$, a set of pending transactions $M$, and arbitrageurs' reports $\mathbf{q} = (q_1, \cdots, q_n)$, each corresponding to a state $\hat{s_i} = (\hat{x}_i, \hat{y}_i)$.
	}
	\KwOut{A bundle to be executed by CFMM}
	
	\BlankLine
	
	\BlankLine

    \For{each $j \in [1:|M|]$}{ \label{algline:MCM_enumerate}

    \If{$\TX_j = (\calX \to \calY, \delta_{\calX}^{in}, \delta_{\calY}^{out})$}{\label{line:MCM_preprocessBegin}
        $(x_j^\ell, y_j^\ell)$ is the limit state satisfying $y_j^\ell - F_{y}(x_j^\ell + \delta_{\calX}^{in}) = \delta_{\calY}^{out}$.
        
        Let $\Delta x \gets \delta_{\calX}^{in}$, $\Delta y \gets -\delta_{\calY}^{out}$.

    }
    \ElseIf{$\TX_j=(\calY \to \calX, \delta_{\calY}^{in}, \delta_{\calX}^{out})$}{
        $(x_j^\ell, y_j^\ell)$ is the limit state satisfying $x_j^\ell - F_x(y_j^\ell + \delta_{\calY}^{in}) = \delta_{\calX}^{out}$.
    
        Let $\Delta x \gets -\delta_{\calX}^{out}$, $\Delta y \gets \delta_{\calY}^{in}$. 
    } 

    \For{each $i \in [n]$}{
        Let $\Delta \phi_i \gets \Delta x \cdot q_i + \Delta y$. 
    }
    Let $\Delta\phi_{w_j} \gets \max_{i\in[n]} \Delta \phi_i.$  {\hfill\tcp{Arbitrageur $w_j$ values $\TX_j$ the most}}
    
    \If{$\Delta\phi_{w_j} \geq 0$}{   \label{line:MCM_insertBegin}
        \If{$x_0 < x_j^\ell$}{
            Insert a transaction on behalf of arbitrageur $w_j$ 
            $\TX =\big( \calX \to \calY, \delta_{\calX}^{in}=x_j^\ell-x_0, \delta_{\calY}^{out}=y_0-y_j^\ell \big)$.
        }
        \ElseIf{$x_0 > x_j^\ell$}{
            Insert a transaction on behalf of arbitrageur $w_j$
            $\TX =\big(\calY \to \calX, \delta_{\calY}^{in}=y_j^\ell-y_0, \delta_{\calX}^{out}=x_0 - x_j^\ell \big)$.
        }
    
        Place the user transaction $\TX_j$. 

        Let the current state $(x', y') \gets (x_j^\ell + \Delta x, y_j^\ell + \Delta y)$.

        \If{$x' < x_0$}{
            Insert a transaction on behalf of arbitrageur $w_j$ 
            $\TX =\big( \calX \to \calY, \delta_{\calX}^{in}=x_0-x', \delta_{\calY}^{out}=y'-y_0 \big)$.
        }
        \ElseIf{$x' > x_0$}{
            Insert a transaction on behalf of arbitrageur $w_j$
            $\TX =\big(\calY \to \calX, \delta_{\calY}^{in}=y_0 - y', \delta_{\calX}^{out}=x' - x_0 \big)$.
        }
    }   \label{line:MCM_insertEnd}
    
    }\label{algline:MCM_enumerateEnd}
    \For{each $i\in [n]$}{\label{algline:MCM_arbitrageStart}
        Let $\phi_i \gets (x_0 \cdot q_i + y_0) - (\hat{x}_i \cdot q_i + \hat{y}_i)$.
    }

    Let $\phi_{w'} \gets \max_{i\in[n]}  \phi_i.$  {\hfill\tcp{Arbitrageur $w'$ values \textsf{LVR} the most}}
    
    \If{$x_0 < \hat{x}_{w'}$}{
        Add a transaction on behalf of arbitrageur $w'$ 
        $\TX = \big( \calX \to \calY, \delta_{\calX}^{in}=\hat{x}_{w'}-x_0, \delta_{\calY}^{out}=y_0-\hat{y}_{w'} \big)$.
    }
    \ElseIf{$x_0 > \hat{x}_{w'}$}{
        Add a transaction on behalf of arbitrageur $w'$ 
        $\TX =\big(\calY \to \calX, \delta_{\calY}^{in}=\hat{y}_{w'} - y_0,  \delta_{\calX}^{out}=x_0 - \hat{x}_{w'} \big)$.
    }\label{algline:MCM_arbitrageEnd}
    
\end{algorithm}

\subsection{Analysis of Our Mechanism}

\begin{theorem}\label{theorem:truthful}
    Our mechanism is truthful.
\end{theorem}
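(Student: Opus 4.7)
The plan is to exploit a structural decomposition of the bundle produced by Algorithm~\ref{alg:ourMechanism}. By construction, the algorithm produces $|M|+1$ sub-bundles: one two-sided sandwich (possibly empty) for each pending transaction $\TX_j\in M$, plus one rebalancing transaction at the end. Every non-empty sandwich begins at the initial state $(x_0,y_0)$ and, by design, returns to $(x_0,y_0)$ immediately after the backrun. Consequently, all sub-bundles act on the same starting state and are mutually independent: what happens inside one has no effect on any other. I would use this separation to express arbitrageur $i$'s utility (with $S_i = \emptyset$) as a sum
\[
u(\emptyset,q_i;\emptyset,\mathbf{q}_{-i}) = \sum_{\TX_j\in M} u_j(q_i;\mathbf{q}_{-i}) + u_{\mathrm{reb}}(q_i;\mathbf{q}_{-i}),
\]
in which $u_j$ and $u_{\mathrm{reb}}$ depend only on the corresponding sub-auction. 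Since $q_i$ is a single scalar shared across all sub-auctions, it then suffices to prove that $q_i = v_i^*$ is a weakly dominant choice in each sub-auction separately.

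For the sub-auction associated with $\TX_j$, I would recognize it as a second-price auction in disguise: bidder $k$'s effective bid is $V_k(\TX_j)=\max\{0,\Delta x_j\, q_k + \Delta y_j\}$, the highest bidder $w_j$ wins, the sandwich is actually inserted only when $\max_k \Delta\phi_k \ge 0$, and the winner pays $\max_{k\ne w_j}V_k(\TX_j)$. The crucial observation, obtained by tracking token flows through the frontrun--trade--backrun sequence, is that \emph{the winner's realized revenue from the inserted sandwich equals $\Delta x_j v_i^* + \Delta y_j$ independently of the reported $q_i$}: the sandwich's trajectory is pinned down by $\TX_j$'s limit state and the requirement to return to $s_0$, so $i$'s net token change is always exactly $(\Delta x_j,\Delta y_j)$. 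With the winner's true value of the item thus equal to $\Delta x_j v_i^* + \Delta y_j$, a short Vickrey-style case analysis on the sign of $(\Delta x_j v_i^* + \Delta y_j) - \max_{k\ne i} V_k(\TX_j)$ yields weak dominance of $q_i = v_i^*$. The mild subtlety is the insertion threshold $\Delta\phi_{w_j}\ge 0$: any misreport that forces a sandwich when $i$'s true value is negative strictly hurts $i$, and any misreport that suppresses insertion when $i$'s true value is positive forfeits positive surplus, so truthfulness survives the threshold.

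The rebalancing sub-auction is slightly more delicate because the endpoint $\hat{s}_{w'}$ of the inserted rebalancing transaction is determined by the winner's reported $q_{w'}$, not by their true belief. Arbitrageur $i$ bids $\phi_i(s_0)=(x_0-\hat{x}_i)q_i+(y_0-\hat{y}_i)$, and if they win, actually receives $(x_0-\hat{x}_i)v_i^*+(y_0-\hat{y}_i)$ in $v_i^*$-valued tokens. The key lemma I would invoke is the inequality
\[
(x_0-\hat{x}_i)v_i^* + (y_0-\hat{y}_i) \le \phi(s_0,v_i^*), \quad \text{with equality iff } q_i=v_i^*,
\]
which follows from the characterization of $s_i^*=(x_i^*,y_i^*)$ as the unique maximizer of $(x_0-x)v_i^*+(y_0-y)$ over $(x,y)\in F$, itself a direct consequence of the monotone marginal-rate assumption in Section~\ref{subsec:CFMM}. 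With this bound, a routine case split on whether $\phi(s_0,v_i^*)$ exceeds $\max_{k\ne i}\phi_k(s_0)$ closes the argument: truthful reporting simultaneously maximizes the realized revenue by aligning the arbitrage endpoint with $s_i^*$ \emph{and} sets the bid equal to this maximal revenue, so overbidding to win costs more than the trade is worth, and underbidding either loses the auction or executes a strictly worse arbitrage.

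I expect the main obstacle to be cleanly formalizing the sub-bundle independence and verifying that the winner's realized sandwich revenue really equals $\Delta x_j v_i^* + \Delta y_j$; this requires careful bookkeeping through the two inserted transactions of each sandwich and across the main loop of Algorithm~\ref{alg:ourMechanism}. The second-price case analyses themselves are essentially mechanical, but the threshold condition for sandwich insertion and random tie-breaking demand some care, both of which can be dispatched using the conventions already adopted earlier in the paper.
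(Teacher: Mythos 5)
Your proposal is correct and follows essentially the same route as the paper's proof: decompose the bundle into the $|M|+1$ independent sub-bundles anchored at $s_0$, observe that each transaction sub-auction is a second-price auction in which the winner's realized sandwich revenue $\Delta x_j v_i^* + \Delta y_j$ is independent of the report, and handle the rebalancing sub-auction via the lemma that $(x_0-\hat{x}_i)v_i^* + (y_0-\hat{y}_i)$ is maximized at $\hat{s}_i = s_i^*$, which follows from the decreasing marginal exchange rate. The paper organizes the final step as a three-way case split on who wins under truthful versus misreported bids, but the content is identical to your Vickrey-style analysis.
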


\begin{theorem}\label{thm:sybilproof}
   Our mechanism is Sybil-proof.
\end{theorem}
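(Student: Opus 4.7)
The plan is to prove a stronger claim than Sybil-proofness requires, namely that a user's utility $u(\TX_j \mid S, \mathbf{q})$ is \emph{invariant} under arbitrary Sybil insertions $S_i$, hence equality (and thus the required inequality) holds. The strategy proceeds in three steps: establish structural independence of sub-bundles, check that $\TX_j$'s sub-bundle depends only on $(\TX_j, s_0, \mathbf{q})$, and compute the user's utility explicitly.

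First, I will establish the structural key: the bundle produced by Algorithm~\ref{alg:ourMechanism} decomposes into $|M|+1$ mutually independent sub-bundles, one per pending transaction plus a final rebalancing trade. For each pending $\TX_k$ with $\Delta\phi_{w_k}\geq 0$, the algorithm appends a ``sandwich'' that drives the pool from $s_0$ to the limit state $(x_k^\ell, y_k^\ell)$, executes $\TX_k$, and returns the pool from $(x_k^\ell + \Delta x_k, y_k^\ell + \Delta y_k)$ back to $s_0$; transactions with $\Delta\phi_{w_k} < 0$ are skipped, also preserving the state. By induction on the iteration index, the pool state at the beginning of each sub-bundle is exactly $s_0$, so the sub-bundles never interact, and the final rebalancing trade likewise starts from $s_0$.

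Second, I will verify that $\TX_j$'s sub-bundle and the associated refund depend only on $(\TX_j, s_0, \mathbf{q})$. The limit state $(x_j^\ell, y_j^\ell)$ and impact $(\Delta x_j, \Delta y_j)$ are determined by $\TX_j$ and the CFMM curve $F$ alone. Each valuation $V_i(\TX_j) = \max\{0, \Delta x_j \cdot q_i + \Delta y_j\}$ is a function of $\TX_j$ and $q_i$ only. Consequently the winner $w_j$, the decision to include or skip $\TX_j$, the trajectory within $\TX_j$'s sub-bundle, and the refund $r_j = \max_{i \neq w_j} V_i(\TX_j)$ are all invariant under insertion of any Sybil set $S_i$ into $M$.

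Finally, I will compute $u(\TX_j\mid S, \mathbf{q})$ in both branches. For $\TX_j = (\calX \to \calY, \delta_{\calX}^{in}, \delta_{\calY}^{out})$: when $\TX_j$ is included, its execution at the limit state consumes the full $\delta_{\calX}^{in}$ and yields the minimum $\delta_{\calY}^{out}$, so $(\calX_j, \calY_j) = (0,\, \delta_{\calY}^{out}+r_j)$; when $\TX_j$ is skipped, $(\calX_j, \calY_j) = (\delta_{\calX}^{in},\, r_j)$; either way the utility equals $\delta_{\calY}^{out}+r_j$, and the $\calY \to \calX$ case is symmetric. Since $r_j$ is invariant to $S_i$, so is the utility. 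I expect the main obstacle to be the first step: carefully formalizing the ``returns to $s_0$'' invariant that makes the sub-bundles truly independent, since the ``ripple effect'' discussed in the paper is precisely what Algorithm~\ref{alg:ourMechanism} is designed to neutralize. Once that invariant is pinned down, the rest reduces to checking that no quantity relevant to $\TX_j$'s outcome reads from $M \setminus \{\TX_j\}$.
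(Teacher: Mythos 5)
Your proposal is correct and follows essentially the same route as the paper's proof: the paper likewise argues that whether $\TX_j$ is included, its deterministic execution at its limit state, and its refund $\max_{k\neq w_j} V_k(\TX_j)$ all depend only on $(\TX_j, \mathbf{q})$ and hence are unaffected by any Sybil set $S_i$. Your version is somewhat more explicit (the induction showing each sub-bundle starts and ends at $s_0$, and the case-by-case utility computation), but the underlying argument is identical.
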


\Cref{thm:sybilproof} tells us that, given the report (which may differ from the true belief), Sybil transactions do not reduce any user's utility. This implies that a user's utility depends on the arbitrageurs' reports while being independent of their decisions to use Sybil transactions or not. Additionally, \Cref{theorem:truthful} demonstrates that for any arbitrageur $i \in [n]$, if $i$ submits no Sybil transaction (i.e., $S_i = \emptyset$), truthful reporting is a dominant strategy. These two properties make our mechanism seem very close to the ideal mechanism. However, there is a subtle and tricky issue: an arbitrageur's utility is jointly determined by their report and their Sybil behavior. Submitting Sybil transactions to maximize profits may introduce the incentive for arbitrageurs to misreport their beliefs, which in turn may affect users' utilities. 

Our ultimate goal is to incentivize arbitrageurs to truthfully report their beliefs, under which users get their deserved utilities in our mechanism. 
This can be achieved by showing that there is a strategy profile $(v^*, S)$ that is a Nash equilibrium. 
In the following, we show that this is indeed true when arbitrageurs have some Sybil budget $({b}_i^{\calX}, {b}_i^{\calY})$ and each arbitrageur's belief is drawn from some known distribution $\calD_i$. We use $\calD$ to denote the collection of all $\calD_i$'s.

In particular, we show the following theorem:

\begin{theorem}\label{thm:NE}
    There is a Sybil strategy $S_i(v_i^*,{b}_i^{\calX}, {b}_i^{\calY}, \calD)$ such that under our mechanism, using $(v_i^*, S_i)$ is a Nash equilibrium for every arbitrageur $i\in [n]$. 
\end{theorem}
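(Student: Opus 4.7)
The plan is to construct $S_i$ as the solution to a constrained optimization problem and then verify via a per-sub-auction analysis that $(v_i^*, S_i)$ is a best response against $(v_{-i}^*, S_{-i})$. I will rely heavily on the sub-auction decomposition established in \Cref{sec:ourmechanism}: the bundle splits into $|M|+1$ independent sub-bundles---one per pending transaction and one for the rebalancing arbitrage---so $i$'s expected utility decomposes additively as $\mathbb{E}[u_i(q_i, S_i)] = \sum_j \mathbb{E}[\pi_j]$. For each sub-auction on a transaction $\TX_j \in R \cup S_{-i}$ or on the arbitrage opportunity, $i$ is not the owner; the payoff $(\Delta \phi_{i,j}^* - V_{(2)}^j)\mathbf{1}[V_i^j > M_j]$ fits the standard second-price form with valuation $\Delta \phi_{i,j}^* = \Delta x_j v_i^* + \Delta y_j$, so truthful reporting $q_i = v_i^*$ is weakly dominant by the argument of \Cref{theorem:truthful}, and in particular $\partial \mathbb{E}[\pi_j]/\partial q_i|_{q_i=v_i^*} = 0$.

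The subtle step concerns sub-auctions on $\TX_j \in S_i$, where $i$ is simultaneously a bidder and the refund recipient. Here $\pi_j = 0$ if $i$ wins and $V_{(2)}^j - \Delta \phi_{i,j}^*$ otherwise, which makes the incentive non-standard; a direct computation gives $\partial \mathbb{E}[\pi_j]/\partial q_i|_{q_i = v_i^*} = \Delta x_j \cdot \Pr(m_j < \Delta \phi_{i,j}^* < M_j)$, where $M_j$ and $m_j$ denote the first- and second-order statistics of $\{V_\ell^j\}_{\ell \neq i}$. I then define
\[
S_i \in \arg\max_{S \text{ budget-feasible}} \mathbb{E}\bigl[u_i(S, v_i^*; S_{-i}, v_{-i}^*)\bigr] \quad \text{s.t.} \quad \sum_{\TX_j \in S} \Delta x_j \cdot \Pr\bigl(m_j < \Delta \phi_{i,j}^* < M_j\bigr) = 0.
\]
The feasible region is non-empty (it contains $S = \emptyset$) and compact, so the maximizer exists; the balance constraint is satisfied non-trivially by mixing $\calX \to \calY$ Sybils (positive $\Delta x_j$) with $\calY \to \calX$ Sybils (negative $\Delta x_j$).

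By construction, the total derivative $\partial \mathbb{E}[u_i]/\partial q_i|_{q_i=v_i^*}$ vanishes, and each $\pi_j$ is piecewise-linear and concave in $q_i$ (inherited from the second-price structure), so concavity promotes the stationary point to a global optimum in the $q_i$-direction, while optimality in the $S_i$-direction is baked into the construction. Hence $(v_i^*, S_i)$ is mutually best-responding, and the symmetric argument across arbitrageurs yields the claimed Nash equilibrium. The main obstacle is justifying the first-order balance constraint: removing it could a priori unlock strictly higher utility by letting $i$ bid above truthful on own Sybils, so the proof must show via an envelope/Lagrangian argument that the multiplier on the balance constraint equals the marginal value of deviating $q_i$, so that KKT stationarity certifies joint optimality over $(q_i, S_i)$ at the constrained maximum. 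Handling the non-smoothness introduced by the $\max$ in $V_i^j$ (and the corresponding subgradient analysis for Sybils with $\Delta \phi_{i,j}^* \le 0$) is the technical crux of the argument.
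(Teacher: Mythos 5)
Your overall frame---decompose the bundle into independent sub-auctions, invoke \Cref{theorem:truthful} for the sub-auctions where $i$ is not the Sybil owner, and isolate the own-Sybil sub-auctions as the real difficulty---matches the paper's. But the way you resolve that difficulty does not go through, and you flag the unresolved step yourself. Two concrete problems. First, the concavity claim is false: on an own-Sybil sub-auction the payoff as a function of $q_i$ is flat while $V_i(\TX_j)=\max\{0,\Delta x_j q_i+\Delta y_j\}$ sits below the others' second-highest value, increases while $V_i$ is the second-highest, and then drops discontinuously to $0$ once $i$ outbids everyone and sandwiches its own Sybil (the winner's revenue, payment, and refund cancel exactly). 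A function that increases and then jumps down is not concave, and its expectation over $\calD_{-i}$ need not be either, so a vanishing first derivative at $q_i=v_i^*$ does not promote to a global optimum. Second, even granting stationarity, your construction only certifies optimality of $S_i$ within the balance-constrained feasible set and of $q_i$ locally; Nash equilibrium requires joint optimality over all $(q_i,S_i')$ with no constraint, and the ``envelope/Lagrangian argument'' you defer to is precisely the missing content---KKT stationarity does not certify global optimality of a non-concave problem. The constraint can also be badly distorting: if $b_i^{\calY}=0$ there are no $\calY\to\calX$ Sybils available to cancel the positive derivative terms, so the constraint forces degenerate $\calX\to\calY$ Sybils.

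The paper avoids all of this with a structural observation you have missed: if a Sybil $\TX=(\calX\to\calY,\delta_{\calX}^{in}=b_i^{\calX},\delta_{\calY}^{out})$ has limit price $t=\delta_{\calY}^{out}/b_i^{\calX}$, then for every report $q_i\le t$ one has $V_i(\TX)=\max\{0,b_i^{\calX}q_i-\delta_{\calY}^{out}\}=0$, so the refund to that Sybil---and hence the entire own-Sybil payoff $H$---is literally \emph{constant} in $q_i$ on that whole range (the paper's Claim~1), not merely stationary at $v_i^*$. The equilibrium Sybil strategy is then just one maximal-budget Sybil per direction with the limit price $t^*$ chosen to maximize the expected own-Sybil profit under truthful reporting; any deviation $(q_i,\delta_{\calY}^{out})$ is dominated by first raising both parameters to $\alpha=\max(q_i,\delta_{\calY}^{out}/b_i^{\calX})$ (monotonicity of $H$), then lowering the report back to $v_i^*$ at no cost (constancy), then replacing $b_i^{\calX}\alpha$ by $t^*$ (definition of the argmax). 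This decouples the report from the Sybil design and yields global, not first-order, optimality. To repair your write-up, replace the balance constraint and KKT machinery with this constancy argument.
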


We postpone the proofs of the above theorems  to~\cref{subsec:ourtruthfulprf},~\ref{subsec:oursybilproof}, and~\ref{subsec:NEproof}.

\parhead{Remark.} The bundle generation rule in our mechanism creates the maximal MEV over all possible bundles, which is $\max_{i\in [n]} \phi_i(s_0) + \sum_{j\in [|M|]} \max_{i\in [n]} V_i(\TX_j)$.

\section{Evaluation}
\label{sec:evaluation}

In this section, we evaluate the efficiency of \AMMName---specifically, whether it improves execution results (i.e., price) for users and reduces loss for liquidity providers. For users, we compare \AMMName to two notable application-level solutions, \uniswapx and \cowswap, to evaluate if orders filled through \AMMName achieve better execution prices than in \uniswapx or \cowswap. For LPs, we compare the LVR of Uniswap v2 LPs with and without the MEV-redistribution mechanism.

\subsection{Evaluation Methodology}
\parhead{Assumptions} We assume that arbitrageurs have sufficient capital to execute trades between CEXs and DEXs, making it potentially profitable for them to use their own assets to fulfill users' orders. 
Additionally, the liquidity on CEXs is ample enough to ensure that these arbitrage activities do not materially affect the off-chain price. We justify these assumptions in two ways: First, many arbitrageurs engage in CEX-DEX arbitrage in practice~\cite{heimbach2024non}; second, the liquidity on CEXs is generally substantial, as indicated by their high trading volumes (e.g., the 24-hour trading volume of ETH on Binance is \$8 billion as of October 5th, 2024~\cite{binance2024ethereumprice}).
Besides, for simplicity, we assume that the gas usage for a transaction of an order in \uniswapx or \cowswap is the same when the order is filled by \AMMName. We also assume that the priority fee paid by arbitrageurs in \AMMName is up to 1 Gwei. 
We note that this is a reasonable assumption because the arbitrageurs in \AMMName compete for the opportunity to fulfill orders at the CFMM side instead of participating in auctions at the block producer side.

\parhead{Distribution of arbitrageurs' beliefs}
To simulate arbitrageurs' beliefs in external
prices, we obtain historical token price data from Binance~\cite{binance2024data} over a one-year period, from September 1st, 2023, to August 31st, 2024. Since not all tokens are traded on CEXs, our evaluation focuses on the eight popular tokens: BTC, ETH, USDC, USDT, DAI, LINK, MATIC, and PEPE. Specifically, we obtain candlestick data from Binance with one-second intervals for these tokens.
The arbitrageurs' beliefs about a token's price in a block are simulated by a distribution between the highest and lowest prices of the token during the time slot of the block (the specific type of distribution and the number of arbitrageurs are discussed in each experiment).

\parhead{Historical trades}
We collect all orders on \uniswapx and \cowswap within the same time range. Specifically, we gather orders from the DEX Analytics Platform~\cite{dextrades2024dataset} and cross-check them against Dune records~\cite{dune2024}. Our evaluation focuses on orders where either the buy or sell side is a stablecoin (USDC, USDT, or DAI), while the counterpart is a token for which we have price data from Binance.
In the end, we collect 344,936 orders in \uniswapx and 100,618 orders in \cowswap for comparison.
To evaluate the efficiency in reducing LVR, we also collect the state of two Uniswap v2 pools (WETH-USDC and WETH-USDT) in each block over the same time range.

\subsection{Results}\label{subsec:evaluationresults}
\parhead{Better execution prices} We replay historical trades to compute the execution price on \AMMName and compare it with the actual execution price on \uniswapx or \cowswap. 
In the simulation, we vary the number of participating arbitrageurs in \AMMName and explore various belief distributions that arbitrageurs might have for a token. We model searchers' beliefs using a Gaussian distribution~\cite{feller1991introduction} with a centered mean and controlled standard deviation, as well as a Pareto distribution~\cite{arnold2014pareto} with a shape parameter $\alpha = 1.5$, where most arbitrageurs expect lower token valuations, but a few anticipate significantly higher ones.

Additionally, since liquidity pools can charge different swap fees, we tested swap fee settings ranging from 0 to 0.5\% (we excluded 1\% because the token pairs we focus on are typically concentrated in pools with lower fees~\cite{etherscan2024dex}). Note that the information on swap fees is not available from historical trades on \uniswapx and \cowswap, as those orders may not involve a pool (e.g., some are filled using solvers' assets; see \cref{sec:empiricalevidence}).  

\begin{figure}[htbp!]
    \centering
    \begin{subfigure}[b]{0.47\textwidth}
        \centering
        \includegraphics[width=\textwidth]{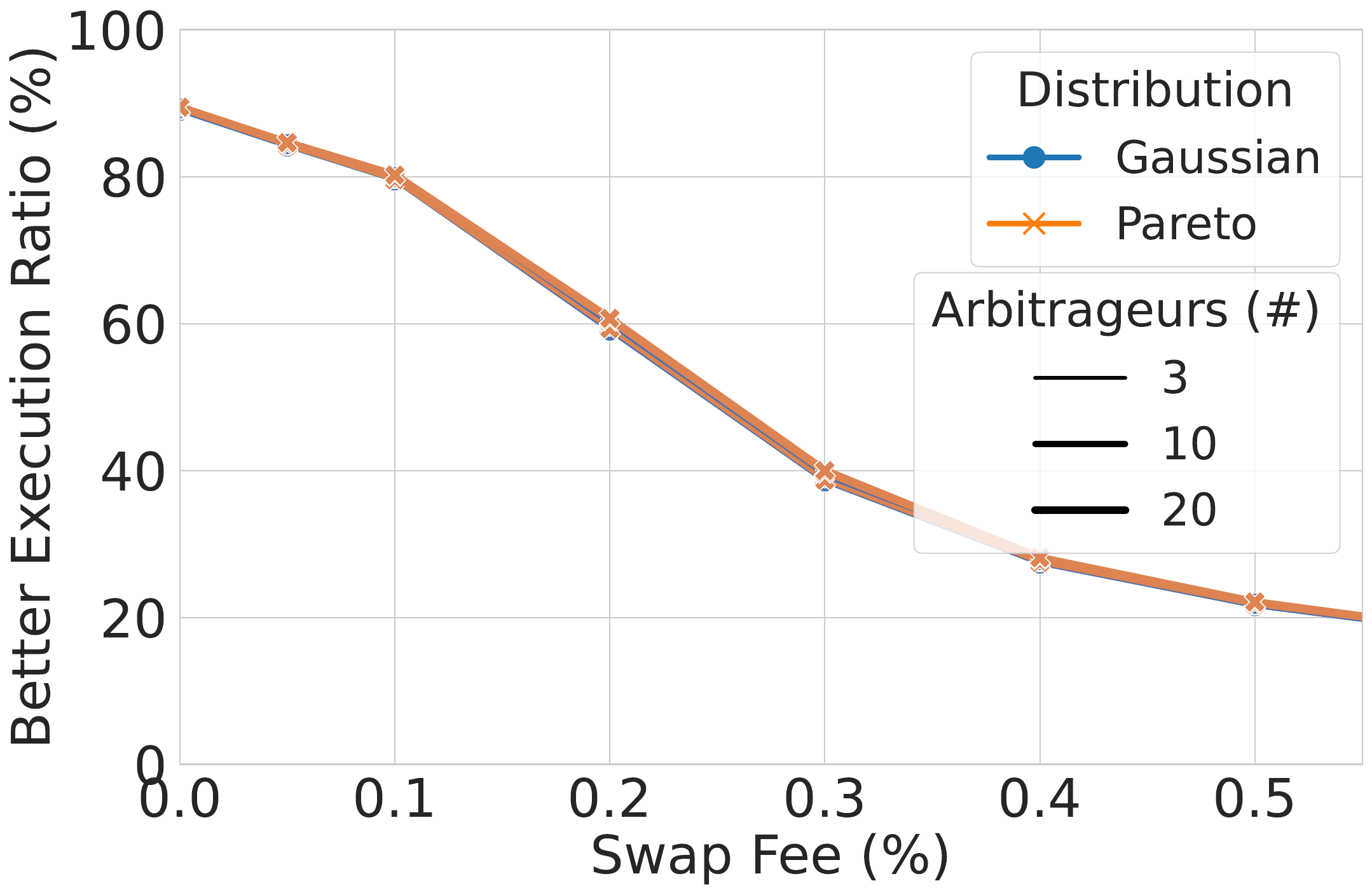}
        \caption{\uniswapx}
        \label{fig:uniswapx-ratio}
    \end{subfigure}
    \hfill
    \begin{subfigure}[b]{0.47\textwidth}
        \centering
        \includegraphics[width=\textwidth]{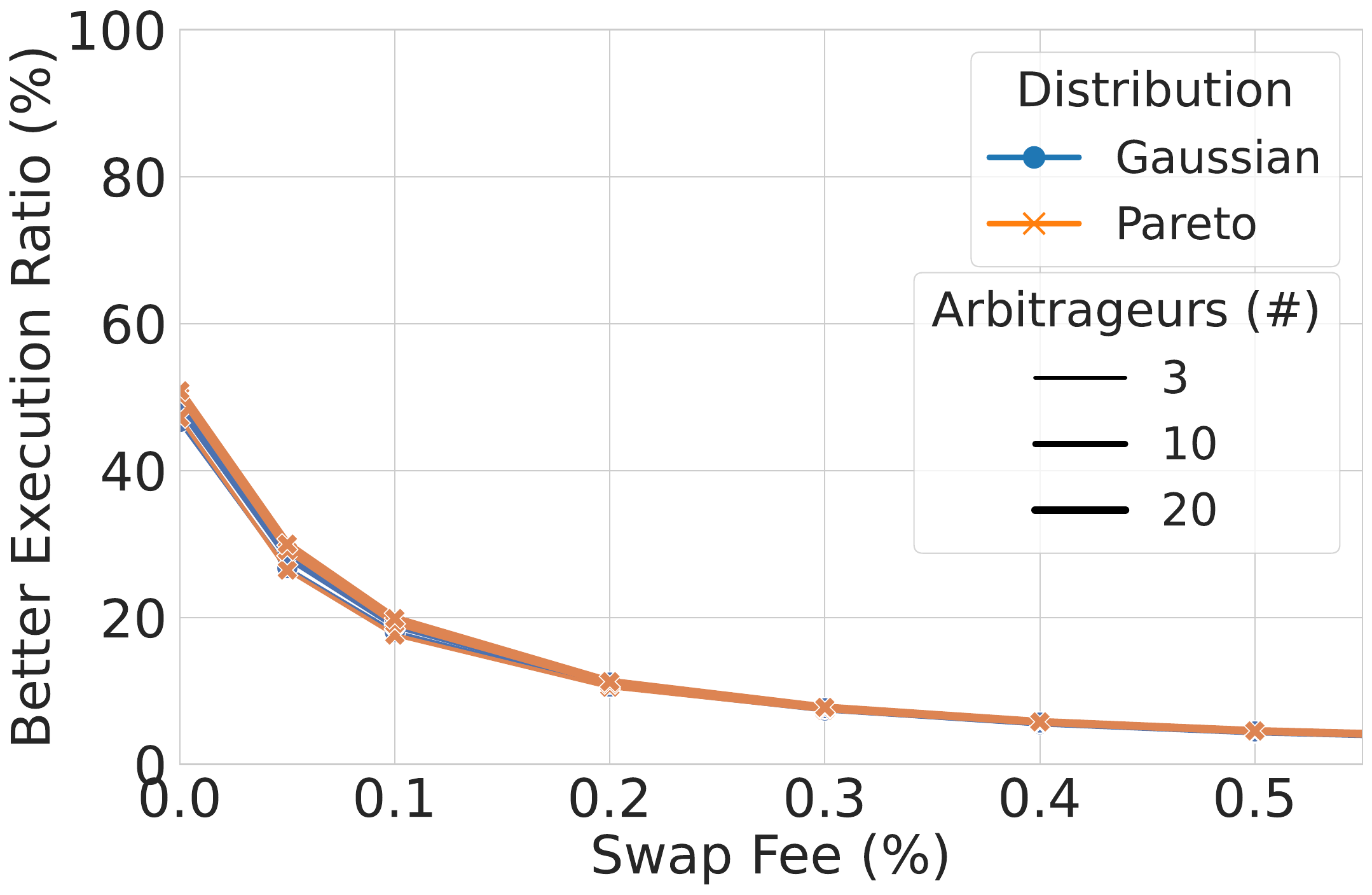}
        \caption{\cowswap}
        \label{fig:cowswap-ratio}
    \end{subfigure}
    \caption{Percentage of orders with execution prices better than those on \uniswapx or \cowswap. Each marker represents the overall result under different settings of arbitrageurs and swap fees.}
    \label{fig:better-execution-price}
\end{figure}

As shown in~\Cref{fig:better-execution-price}, we find that execution prices in \AMMName are better than those in \uniswapx for 89\% of orders when the swap fee is 0. The percentage of orders with better execution prices decreases as the swap fee increases, but it remains 40\% when the swap fee is 0.3\%. Compared to \cowswap, our mechanism can provide nearly equally competitive execution prices when the swap fee is 0. The distribution of prices and the number of arbitrageurs have no obvious impact on the results. This may be due to the narrow price range on Binance, which affects visible differences.

\begin{figure}[htbp!]
    \centering
    \begin{subfigure}[b]{0.49\textwidth}
        \centering
        \includegraphics[width=\textwidth]{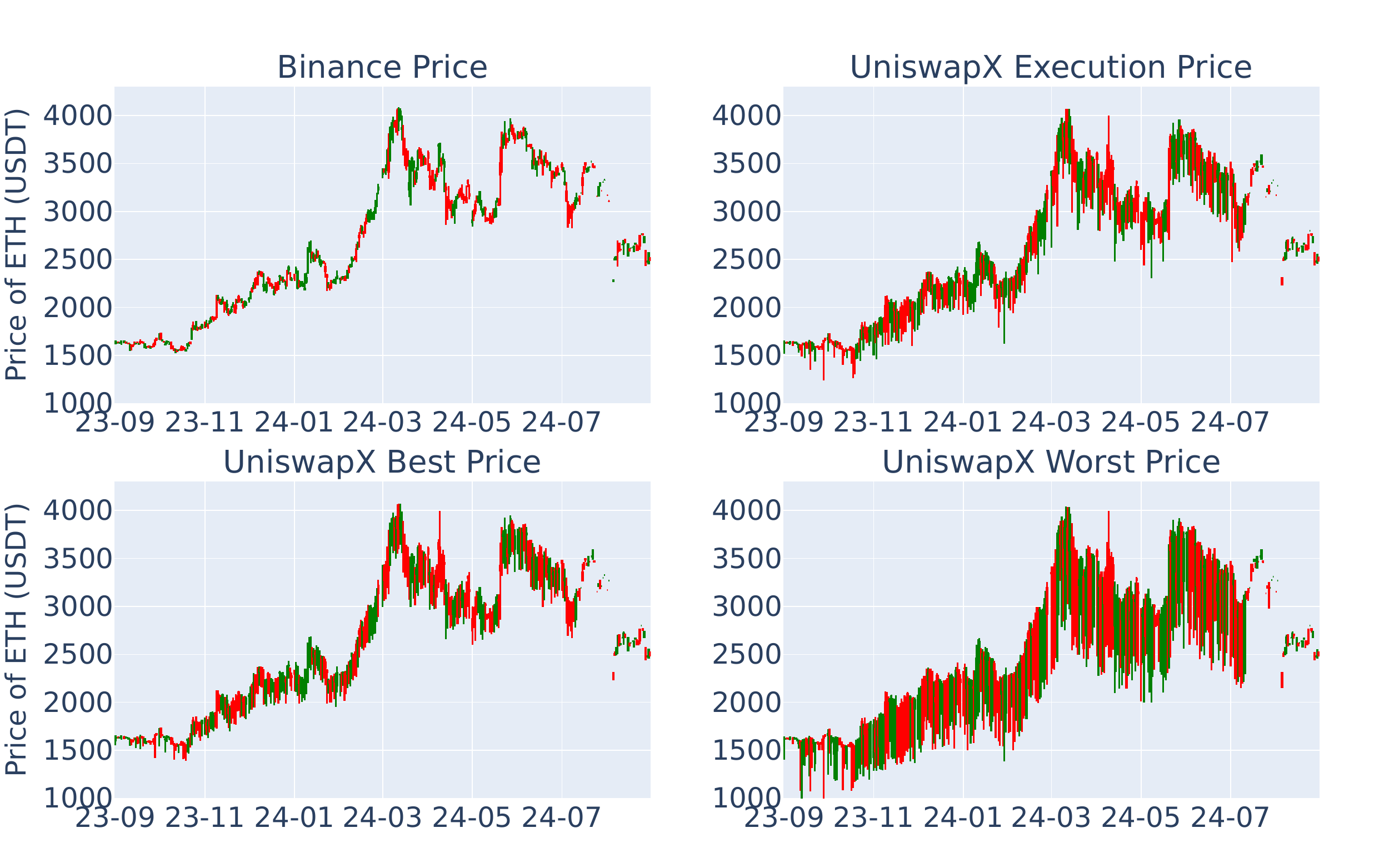}
        \caption{Swap ETH for USDT}
        \label{fig:price-eth-usdt}
    \end{subfigure}
    \hfill
    \begin{subfigure}[b]{0.49\textwidth}
        \centering
        \includegraphics[width=\textwidth]{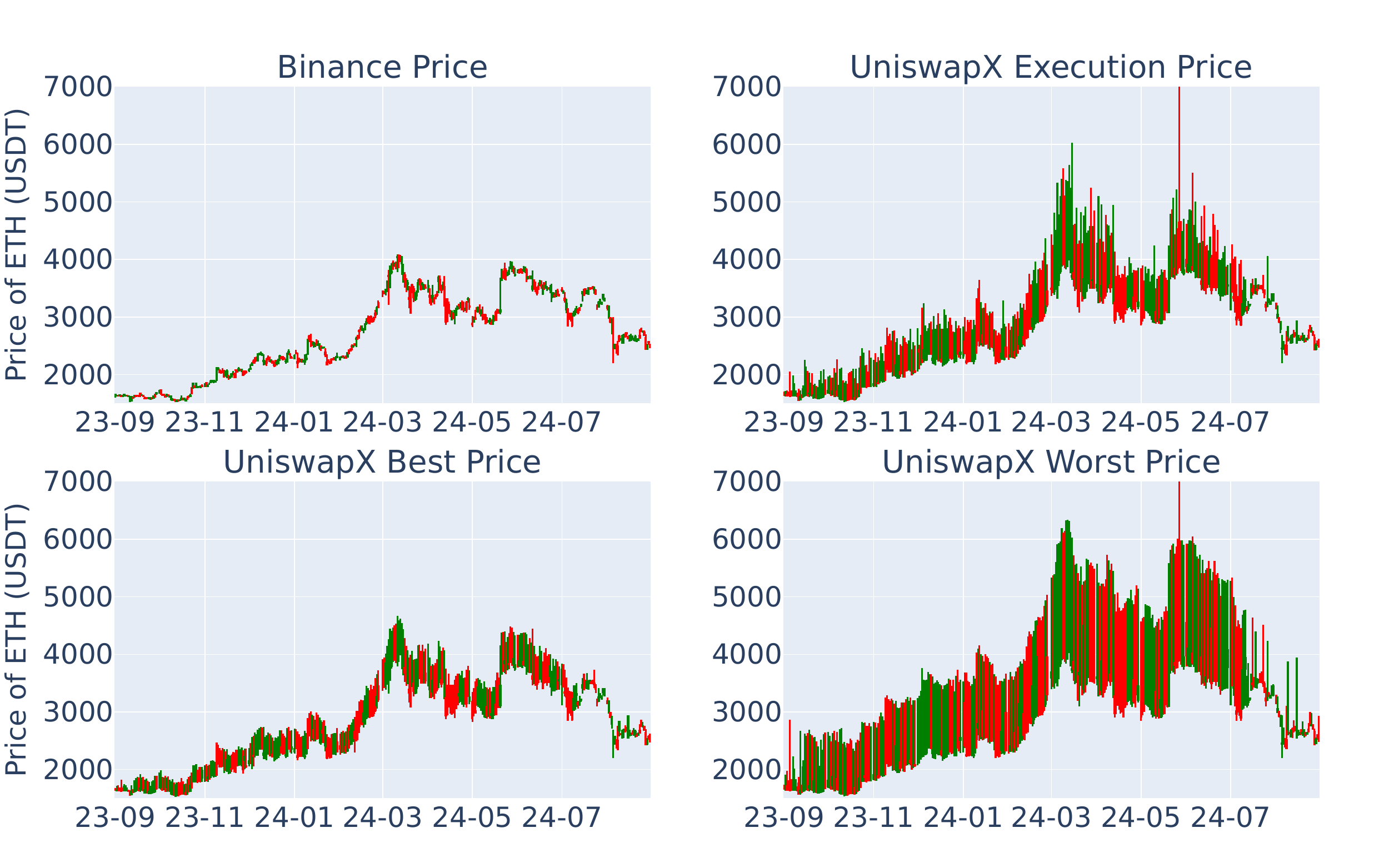}
        \caption{Swap USDT for ETH}
        \label{fig:price-usdt-eth}
    \end{subfigure}
    \caption{Candlestick charts of Binance and \uniswapx ETH/USDT prices over time. Missing data indicates that our dataset does not contain relevant orders on \uniswapx for those days.}
    \label{fig:price-binance-uniswapx}
\end{figure}

To understand why \AMMName outperforms \uniswapx in most orders, we select ETH/USDT---the most frequently traded token pair in our dataset---to compare price differences on Binance and \uniswapx over time. 
For \uniswapx, we analyze the execution price, the best price (the starting price of \uniswapx's Dutch auction), and the worst price (the ending price in the auction, representing the lowest price users are willing to accept to fulfill orders). As shown in~\Cref{fig:price-binance-uniswapx}, we observe that prices on Binance are better than the best price on \uniswapx in most cases: for swapping ETH for USDT, the price of ETH on Binance is higher, and for swapping USDT for ETH, the price of ETH on Binance is lower. Given that arbitrageurs' beliefs follow a distribution around the prices on Binance in our simulation, \AMMName can provide users with better execution prices by leveraging the more favorable prices on CEX.

\begin{figure}[thbp!]
    \centering
    \begin{subfigure}[b]{0.47\textwidth}
        \centering
        \includegraphics[width=\textwidth]{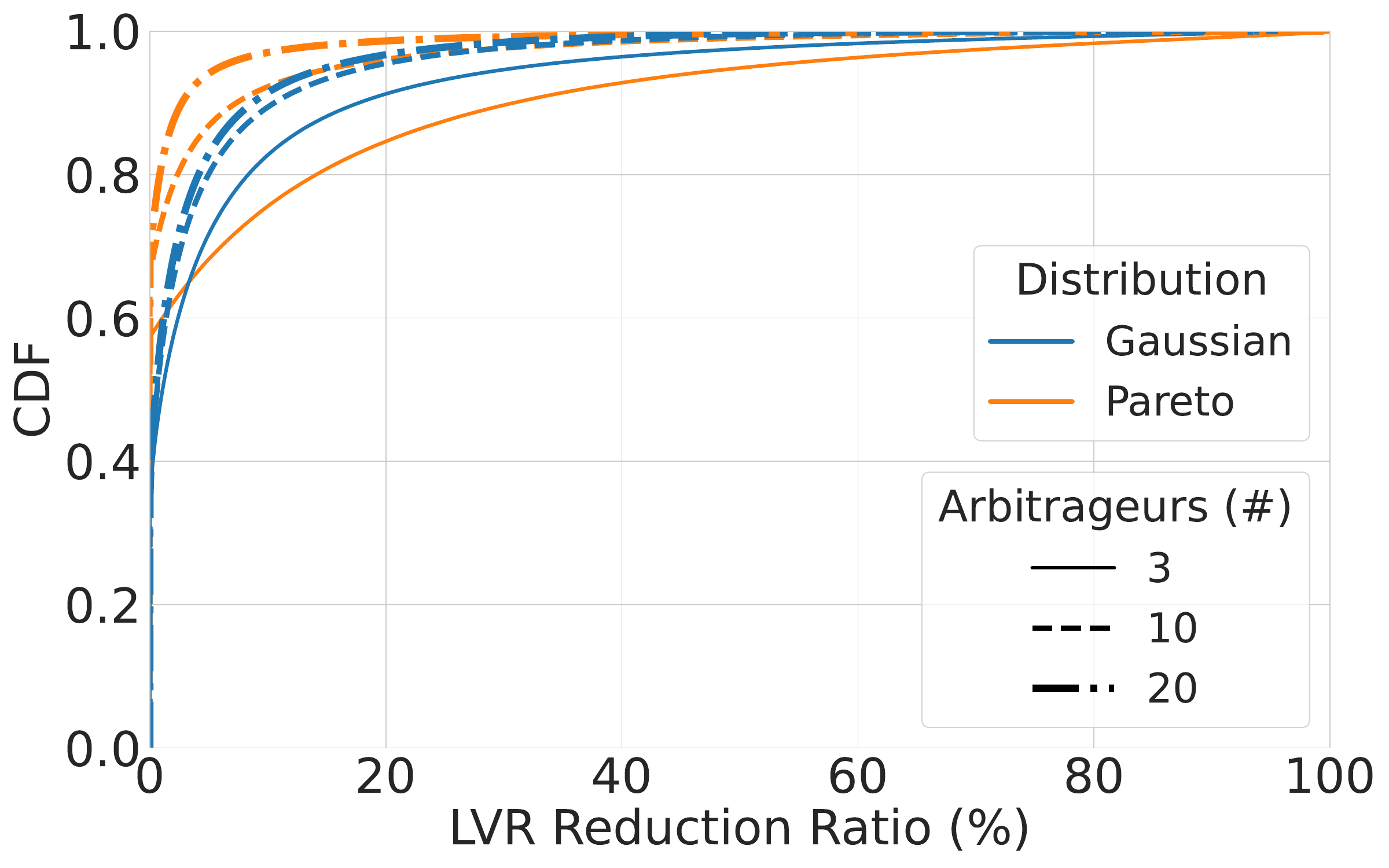}
        \caption{WETH-USDC (Uniswap v2)}
        \label{fig:weth-usdc}
    \end{subfigure}
    \hfill
    \begin{subfigure}[b]{0.47\textwidth}
        \centering
        \includegraphics[width=\textwidth]{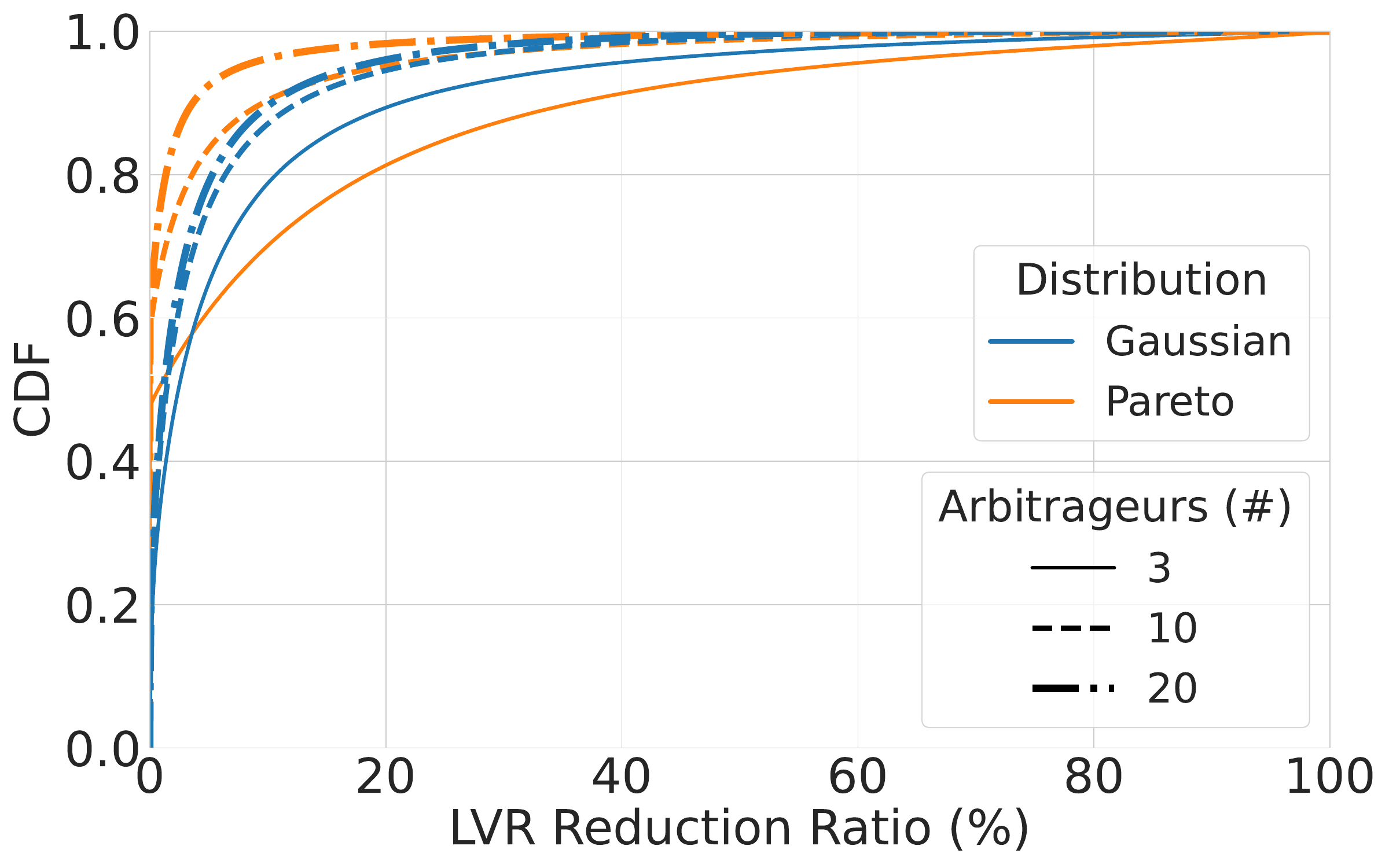}
        \caption{WETH-USDT (Uniswap v2)}
        \label{fig:weth-usdt}
    \end{subfigure}
    \caption{The CDF of LVR reduction for WETH-USDC and WETH-USDT using \AMMName for different numbers of arbitrageurs and price distributions. A smaller LVR reduction ratio on the x-axis indicates greater reduction efficiency.}
    \label{fig:better-lvr}
\end{figure}

\parhead{Reducing LVR} 
For each block in our evaluation period, we computed the LVR incurred by two Uniswap v2 liquidity pools (WETH-USDC and WETH-USDT) when the arbitrageurs perform CEX-DEX arbitrages, with and without using \AMMName. By dividing the loss under \AMMName by the LVR without \AMMName, we can evaluate the proportion of LVR reduction achieved by \AMMName. Similar to the previous evaluation, we also test different settings for the number of arbitrageurs and the distribution of their price beliefs for a token.

As shown in~\Cref{fig:better-lvr}, we can see that \AMMName effectively reduces LVR for both liquidity pools, and the reduction improves as more arbitrageurs participate. The heightened competition drives arbitrageur bids closer, ultimately minimizing LVR.
For example, in over half of the blocks, the LPs in the WETH-USDC pool will only suffer at most 0.5\% of the LVR they would experience without using \AMMName, when there are 20 arbitrageurs with beliefs following a Gaussian distribution.

\section{Conclusion, Discussion, and Future Works}
We have introduced $\AMMName$, a CFMM aided by an MEV-redistribution mechanism that can capture MEV at the application level and then refund it to the relevant participants (e.g., users and liquidity providers in this paper). We proved that the MEV-redistribution mechanism is truthful and Sybil-proof. Moreover, $\AMMName$ is designed to not rely on arbitrageurs' sophistication. Below, we discuss implementation considerations and future directions.

\subsection{Implementation Considerations}
This paper focuses on the mechanism design problem, and we leave its concrete implementation for future work, for which we do not foresee significant challenges. 
\AMMName consists of a CFMM pool (a smart contract) and an MEV-redistribution mechanism.
The main implementation task is to protect the confidentiality of arbitrageurs' reports\footnote{For instance, if the reports $\mathbf{q}$ of all arbitrageurs are public, the user of an $\calX \to \calY$ transaction can pretend to be an arbitrageur and submit a fake report $\bar{q}-\epsilon$ where $\bar{q} = \max_{i\in [n]} q_i$, in order to get more refunds. Protecting the confidentiality of arbitrageurs' reports can prevent such manipulation.}. Since most smart contract platforms do not offer confidentiality, the mechanism needs to run in an off-chain environment. This also allows off-loading computation burdens to save on gas costs.

Several tools are available to implement the mechanism. The most straightforward option is to use hardware-based Trusted Execution Environments (TEEs), such as Intel SGX~\cite{costan2016intel} and TDX~\cite{intel2023tdx}, which are readily available from cloud computing platforms and achieve near-native performance. %
The high-level workflow is to run the mechanism in a TEE, which accepts (encrypted) inputs from users and arbitrageurs, computes the bundles, and invokes the CFMM pool's smart contract at the appropriate time. Appendix~\ref{app:implementation} presents more details.

An alternative implementation is to use a combination of fully homomorphic encryption and zero-knowledge proofs; because sensitive information only needs to remain private for a short period (e.g., minutes), a lower security level may be used to improve performance.

\subsection{Future Directions}
\parhead{Solver Behaviors} The empirical studies presented in \cref{sec:empiricalevidence} reveal a discrepancy between the intended design of protocols and the actual behaviors of solvers in both \uniswapx and \cowswap. To summarize, in UniswapX, where solvers are expected to utilize diverse liquidity sources to provide better solutions, over 84\% of filled orders turned out to be filled with solvers' own assets. Similarly, \cowswap is known for matching orders with complementary trading intents (hence the name ``coincidences of wants''), but 76.94\% of batches consist of just a single order, indicating that matching instances within batches are very rare. 

Although both protocols are solver-based, a key difference lies in the nature of solver competition. In \cowswap, solvers compete internally within the protocol, with only the solver providing the best solution being rewarded and their solution selected for settlement on-chain. This incentivizes solvers to align more closely with user interests, as they must offer the best solution to win the competition. In contrast, in UniswapX, any solver who can successfully fulfill an order can do so by simply sending a transaction. As a result, solvers are incentivized to fulfill orders at the specified limit price without optimizing for the best execution. This difference in competitive structure creates a divergence in solver incentives between the two protocols, with \cowswap's model better aligning solver behavior with user outcomes, as supported by our evaluation results in~\cref{subsec:evaluationresults}. 

A potential avenue for future research would be to investigate solvers' performance in solver-based protocols using a broader set of metrics and analyze the underlying factors driving these behaviors. This could provide valuable insights into improving protocol design, particularly to align solver incentives more effectively with user needs.

\parhead{MEV Capturing in other DeFi Applications} Results in this paper can naturally extend to the CFMM pool where both $\mathcal{X}$ and $\mathcal{Y}$ are risky assets or stablecoins. This extension involves two key modifications. First, the arbitrageur's beliefs are about the external prices of two assets, denoted as $v_{\mathcal{X}}^*$ and $v_{\mathcal{Y}}^*$. Accordingly, the arbitrageurs need to report both values in the mechanism. Second, the potential value of a pool state in~\Cref{eq:potential} is adjusted to be $(x \cdot v_{\mathcal{X}}^* + y \cdot v_{\mathcal{Y}}^*) - (x^* \cdot v_{\mathcal{X}}^* + y^* \cdot v_{\mathcal{Y}}^*)$, where $(x^*, y^*)$ represents the no-arbitrage state at which $\left|\frac{\partial F/\partial x}{\partial F/\partial y}\right| = v_{\mathcal{X}}^*/v_{\mathcal{Y}}^*$. Consequently, all subsequent related definitions are updated to reflect this modification.

Beyond the non-atomic arbitrage considered in this work, other types of MEV, such as atomic arbitrage within or across DEXs, are also worth studying. 
Furthermore, similar methods to capturing and redistributing MEV at the application level could potentially be applied to other DeFi applications, such as lending platforms and oracles (c.f. oracle extractable values~\cite{oev}), where MEV might manifest differently.

\section*{Acknowledgements}
We thank DEX Analytics and Allium for providing the historical order data.

% Generated by IEEEtran.bst, version: 1.14 (2015/08/26)

\bibliographystyle{IEEEtran}

\newpage
\appendix
\section{Analysis of the orders in \uniswapx and \cowswap}
\label{sec:empiricalevidence}

In this section, we analyze how the users' orders are fulfilled on \uniswapx and \cowswap. In particular, we are interested in the following questions:
\begin{enumerate}
    \item How many orders are filled through direct exchanges between users and solvers?
    \item How many orders are filled through batch auctions?
\end{enumerate}

The first question examines the percentage of orders directly filled by solvers' assets rather than liquidity pools in DEXs. The second question evaluates the effectiveness of the batch auction process, where orders can be fulfilled together by a solver as a group, known as a {\em batch}. If a batch contains only one order, it suggests that the batch auction's effectiveness is lower than expected, as there is no counterpart order within the batch.

The answers to these questions reveal how solvers fulfill orders---whether they simply use their own assets or aggregate orders in a more complex manner.

\parhead{Orders collection} We collect 663,831 orders on \cowswap from the DEX Analytics Platform~\cite{dextrades2024dataset} during the period from September 1, 2023, to August 31, 2024. We cross-check these \cowswap orders with records on Dune~\cite{cowprotocol2024cowswap} and find that the number of orders matched. 
Similarly, we collect 726,789 orders on \uniswapx from the DEX Analytics Platform within the same date range and cross-checked them against Dune records. The results from both sources matched, indicating the accuracy of our collected data.

\parhead{The number of orders filled through direct exchanges}
An order can be filled through various solutions, such as existing liquidity pools like Uniswap v2/v3, or through direct exchange between the solver and the user. Since there is no public dataset that directly indicates how an order in \uniswapx or \cowswap is filled, to answer the first question, we need to differentiate the solutions by which an order is filled.
A key observation in answering this question is that if an order is directly filled through token exchanges between a filler and a user, {\em the process does not involve any liquidity pool}.

Based on this observation, we use a heuristic to determine whether an order is directly filled through a filler-user token exchange. If a transaction includes only a user order, and this order is filled without involving any liquidity pool, we infer that it must be filled by direct exchange.

We apply this heuristic to the historical orders we collected, and the results are shown in~\Cref{fig:searcher-orders}.
A surprising finding is that over 84\% of the orders on \uniswapx were filled through direct token exchange between solvers and users, suggesting that this solution is widely used by active solvers on \uniswapx. In comparison, a smaller percentage of orders on \cowswap were filled through direct exchange---the percentage ranges from 12.5\% to 17.7\%.

One possible explanation for the difference between these two protocols is that different sets of solvers are active in \uniswapx and \cowswap, respectively. For example, an active solver\footnote{0xfbeedcfe378866dab6abbafd8b2986f5c1768737} on \uniswapx did not participate in \cowswap.

\begin{figure}[thbp!]
    \centering
    \begin{subfigure}[b]{0.49\textwidth}
        \centering
        \includegraphics[width=\textwidth]{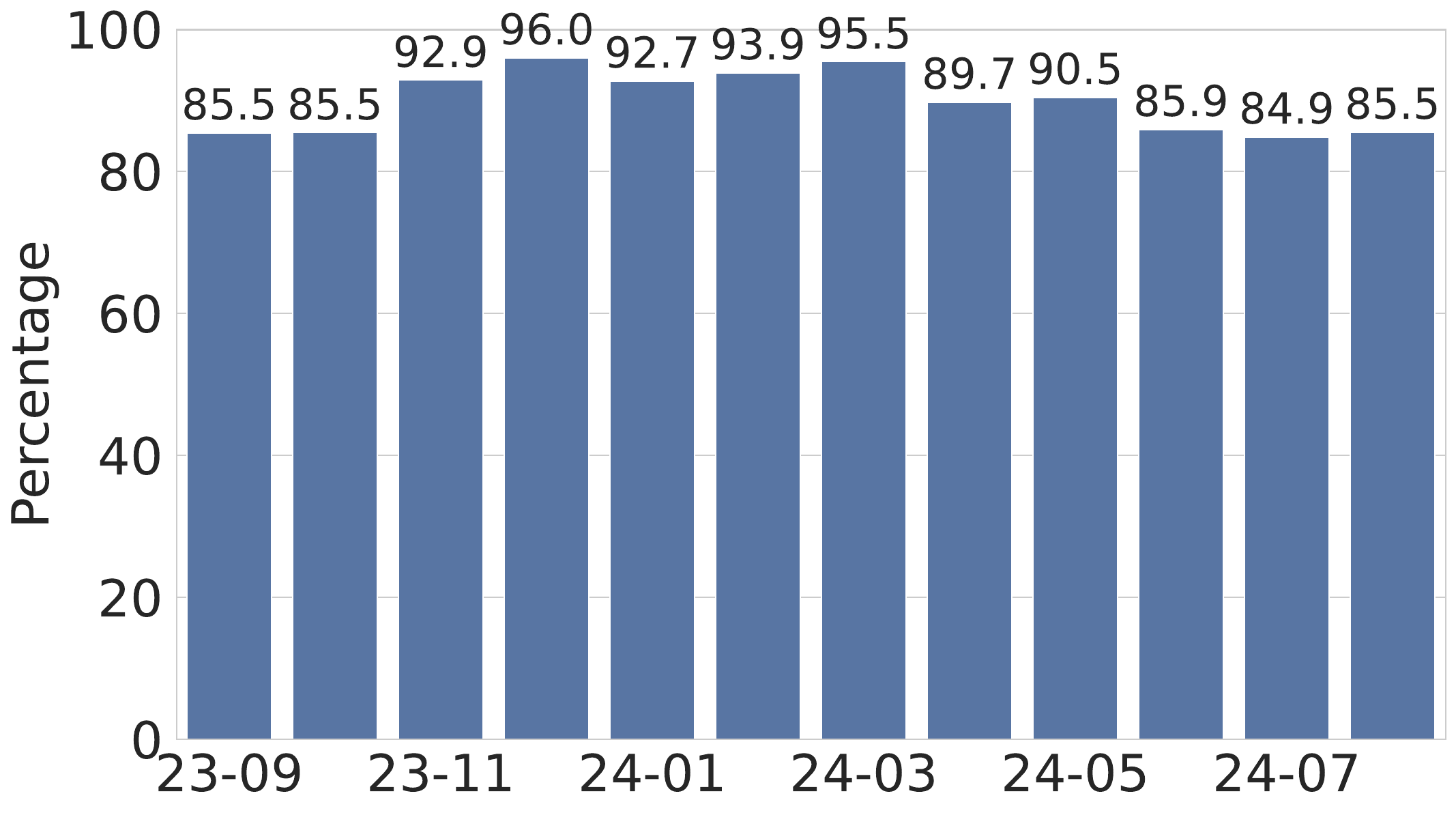}
        \caption{\uniswapx}
        \label{fig:uniswapx-searcher-orders}
    \end{subfigure}
    \hfill
    \begin{subfigure}[b]{0.49\textwidth}
        \centering
        \includegraphics[width=\textwidth]{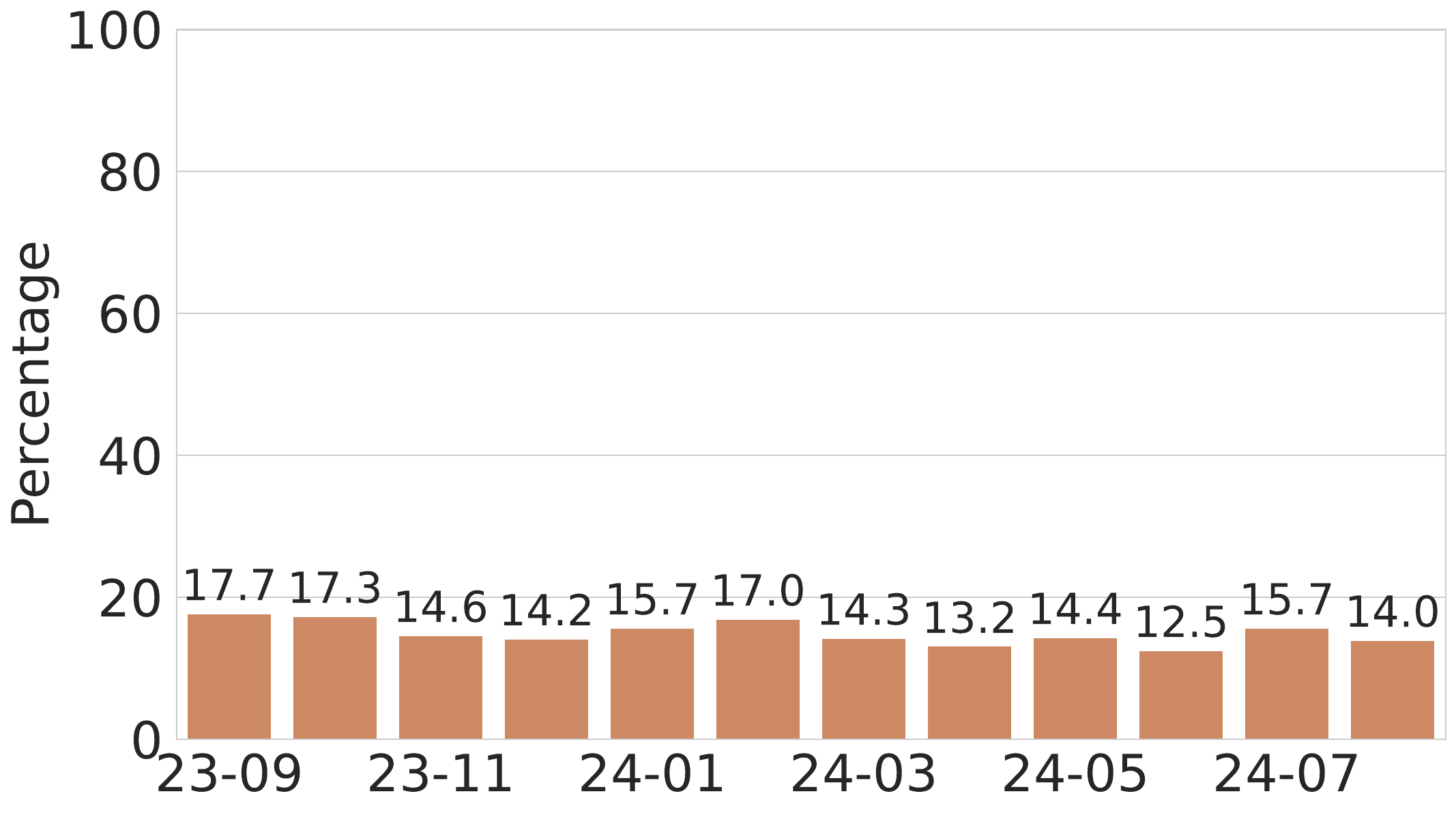}
        \caption{\cowswap}
        \label{fig:cowswap-searcher-orders}
    \end{subfigure}
    \caption{Percentage of orders filled by direct exchange between users and solvers.}
    \label{fig:searcher-orders}
\end{figure}

\parhead{The number of orders per batch}
When collecting historical orders on \uniswapx and \cowswap, we also recorded the hash of the transaction in which each order was filled and the solver who fulfilled it. Note that a batch refers to a group of orders filled by a solver within a single transaction. Using the collected data, we can compute the number of orders filled within each batch. The distribution of the number of orders per batch is shown in~\Cref{fig:order-per-tx}.

An interesting observation from~\Cref{fig:uniswapx-order-per-tx} is that over 99\% of batches on \uniswapx contained only a single filled order. In contrast, as shown in~\Cref{fig:cowswap-order-per-tx}, 17.6\% of batches on \cowswap contained two filled orders, while fewer than 6\% of batches contained more than three orders, and fewer than 0.5\% included more than five orders.

\begin{figure}[htbp!]
    \centering
    \begin{subfigure}[b]{0.49\textwidth}
        \centering
        \includegraphics[width=\textwidth]{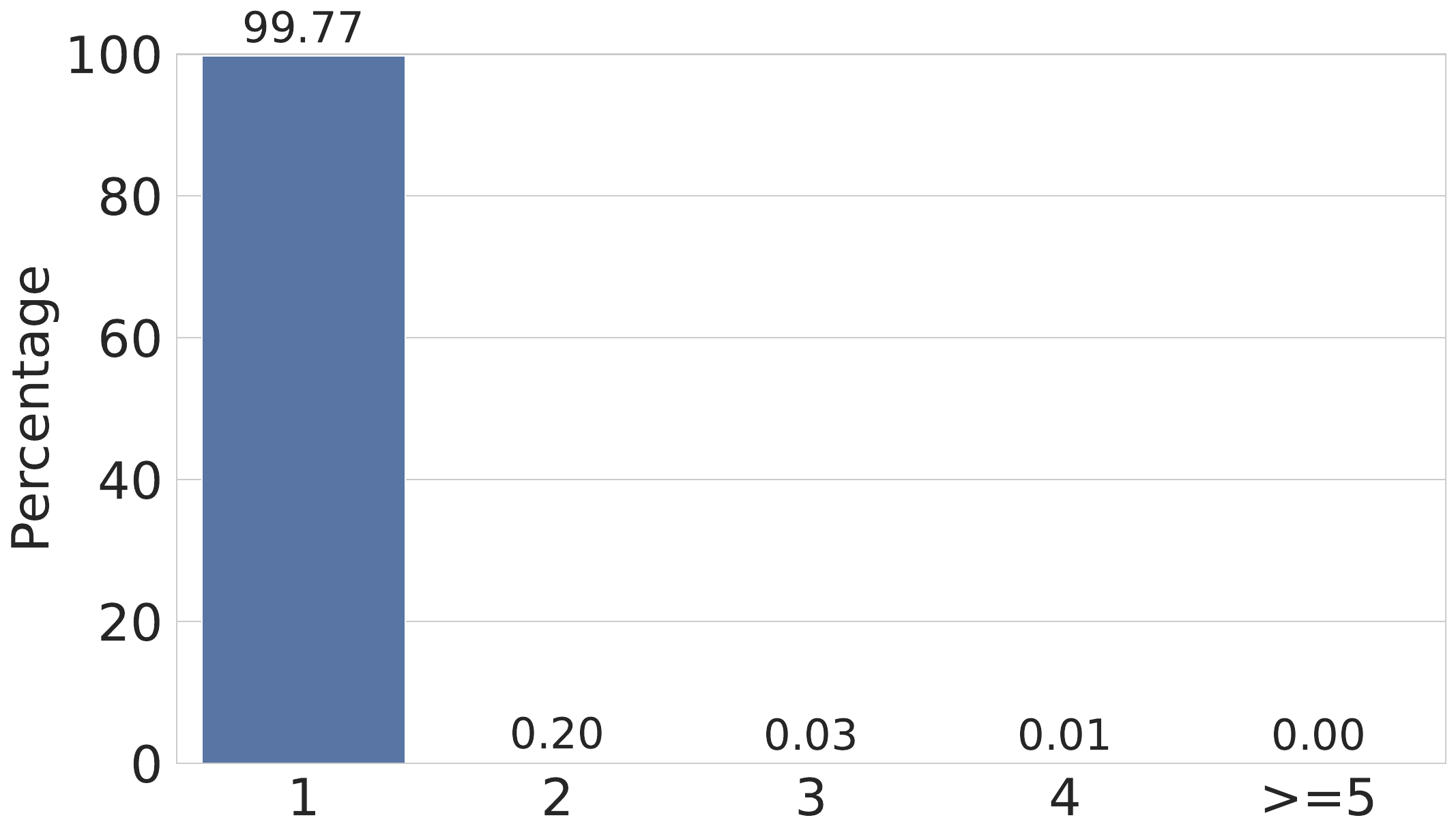}
        \caption{\uniswapx}
        \label{fig:uniswapx-order-per-tx}
    \end{subfigure}
    \hfill
    \begin{subfigure}[b]{0.49\textwidth}
        \centering
        \includegraphics[width=\textwidth]{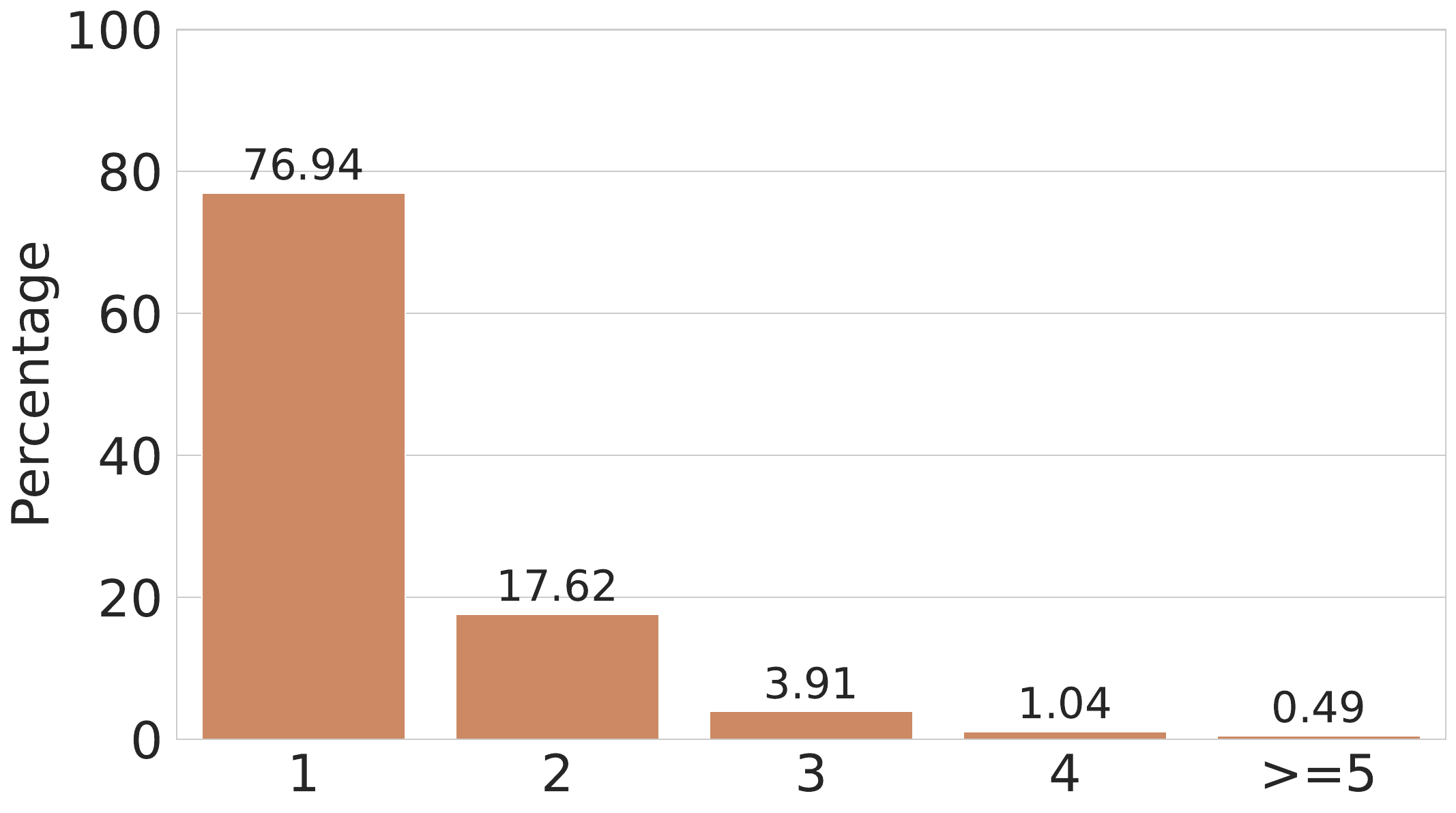}
        \caption{\cowswap}
        \label{fig:cowswap-order-per-tx}
    \end{subfigure}
    \caption{Distribution of the number of orders per batch.}
    \label{fig:order-per-tx}
\end{figure}

\section{MEV Optimization Problem under Public Orders}\label{sec:optimal}
Without loss of generality, we discuss the MEV optimization problem from the perspective of an arbitrary arbitrageur. The setting is summarized as follows:

\begin{itemize}
    \item There is a CFMM pool with a risky asset $\mathcal{X}$ and a num\'{e}raire asset $\mathcal{Y}$;
    \item The pool's initial state is $s_0 = (x_0, y_0)$;
    \item There is a set of public pending transactions $M = \{\TX_1, \cdots, \TX_m\}$ from users;
    \item The arbitrageur's belief in the external price of the risky asset $\calX$ is $v^*$. Note that for an arbitrary $v^* >0$, there is exactly one pool state $s^* = (x^*, y^*)$ at which the pool's marginal exchange rate $\left|\frac{\partial F/\partial x}{\partial F/\partial y}\right| = v^*$. So sometimes we use $v^*$ and $(x^*, y^*)$ interchangeably to represent the arbitrageur's belief.
\end{itemize}
Note that in this setting, arbitrageurs do not need to create Sybil transactions in advance, as they can insert arbitrary transactions when constructing the bundle. The MEV optimization problem can be formalized by defining the arbitrageur's strategy space and utility function as follows.

\begin{definition}[Strategy Space in the MEV Optimization Problem]
    Given an initial state $s_0=(x_0,y_0)$ and a set of user transactions $\set{\TX_j}_{j\in[m]}$, an arbitrageur can construct a bundle by selecting a subset of users' transactions $T\subseteq [m]$, creating a set of his/her own transactions $A = \set{\TX_j}_{j\in[m+1:m+a]}$, and picking an execution order (a permutation) over these transactions $\pi:[|T|+|A|]\to T\cup A$.
\end{definition}

\begin{definition}[Utility Function in the MEV Optimization Problem]
    Arbitrageur's profit is 
    \begin{equation}\label{eq:utilityPublicOF}
        U\big(T, A, \pi\big) = 
        \sum_{j\in[|T|+|A|], \pi(j)\in A}
        \Bigg[ \left(x_{j-1}-x_j\right) \cdot v^* + (y_{j-1}-y_j) \Bigg],
    \end{equation}
    where $v^*$ is the arbitrageur's belief in the external price of $\calX$, and $(x_j, y_j)$ is the pool's state after the $j$-th transaction in the bundle.
\end{definition}

This utility function is a simplified version of Formula (\ref{eq:attacker_utility}), as it only contains the profits from arbitrage transactions represented by Formula (\ref{eq:arbtxs}). The arbitrageur's objective is to find a strategy $\big(T, A, \pi \big)$ that maximizes their MEV profits, which we refer to as an \emph{optimal MEV strategy}.

\subsection{Optimal MEV Strategy}
This section elaborates on the strategy stated in Algorithm~\ref{alg:optimalMEV} for the arbitrageur to efficiently extract the maximal MEV value from any set of user transactions. %

Roughly speaking, Algorithm~\ref{alg:optimalMEV} can be divided into two parts. The first part (see line~\ref{algline:enumerate} -~\ref{algline:enumerateEnd}) enumerates all transactions and tries to ensure that if a transaction is profitable, it will be executed exactly at its limit state (defined below). Specifically, for each transaction $\TX_j$, we first preprocess it in line~\ref{line:preprocessBegin} -~\ref{line:preprocessEnd}: Recall that each transaction specifies the maximum input amount ($\delta^{in}$) and minimum output amount ($\delta^{out}$). Based on the transaction information, $\TX_j$'s \textit{limit state} $s_j^\ell = (x_j^\ell, y_j^\ell)$ is defined as the state at which, when executed, the transaction will pay exactly the maximum amount of input token and receive the minimum amount of output token. For example, suppose $\TX_j =(\calX \to \calY, \delta_{\calX}^{in}, \delta_{\calY}^{out})$, then the limit state $(x_j^\ell, y_j^\ell)$ is the state satisfying $y_j^\ell - F_{y}(x_j^\ell + \delta_{\calX}^{in}) = \delta_{\calY}^{out}$. Likewise, if $\TX_j=(\calY \to \calX, \delta_{\calY}^{in}, \delta_{\calX}^{out})$, then $(x_j^\ell, y_j^\ell)$ satisfies $x_j^\ell - F_x(y_j^\ell + \delta_{\calY}^{in}) = \delta_{\calX}^{out}$. Furthermore, we denote the transaction $\TX_j$'s impact on the trading pool when executed at its limit state by $(\Delta x_j, \Delta y_j)$ which is defined as
\begin{equation*}  
    (\Delta x_j, \Delta y_j) \coloneqq
    \left\{  
        \begin{array}{lr}  
            (\delta_{\calX}^{in}, -\delta_{\calY}^{out}), & \TX_j = (\calX \to \calY, \delta_{\calX}^{in}, \delta_{\calY}^{out}); \\  
            (-\delta_{\calX}^{out}, \delta_{\calY}^{in}), & \TX_j=(\calY \to \calX, \delta_{\calY}^{in}, \delta_{\calX}^{out}).    
        \end{array}  
    \right.
\end{equation*}
meaning that the transaction will bring $\delta^{in}$ amount of input tokens to the pool and take $\delta^{out}$ amount of output tokens away from the pool.

Next, the algorithm decides whether to execute $\TX_j$ depending on the transaction's potential value $\Delta \phi_j$, which is defined as
\begin{equation*}
    \Delta \phi_j \coloneqq \Delta x_j \cdot v^* + \Delta y_j,
\end{equation*}
where $v^*$ is the arbitrageur's belief in the external price. If $\Delta \phi_j < 0$, the algorithm directly ignores this transaction and moves on to the next iteration; otherwise (see line~\ref{line:insertBegin} -~\ref{line:insertEnd}), inserts an arbitrageur's transaction such that the after-execution state is exactly $(x_j^\ell, y_j^\ell)$, then executes $\TX_j$.

After that, we come to the second part of the algorithm (see line~\ref{algline:arbitrage} -~\ref{algline:arbitrageEnd}), which compares the current pool state $(x,y)$ with the no-arbitrage state $s^* = (x^*, y^*)$ corresponding to the belief $v^*$, and ensures that the pool will stop at the state $s^*$ by adding an arbitrage transaction if needed.

The high-level idea about the Algorithm~\ref{alg:optimalMEV} and an intuitive example can be found in~\cref{subsec:optimalMEV}.

\section{Missing Proofs}

\subsection{Proof of \Cref{theorem:publicOF}}\label{subsec:mevproof}
Before going into the formal proofs, we first present a lemma (\Cref{lemma:upperbound}) that characterizes the upper bound of the arbitrageur's profits. (Later, we will prove that Algorithm~\ref{alg:optimalMEV} can exactly get this value.)

\begin{lemma}\label{lemma:phigeq0}
    For any belief $v^*$ and any pool state $s=(x,y)$, the state's potential value for the arbitrageur $\phi(s, v^*) \geq 0$.
\end{lemma}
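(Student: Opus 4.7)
The plan is to rewrite the potential value as a one-variable function along the trading curve and show that its unique minimum is attained precisely at the no-arbitrage state $s^*$. Since for each $x$ there is exactly one $y$ with $(x,y)\in F$ (by property (1) in \cref{subsec:CFMM}), define $g(x) \coloneqq x\cdot v^* + F_y(x)$. Then $\phi(s,v^*) = g(x) - g(x^*)$, so the claim reduces to $g(x)\geq g(x^*)$ for all admissible $x$.

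Next, I would analyze $g'(x)$. Implicit differentiation of $F(x,F_y(x))\equiv \text{const}$ gives $F_y'(x) = -\,\partial F/\partial x \big/ \partial F/\partial y$, so $|F_y'(x)|$ equals the marginal exchange rate at $(x,F_y(x))$, and in particular $F_y'(x)$ is negative (by property (1), $F_y$ is strictly decreasing). Hence $g'(x) = v^* + F_y'(x) = v^* - \bigl|\tfrac{\partial F/\partial x}{\partial F/\partial y}\bigr|$. By the definition of $s^* = (x^*, y^*)$, the marginal rate equals $v^*$ at $x=x^*$, so $g'(x^*) = 0$. Property (2) says this marginal rate is decreasing in $x$, which makes $g'(x)$ increasing in $x$; equivalently, $g$ is convex along the curve.

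Combining these, $g'(x) < 0$ for $x < x^*$ and $g'(x) > 0$ for $x > x^*$, so $g$ attains its global minimum at $x = x^*$. Therefore $g(x)\geq g(x^*)$ for every $x$ on the curve, which is exactly $\phi(s,v^*)\geq 0$, establishing the lemma.

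The only mildly subtle step is the sign bookkeeping when converting between the marginal exchange rate $|\partial F/\partial x \big/ \partial F/\partial y|$ (stated as an absolute value in \cref{subsec:CFMM}) and the slope $F_y'(x)$, and then invoking property (2) to get monotonicity of $g'$; once that is in place, the argument is just the first-order optimality condition for a one-variable convex function.
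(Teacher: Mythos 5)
Your proof is correct, and it takes a different route from the paper's. The paper argues by a three-way case analysis on the sign of $x-x^*$: for $x\neq x^*$ it compares $v^*$ against the chord slope $\frac{y^*-y}{x-x^*}$ and concludes $\phi(s)>0$ directly from the definition $\phi(s)=(x-x^*)v^*-(y^*-y)$. (Notably, the paper justifies that comparison by writing $v^*=\frac{y^*}{x^*}$, an identity that holds for constant-product curves but not for a general $F$; the comparison itself is nonetheless valid for any $F$ satisfying property (2), e.g.\ via the mean value theorem applied to $F_y$.) You instead set $g(x)=x\cdot v^*+F_y(x)$, observe $g'(x)=v^*-\bigl|\tfrac{\partial F/\partial x}{\partial F/\partial y}\bigr|$, and use property (2) to make $g$ convex with a stationary point at $x^*$, so $x^*$ is the global minimizer of $g$ and $\phi=g(x)-g(x^*)\geq 0$. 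The two arguments rest on the same underlying fact (the marginal exchange rate is decreasing in $x$ and equals $v^*$ exactly at $x^*$), but yours is the differential form and the paper's is the integrated chord form. Your version has the advantage of working verbatim for every CFMM satisfying properties (1) and (2) as stated in the model section, without the constant-product shortcut; the paper's version avoids differentiating $F_y$ and is slightly more elementary. Either way the lemma holds, and your sign bookkeeping ($F_y'<0$, hence $|F_y'(x)|=-F_y'(x)$ is the marginal rate) is handled correctly.
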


\begin{proof}[Proof of \Cref{lemma:phigeq0}]
    Fix an arbitrary belief $v^*$, there is a unique pool state $(x^*, y^*) \in F$ at which $\frac{y^*}{x^*} = v^*$. For any state $s = (x,y)\in F$, there are three cases:
    
    \textbf{Case 1:} $x = x^*$ and $y = y^*$. $\phi(s) = 0$ by definition.

    \textbf{Case 2:} $x > x^*$ and $y < y^*$. $v^* = \frac{y^*}{x^*} > \frac{y^* - y}{x - x^*} \Rightarrow \phi(s) > 0$. 

    \textbf{Case 3:} $x < x^*$ and $y > y^*$. $v^* = \frac{y^*}{x^*} < \frac{y - y^*}{x^* - x} \Rightarrow \phi(s) > 0$.
\end{proof}

\begin{lemma}\label{lemma:upperbound}
    Given an initial state $s_0=(x_0,y_0)$, a set of users transactions $\set{\TX_j}_{j\in[m]}$, and the arbitrageur's price belief $v^*$, the arbitrageur's profit is upper bounded by $\phi(s_0, v^*) + \sum_{j\in[m]} V(\TX_j)$.
\end{lemma}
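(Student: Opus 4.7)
The plan is to use the potential function $\phi$ as a ``bookkeeping'' device that telescopes across the bundle, and then show that each user transaction can contribute at most $V(\TX_j)$ to the arbitrageur's gain.

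First I would rewrite the arbitrageur's per-step gain in terms of $\phi$. For any transaction (arbitrageur's or user's) that takes the pool from $s_{j-1}=(x_{j-1},y_{j-1})$ to $s_j=(x_j,y_j)$, the quantity $(x_{j-1}-x_j)v^* + (y_{j-1}-y_j)$ is exactly $\phi(s_{j-1})-\phi(s_j)$, by the definition of $\phi$ in \eqref{eq:potential}. Therefore the arbitrageur's profit $U(T,A,\pi)$ from \eqref{eq:utilityPublicOF} equals
\begin{equation*}
\sum_{j:\pi(j)\in A}\bigl[\phi(s_{j-1})-\phi(s_j)\bigr]
= \bigl[\phi(s_0)-\phi(s_{\text{end}})\bigr] - \sum_{j:\pi(j)\in T}\bigl[\phi(s_{j-1})-\phi(s_j)\bigr],
\end{equation*}
where the last equality uses telescoping over the entire bundle and isolates the user-transaction contributions.

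Next, the central step is to upper bound the per-step contribution of each user transaction $\TX_k\in T$ by its potential value $\Delta\phi_k$ computed at the limit state. Consider $\TX_k=(\calX\to\calY,\delta_\calX^{in},\delta_\calY^{out})$ executed from state $s_{j-1}$; the pool receives $\delta_\calX^{in}$ units of $\calX$ and pays out $b = y_{j-1}-F_y(x_{j-1}+\delta_\calX^{in})$ units of $\calY$, and for the transaction to be valid we need $b\geq \delta_\calY^{out}$. Hence
\begin{equation*}
\phi(s_j)-\phi(s_{j-1}) \;=\; \delta_\calX^{in}\cdot v^* - b \;\leq\; \delta_\calX^{in}\cdot v^* - \delta_\calY^{out} \;=\; \Delta\phi_k.
\end{equation*}
The symmetric calculation handles $\TX_k=(\calY\to\calX,\delta_\calY^{in},\delta_\calX^{out})$. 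Since $V(\TX_k)=\max\{0,\Delta\phi_k\}\geq \Delta\phi_k$, each included user transaction contributes at most $V(\TX_k)$ to the arbitrageur's bookkeeping.

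Combining the two observations and using $V(\TX_k)\geq 0$ to extend the sum to $k\in[m]$,
\begin{equation*}
U(T,A,\pi) \;\leq\; \phi(s_0)-\phi(s_{\text{end}}) + \sum_{k\in T} V(\TX_k) \;\leq\; \phi(s_0)-\phi(s_{\text{end}}) + \sum_{k\in[m]} V(\TX_k).
\end{equation*}
Finally, \Cref{lemma:phigeq0} gives $\phi(s_{\text{end}})\geq 0$, so the desired bound $U\leq \phi(s_0,v^*)+\sum_{j\in[m]}V(\TX_j)$ follows.

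The main obstacle is the per-transaction inequality $\phi(s_j)-\phi(s_{j-1})\leq \Delta\phi_k$. It depends on a careful reading of the transaction semantics (the user caps the input at $\delta^{in}$ and requires output at least $\delta^{out}$), together with the monotonicity properties of $F$ assumed in \cref{subsec:CFMM}, which guarantee that the ``worst-for-the-user'' execution is exactly the one at the limit state. Once that inequality is in hand, the rest is telescoping plus \Cref{lemma:phigeq0}.
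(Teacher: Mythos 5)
Your proposal is correct and is essentially the paper's own argument: both rest on the identity that an arbitrageur step changes $U+\phi$ by zero, that a user step increases $\phi$ by at most $\Delta\phi_k\leq V(\TX_k)$, and that $\phi\geq 0$ at the end (\Cref{lemma:phigeq0}). The paper packages this as an induction on the invariant $U_k+\phi_k$ while you telescope directly, but these are the same bookkeeping.
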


\begin{proof}[Proof of \Cref{lemma:upperbound}]
    Fix an arbitrary sequence of the mixed arbitrageur's and users' transactions $\big(\TX_{\pi(1)}, \cdots, \TX_{\pi(m+a)}\big)$, where $\TX_{\pi(j)}$ is a user transaction if $\pi(j)\in [m]$ and it is the arbitrageur's transaction otherwise. We will inductively show that after executing the first $k$ transactions, the arbitrageur's profit $U_k \leq \phi(s_0) + V^k$ where $V^k \coloneqq \sum_{j\in[k], \pi(j) \in [m]} V(\TX_{\pi(j)})$. This will imply that after executing all $m+a$ transactions, the arbitrageur's profit is upper bounded by $\phi(s_0) + \sum_{k\in[m]} V(\TX_k)$.

    Let $s_{k}$ be the state after executing $\TX_{\pi(k)}$ and $\phi_k = \phi(s_{k})$. We focus on $U_k+\phi_k$ where $U_0+\phi_0 = 0 + \phi(s_0) = \phi(s_0)$. Let $V(\TX_{\pi(k)})=0$ if it is the arbitrageur's transaction. We will show that $(U_k + \phi_k) - (U_{k-1} + \phi_{k-1}) \leq V^k - V^{k-1} = V(\TX_{\pi(k)})$ for all $k\in [m+a]$, which will imply our desired statement $U_k \leq \phi(s_0) + V^k$ for all $k\in [m+a]$, as $\phi_k \geq 0$ always holds according to \Cref{lemma:phigeq0}. For each $k\in [m+a]$, there are two cases: $\TX_{\pi(k)}$ is from a user or the arbitrageur.

    \textbf{Case 1:} $\TX_{\pi(k)}$ is a user transaction. In this case, $U_k = U_{k-1}$. So it suffices to show that $\phi_k - \phi_{k-1} \leq V(\TX_{\pi(k)})$. 
    According to \Cref{eq:potential}, $\phi_k - \phi_{k-1} = (x_k - x_{k-1})\cdot v^* - (y_{k-1} - y_k)$. 
    \begin{itemize}
        \item If $\TX_{\pi(k)} = (\calX \to \calY, \delta_{\calX}^{in}, \delta_{\calY}^{out})$, $\phi_k - \phi_{k-1} \leq \delta_{\calX}^{in} \cdot v^* - \delta_{\calY}^{out} = \Delta\phi_{\pi(k)}$.
        
        \item If $\TX_{\pi(k)}=(\calY \to \calX, \delta_{\calY}^{in}, \delta_{\calX}^{out})$, $\phi_k - \phi_{k-1} \leq - \delta_{\calX}^{out} \cdot v^* + \delta_{\calY}^{in} = \Delta\phi_{\pi(k)}$.
    \end{itemize}
    By the definition of $V(\cdot)$, the inequality $\Delta\phi_{\pi(k)} \leq V(\TX_{\pi(k)})$ naturally holds, so we have $\phi_k - \phi_{k-1} \leq V(\TX_{\pi(k)})$ which concludes the first case.
    
    \textbf{Case 2:} $\TX_{\pi(k)}$ is the arbitrageur's transaction. In this case, $V^k = V^{k-1}$. Then it suffices to show that $(U_k + \phi_k) - (U_{k-1} + \phi_{k-1}) \leq 0$. By definition (see \Cref{eq:utilityPublicOF}), we have $U_k - U_{k-1} = (x_{k-1} - x_k)\cdot v^* + (y_{k-1} - y_k)$, which is exactly $\phi_{k-1}-\phi_k$ according to \Cref{eq:potential}. Thus, $(U_k + \phi_k) - (U_{k-1} + \phi_{k-1}) = 0$, concluding the second case.

    This finishes the proof of \Cref{lemma:upperbound}.
\end{proof}

Next, we go back to \Cref{theorem:publicOF} and prove that Algorithm~\ref{alg:optimalMEV} captures this upper bound.

\begin{proof}[Proof of \Cref{theorem:publicOF}]
    Algorithm~\ref{alg:optimalMEV} enumerates all given transactions and outputs an execution sequence, including both user transactions and newly inserted/added transactions from the arbitrageur him/herself. 

    Let $s_{j}$ be the state after the $j$-th iteration where $j\in [m]$, $\phi_{j}$ be the value of state $s_{j}$ according to \Cref{eq:potential}, and $U_{j}$ be the utility of the arbitrageur at that state according to \Cref{eq:utilityPublicOF}. We focus on $U_j + \phi_j$.

    Initially, we have $U_0 = 0$ and $\phi_0 = \phi(s_0)$, thus $U_0 + \phi_0 = \phi(s_0)$.

    Then for each iteration $j\in[m]$, we decide whether to execute $\TX_j$ based on $\Delta\phi_j$. If $\Delta\phi_j<0$, we choose to skip, so $U + \phi$ remains unchanged, namely, $U_j + \phi_j = U_{j-1}+\phi_{j-1}$. Otherwise (i.e., $\Delta\phi_j \geq 0$), we execute an arbitrageur's transaction (if needed) followed by the user's transaction $\TX_j$. Note that the execution of the arbitrage transaction changes the pool state from $(x_{j-1}, y_{j-1})$ to $(x_j^\ell, y_j^\ell)$, during which $U_{j-1}$ and $\phi_{j-1}$ change to be $U'$ and $\phi'$ respectively. Then we have
        \begin{align}
            U' + \phi' 
            &= U_{j-1} + (x_{j-1} - x_j^\ell)\cdot v^* + (y_{j-1} - y_j^\ell) + (x_j^\ell \cdot v^* + y_j^\ell) - (x^* \cdot v^* + y^*) \nonumber  \\
            &= U_{j-1} + (x_{j-1} - x_j^\ell)\cdot v^* + (y_{j-1} - y_j^\ell) + (x_j^\ell - x_{j-1})\cdot v^* + (y_j^\ell - y_{j-1}) + \phi_{j-1}   \nonumber   \\
            &= U_{j-1} + \phi_{j-1}.
            \label{eq:U+phi}
        \end{align}
    Next, the execution of user transaction $\TX_j$ changes $U'$ and $\phi'$ into $U_j$ and $\phi_j$, respectively. Here, we have $U_j = U'$ and $\phi_j = \phi' + \Delta\phi_j$. Thus, $U_j + \phi_j = U' + \phi' + \Delta\phi_j$. In both cases ($\Delta\phi_j<0$ or $\Delta\phi_j \geq 0$), we have $U_j + \phi_j = U' + \phi' + \max\{0, \Delta\phi_j\} = U_{j-1}+\phi_{j-1} + V\big(\TX_j \big)$.
     
    Iteratively, we have $U_m + \phi_m = \phi(s_0) + \sum_{j\in [m]} V(\TX_j)$ after the $m$-th iteration. At the end of the algorithm, we conclude the strategy with an arbitrage transaction such that we will stop at the state $s^* = (x^*, y^*)$ corresponding to the arbitrageur's belief $v^*$. In this process, $U_m$ and $\phi_m$ change to be $U$ and $\phi(s^*)$, where $\phi(s^*) = 0$ by definition. According to \Cref{eq:U+phi}, $U= U_m + \phi_m = \phi(s_0) + \sum_{j\in [m]} V(\TX_j)$. 

    This finishes the proof.
\end{proof}

\subsection{Proof of \Cref{thm:strawmantruthful}}\label{subsec:proofofstrawmantruthful}

\begin{proof}[Proof of \Cref{thm:strawmantruthful}]
    Fix an arbitrary arbitrageur $i$, its true belief $v_i^*$, and the reports $\textbf{q}_{-i}$ of the other arbitrageurs. We will show that arbitrageur $i$'s utility is maximized by setting $q_i = v_i^*$.

    Let $\overline{\MEV} = \max_{k\neq i} \MEV_k$ denote the highest MEV value by some other arbitrageur $k\neq i$. Note that the definition of \textit{truthfulness} assumes arbitrageur $i$ adds no Sybil transactions, namely, $S_i = \emptyset$. This implies that by reporting $q_i$, if the induced MEV value $\MEV_i \geq \overline{\MEV}$, then $i$ wins and receives utility $\sum_{\TX_{\pi(j)} \in A_i} \Big[ \left(x_{j-1}-x_j\right) \cdot v_i^* + (y_{j-1}-y_j) \Big] - p_i$ according to (\ref{eq:attacker_utility}), where $A_i$ is the set of newly inserted arbitrage transactions for $i$ in the bundle generation step; if the report $q_i$ induces that $\MEV_i < \overline{\MEV}$, then $i$ loses and receives utility 0.

    Denote the MEV value corresponding to $i$'s true belief $v_i^*$ by $\MEV_i^*$. \Cref{theorem:publicOF} tells us that $\sum_{\TX_{\pi(j)} \in A_i} \Big[ \left(x_{j-1}-x_j\right) \cdot v_i^* + (y_{j-1}-y_j) \Big] \leq \MEV_i^*$. We conclude by considering two cases. First, if $\MEV_i^* < \overline{\MEV}$, the maximum utility that arbitrageur $i$ can obtain is $0$, and $i$ achieves this by reporting truthfully (and losing). Second, if $\MEV_i^* \geq \overline{\MEV}$, the maximum utility that arbitrageur $i$ can obtain is $\max\Big\{0, \sum_{\TX_{\pi(j)} \in A_i} \Big[ \left(x_{j-1}-x_j\right) \cdot v_i^* + (y_{j-1}-y_j) \Big] - \overline{\MEV}\Big\} = \MEV_i^* - \overline{\MEV}$, and $i$ achieves this by reporting truthfully (and winning).
\end{proof}

\subsection{Proof of \Cref{theorem:truthful}}\label{subsec:ourtruthfulprf}
\begin{proof}[Proof of \Cref{theorem:truthful}]
    Fix an arbitrary arbitrageur $i$, its true belief $v_i^*$, the reports $\textbf{q}_{-i}$ of all other arbitrageurs, and their Sybil transactions $S_{-i}$. To prove the truthfulness, we only need to focus on the case where arbitrageur $i$ submits no Sybil transaction, i.e., $S_i = \emptyset$. The bundle generation process shows that the final bundle is composed of $|M|+1$ sub-bundles, corresponding to $|M|$ pending transactions (for transactions that failed to be executed, the sub-bundle is empty) and the initial state. Denote the sub-bundle constructed for transaction $\TX_j \in M$ and initial state $s_0$ by $\SB(\TX_j)$ and $\SB(s_0)$, respectively. Then we can rewrite arbitrageur $i$'s utility defined in Equation (\ref{eq:attacker_utility}) by traversing each sub-bundle and calculating the profit from it as follows:
    \begin{equation*}
        u(\emptyset, q_i; S_{-i}, \mathbf{q}_{-i}) 
        = \sum_{e \in M\cup \{s_0\}} \textsf{monetary value of arbitrage transactions in SB($e$) - payment for it}. 
    \end{equation*}
    
    Consider the scenario where arbitrageur $i$ misreports by setting $q_i \neq v_i^*$. For each element $e \in M\cup \{s_0\}$, there are three cases to discuss regarding the winner of its sub-bundle.

    \textbf{Case 1:} Arbitrageur $i$ is the winner when reporting truthfully, but not when misreporting. In this case, arbitrageur $i$'s profit from the sub-bundle constructed when reporting truthfully is the loss of misreporting, which we will show is non-negative. The element $e$ under discussion is either a pending transaction $\TX_j$ or the initial state $s_0$. If the former, arbitrageur $i$'s profit from this sub-bundle is
    \begin{align}
        & \overbrace{(x_0 - x_j^l)\cdot v_i^* + (y_0 - y_j^l)}^{\text{Front-running: $(x_0, y_0) \to (x_j^l, y_j^l)$}}
        \ \ + \quad \overbrace{(x_j^l + \Delta x_j - x_0) \cdot v_i^* + (y_j^l + \Delta y_j - y_0)}^{\text{Back-running: $(x_j^l + \Delta x_j, y_j^l + \Delta y_j) \to (x_0, y_0)$}} \quad - \overbrace{p_i(\TX_j)}^{\text{Payment for $\TX_j$}}    \nonumber    \\
        =& \ \Delta x_j \cdot v_i^* + \Delta y_j - \max_{k\neq i} V_k(\TX_j), \label{eq:proofcase1}
    \end{align}
    which is non-negative. This is because the fact that arbitrageur $i$ is the winner when reporting truthfully implies that $\Delta x_j \cdot v_i^* + \Delta y_j \geq \max_{k\neq i} V_k(\TX_j)$.

    If the latter, the sub-bundle only contains a rebalancing arbitrage transaction from $(x_0, y_0)$ to the state $(x^*, y^*)$ corresponding to $v_i^*$ (i.e., $y^*/x^*=v_i^*$), from which arbitrageur $i$'s profit is
    \begin{align*}
        & (x_0 - x^*) \cdot v_i^* + (y_0 - y^*) - \max_{k\neq i} \big( (x_0 \cdot q_k + y_0) - (\hat{x}_k \cdot q_k + \hat{y}_k) \big)    \\
        =& \phi_i(s_0) - \max_{k\neq i} \phi_k(s_0),
    \end{align*}
    which is also non-negative as $\phi_i(s_0) \geq \max_{k\neq i} \phi_k(s_0)$.
    
    \textbf{Case 2:} Arbitrageur $i$ is not the winner when reporting truthfully, but wins when misreporting. In this case, the profit from the sub-bundle is the gain of misreporting, which we will show is negative. If the sub-bundle corresponds to a transaction $\TX_j$, the profit is still $\Delta x_j \cdot v_i^* + \Delta y_j - \max_{k\neq i} V_k(\TX_j)$, which becomes negative in this case, as arbitrageur $i$ loses when reporting truthfully. 
    
    If the sub-bundle corresponds to $s_0$, the profit becomes 
    \begin{align}
        h(\hat{x}_i, \hat{y}_i) &\coloneqq (x_0 - \hat{x}_i) \cdot v_i^* + (y_0 - \hat{y}_i) - \max_{k\neq i} \phi_k(s_0)   \nonumber    \\
        &= (x_0 \cdot v_i^* + y_0) - (\hat{x}_i \cdot v_i^* + \hat{y}_i) - \max_{k\neq i} \phi_k(s_0),  \label{eq:proofcase2}
    \end{align}
    where $(\hat{x}_i, \hat{y}_i)$ corresponding to $q_i$ (i.e., $\hat{y}_i/\hat{x}_i=q_i$). We will show that $h(\hat{x}_i, \hat{y}_i)$ is upper bounded by $h(x^*, y^*)$ which is negative in this case. Note that the value of the first and third terms in $\Cref{eq:proofcase2}$ are fixed regardless of arbitrageur $i$'s report. Then it's sufficient to prove $\hat{x}_i \cdot v_i^* + \hat{y}_i$ is minimized when $(\hat{x}_i, \hat{y}_i) = (x_i^*, y_i^*)$ where $y_i^*/x_i^* = v_i^*$. Note that
    \begin{align*}
         \hat{x}_i \cdot v_i^* + \hat{y}_i &= (x_i^* + \hat{x}_i - x_i^*)\cdot v_i^* + (y_i^* + \hat{y}_i - y_i^*)  \\
         &= (x_i^* \cdot v_i^* + y_i^*) + (\hat{x}_i - x_i^*)\cdot v_i^* + (\hat{y}_i - y_i^*).
    \end{align*}
    By the assumption that $\left|\frac{\partial F/\partial x}{\partial F/\partial y}\right|$ is decreasing with respect to $x$ (see \cref{sec:model}), we have $\frac{\hat{y}_i - y_i^*}{x_i^* - \hat{x}_i} > v_i^*$ when $\hat{x}_i < x_i^*$, and $\frac{y_i^* - \hat{y}_i}{\hat{x}_i - x_i^*}< v_i^*$ when $\hat{x}_i > x_i^*$. Both cases imply that $(\hat{x}_i - x_i^*)\cdot v_i^* + (\hat{y}_i - y_i^*) > 0$ if $(\hat{x}_i, \hat{y}_i) \neq (x_i^*, y_i^*)$, and $\hat{x}_i \cdot v_i^* + \hat{y}_i$ reaches its minimum value when $(\hat{x}_i, \hat{y}_i) = (x_i^*, y_i^*)$.

    \textbf{Case 3:} Arbitrageur $i$ is the winner whether reporting truthfully or misreporting. In this case, we will show that after misreporting, the profit from the sub-bundle is no greater than before. If the sub-bundle corresponds to a transaction $\TX_j$, the profit for both scenarios is the same, which is $\Delta x_j \cdot v_i^* + \Delta y_j - \max_{k\neq i} V_k(\TX_j)$. If the sub-bundle corresponds to $s_0$, the profit after misreporting decreases because according to the analysis in Case 2, $h(\hat{x}_i, \hat{y}_i) < h(x^*, y^*)$ when $(\hat{x}_i, \hat{y}_i) \neq (x_i^*, y_i^*)$. 

    To sum up, arbitrageur $i$'s best strategy is to report truthfully by setting $q_i = v_i^*$, namely, 
    \begin{center}
        $u(\emptyset,v_i^*;S_{-i},\mathbf{q}_{-i})\geq u(\emptyset,q_i;S_{-i},\mathbf{q}_{-i})$ for all $q_i\in \mathbb{R}$.
    \end{center}
    This concludes the proof.
\end{proof}

\subsection{Proof of \Cref{thm:sybilproof}}\label{subsec:oursybilproof}
\begin{proof}[Proof of \Cref{thm:sybilproof}]
    Fix an arbitrary arbitrageur $i$ and the reports $\textbf{q}$ of all arbitrageurs. Whether a transaction $\TX_j \in M$ can be included in the bundle and thus executed only depends on $\max_{k\in [n]} \Delta x_j \cdot q_k + \Delta y_j$, where $(\Delta x_j, \Delta y_j)$ is determined by $\TX_j$ itself (see \Cref{eq:impact}). Suppose the maximal value is $\Delta \phi_w$ achieved by arbitrageur $w \in [n]$. If $\Delta \phi_w \geq 0$, $\TX_j$ will be deterministically executed at its limit state, causing a deterministic change in the user's account balance, and receive a refund $\max_{k \neq w} \max\{0, \Delta x_j \cdot q_k + \Delta y_j\}$. As a result, the number of tokens that the user of $\TX_j$ owns after executing through our mechanism is only determined by the transaction itself and $\textbf{q}$, which are fixed and independent of arbitrageur $i$'s strategy $S_i$. This concludes the proof.
\end{proof}

\subsection{Proof of \Cref{thm:NE}}\label{subsec:NEproof}
\begin{proof}[Proof of \Cref{thm:NE}]
    Our proof strategy below follows from two steps: We first give a comprehensive analysis of arbitrageur $i$'s profit with arbitrary strategy given everyone's belief $v_i^*$. The analysis will provide the intuition of our definition of the Sybil strategy. We then formally define the Sybil strategy $S_i(v_i^*,{b}_i^{\calX}, {b}_i^{\calY}, \calD)$ and show that $(v_i^*, S_i(v_i^*,{b}_i^{\calX}, {b}_i^{\calY}, \calD))$ is a best strategy of arbitrageur $i$ when everyone else follows this strategy and their belief $v_k^*$ is drawn from $\calD_k$. 
	
    \paragraph{First step.}	Fix an arbitrageur $i$, its true belief $v_i^*$, its budget $({b}_i^{\calX}, {b}_i^{\calY})$, and the strategies of all other arbitrageurs $\set{(v_k^*, S_k(v_k^*,{b}_k^{\calX}, {b}_k^{\calY}, \calD))}_{k \neq i}$. Here let's say $S_k(v_k^*,{b}_k^{\calX}, {b}_k^{\calY}, \calD)$ is some abstract Sybil strategy and it doesn't affect the analysis in the first step. We use $S_k$ as a shortening of $S_k(v_k^*,{b}_k^{\calX}, {b}_k^{\calY}, \calD)$ for every $k\in[n]$ for simplicity of notations.

	Given an arbitrary arbitrageur $i$'s strategy $(q_i, S_i')$, $i$'s utility $u(S_i', q_i; S_{-i}, \mathbf{v}^*_{-i})$ can be calculated by enumerating all sub-bundles in the output bundle of our mechanism. Our analysis and proof will be based on analyzing the profit of two different groups of the sub-bundles.

The proof of \Cref{theorem:truthful} implies that for every sub-bundle for which the middle transaction $e\in R\cup S_{-i}\cup\{s_0\}$, we have the profit from reporting $q_i$ is no more than the profit from reporting $v_i^*$. Thus, we will be mainly focusing on the arbitrageur $i$'s profit of the sub-bundle of every $e\in S_i'$ when $i$ reports $q_i$. 

Let's consider any $e\in S_i'$ such that $e=\TX=(\calX \to \calY, \delta_{\calX}^{in}, \delta_{\calY}^{out})$. The other case will be similar. Note that if $\TX$ is sandwiched by $i$ itself (which means $q_i\geq \max_{k\neq i}\{v_k^*\}$), then the profit of $i$ is 0; if $\TX$ is sandwiched by someone else, denoted by $w$, then arbitrageur $i$'s utility from this sub-bundle is 
    \begin{equation}\label{equation:H}
    H(q_i,\delta_{\calY}^{out})=\underbrace{\delta_{\calY}^{out} - \delta_{\calX}^{in} \cdot v_i^*}_{\text{Execution of $\TX$}} + \underbrace{\max\bigg(0,\delta_{\calX}^{in} \cdot q_i - \delta_{\calY}^{out},\max_{k \neq i,w}\{\delta_{\calX}^{in} \cdot v_k^* - \delta_{\calY}^{out}\}\bigg)}_{\text{Refund to $\TX$}}.
    \end{equation}
    
    To analyze the profit formula above, note that $H(\cdot,\cdot)$ is an increasing function of both parameters. However, there are also two upper bounds of these parameters based on the fact that this sub-bundle is won and sandwiched by $w\neq i$:
    \begin{itemize}
    	\item $w$ is the winner of all arbitrageur, which means $q_i\leq v_w^*$;
    	\item the profit of $w$ is no less than 0, which means $\delta_{\calY}^{out} \leq \delta_{\calX}^{in} \cdot v_w^*$.
    \end{itemize}
    We need the further property of $H(\cdot,\cdot)$, stated below:
    \begin{claim}\label{claim:H property}
    	For any $t$ such that $t\leq v_w^*$, we have $H(t, \delta_{\calX}^{in} \cdot t)=H(q_i, \delta_{\calX}^{in}\cdot t)$ for all $q_i\leq t$. 
    \end{claim}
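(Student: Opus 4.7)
The plan is to prove Claim~\ref{claim:H property} by direct substitution into the definition of $H$ in Equation~\eqref{equation:H}, and then to observe that once the second argument is fixed to $\delta_{\calX}^{in}\cdot t$, the dependence on the first argument $q_i$ collapses to a single term inside the outer $\max$, which the hypothesis $q_i\leq t$ renders inactive.

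Concretely, I would plug $\delta_{\calY}^{out}=\delta_{\calX}^{in}\cdot t$ into \eqref{equation:H} and rewrite the result as $\delta_{\calX}^{in}(t-v_i^*)$ plus $\delta_{\calX}^{in}\cdot\max\!\bigl(0,\; q_i-t,\; \max_{k\neq i,w}\{v_k^*-t\}\bigr)$, using $\delta_{\calX}^{in}>0$ to factor it out of the inner comparisons. Only the middle entry $q_i-t$ depends on $q_i$, and the assumption $q_i\leq t$ forces it to be non-positive, so it is always dominated by the constant $0$ already inside the $\max$ and can be dropped without changing the value. Setting $q_i=t$ makes this entry exactly $0$, which is likewise absorbed. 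In either case the inner max reduces to $\max\!\bigl(0,\max_{k\neq i,w}\{v_k^*-t\}\bigr)$, which depends only on $t$, $v_i^*$, and $\{v_k^*\}_{k\neq i,w}$, giving the claimed identity $H(t,\delta_{\calX}^{in}\cdot t)=H(q_i,\delta_{\calX}^{in}\cdot t)$.

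I do not anticipate a real obstacle; the claim is essentially a monotonicity-and-clipping identity for $\max$. The only subtlety worth flagging is the role of the hypothesis $t\leq v_w^*$, which does not enter the algebra above but is the feasibility condition making the $H$-formula applicable in the first place: recall that $H$ was derived under the premise that the sub-bundle is sandwiched by $w\neq i$, requiring both $q_i\leq v_w^*$ (implied by $q_i\leq t\leq v_w^*$) and $\delta_{\calY}^{out}\leq \delta_{\calX}^{in}\cdot v_w^*$, which after substitution becomes exactly $t\leq v_w^*$. I would include a one-line remark to this effect so that, in the surrounding argument for Theorem~\ref{thm:NE}, the claim can be cited cleanly to conclude that a Sybil transaction with limit price $t\leq v_w^*$ yields the same payoff whether arbitrageur $i$ reports $t$ or any $q_i\leq t$ (in particular, the truthful report $v_i^*$).
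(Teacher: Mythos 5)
Your proposal is correct and is essentially the paper's own proof: substitute $\delta_{\calY}^{out}=\delta_{\calX}^{in}\cdot t$ into \eqref{equation:H}, note that the only $q_i$-dependent entry of the outer $\max$ is $\delta_{\calX}^{in}(q_i-t)\leq 0$ and hence dominated by the $0$ term, so the value is independent of $q_i$ for $q_i\leq t$. The added remark on the role of $t\leq v_w^*$ as the feasibility condition for applying $H$ is a reasonable clarification but does not change the argument.
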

    \begin{proof}
    Note that for the parameter regime that we are considering, we have $\delta_{\calX}^{in} \cdot q_i - \delta_{\calX}^{in} \cdot t \leq 0$. Thus $$H(q_i, \delta_{\calX}^{in} \cdot t)=H(t, \delta_{\calX}^{in} \cdot t) = \delta_{\calX}^{in} \cdot (t-v_i^*) + \max\big(0,\max_{k \neq i,w}\{\delta_{\calX}^{in} \cdot v_k^* - \delta_{\calY}^{out}\}\big).$$
    \end{proof}
    
    The claim above essentially says that, the arbitrageur $i$'s profit from \textit{its own Sybil transactions} is the same for any report $q_i \leq \delta_{\calY}^{out}/\delta_{\calX}^{in}$ regardless how small it is. Note that the arbitrageur $i$ also has the profit from users' transactions. Thus this provides a good intuition about what kind of Sybil strategy everyone is using could form a Nash equilibrium. We formalize it in the second step.

    \paragraph{Second step.}
    
We first specify the Sybil strategy $S_i(v_i^*,{b}_i^{\calX}, {b}_i^{\calY}, \calD)$ for every arbitrageur $i$. Fix an arbitrageur $i$, it includes two Sybil transactions: $\TX=(\calX \to \calY, \delta_{\calX}^{in} = {b}_i^{\calX}, \delta_{\calY}^{out}=t_{\calY}^*)$ and $\TX=(\calY \to \calX, \delta_{\calY}^{in} = {b}_i^{\calY}, \delta_{\calX}^{out}=t_{\calX}^*)$. To specify the $t_{\calY}^*$ above, we recall the profit function of $i$ given $q_i$ and $\delta_{\calY}^{out}$ as parameters:

\begin{equation*}
	\Gamma(q_i,\delta_{\calY}^{out},v_{-i}^*)_i=
	\left\{ \begin{array}{ll}
	H(q_i,\delta_{\calY}^{out}) & q_i\leq \max_{k \neq i}\{ v_k^* \} \text{ and } \delta_{\calY}^{out}\leq b_i^{\calX} \cdot \max_{k \neq i}\{ v_k^* \};\\
	0 & \text{otherwise}.
	\end{array}
	\right.
	\end{equation*}
Importantly, note that in the definition of $\Gamma()$ above, we should use ${b}_i^{\calX}$ as the parameter of $\delta_{\calX}^{in}$ when we refer the $H$ function of \Cref{equation:H}.

Now we are ready to define $t_{\calY}^*$ and $t_{\calX}^*$ can be defined similarly.
\begin{equation*}
        t_{\calY}^*\coloneqq \arg\max_t \mathbb{E}_{\bm{v}_{-i}^*\sim \calD_{-i}}\left[\Gamma(v_i^*,t,\bm{v}_{-i}^*)_i \right].
\end{equation*}
Namely, we choose $t_{\calY}^*$ that maximizes the expected profit of \textit{Sybil transactions} given that \textit{$i$ reports $v_i^*$ truthfully}. Note that by \Cref{theorem:truthful}, reporting truthfully can maximize arbitrageur $i$'s profit of transactions that are not $i$'s Sybil transactions. Thus, we only need to show that arbitrageur $i$'s profit of \textit{its Sybil transactions} is maximized when $i$ uses our specified strategy, comparing with arbitrary report $q_i$ and set of Sybil transactions $S_i'$. We consider the case where $S_i'$ only contains one Sybil transaction of $(\calX \to \calY, \delta_{\calX}^{in} = {b}_i^{\calX}, \delta_{\calY}^{out})$ (and only contains one Sybil transaction of $(\calY \to \calX, \delta_{\calY}^{in} = {b}_i^{\calY}, \delta_{\calX}^{out})$). This is without loss of generality because we can easily merge multiple Sybil transactions in the same direction. It is easy to see that $\delta_{\calY}^{out}$ should be at least ${b}_i^{\calX}\cdot v_i^*$.

It remains for us to show that $\mathbb{E}_{\bm{v}_{-i}^*\sim \calD_{-i}}\left[\Gamma(v_i^*,t_{\calY}^*,\bm{v}_{-i}^*)_i \right]\geq \mathbb{E}_{\bm{v}_{-i}^*\sim \calD_{-i}}\left[\Gamma(q_i,\delta_{\calY}^{out},\bm{v}_{-i}^*)_i \right]$. The proof will be purely based on the properties of the $H$ function. Recall that $H$ is increasing for both parameters and $\Gamma$ is non-zero only if $q_i\leq \max_{k \neq i}\{ v_k^* \} \text{ and } \delta_{\calY}^{out}\leq b_i^{\calX} \cdot \max_{k \neq i}\{ v_k^* \}$. Letting $\alpha=\max(q_i,\delta_{\calY}^{out}/b_i^{\calX})$, we have that $$\mathbb{E}_{\bm{v}_{-i}^*\sim \calD_{-i}}\left[\Gamma(\alpha,b_i^{\calX}\cdot \alpha,\bm{v}_{-i}^*)_i \right]\geq \mathbb{E}_{\bm{v}_{-i}^*\sim \calD_{-i}}\left[\Gamma(q_i,\delta_{\calY}^{out},\bm{v}_{-i}^*)_i \right].$$
Furthermore, by Claim~\ref{claim:H property}, we know that 
$$\mathbb{E}_{\bm{v}_{-i}^*\sim \calD_{-i}}\left[\Gamma(v_i^*,b_i^{\calX}\cdot \alpha,\bm{v}_{-i}^*)_i \right]\geq \mathbb{E}_{\bm{v}_{-i}^*\sim \calD_{-i}}\left[\Gamma(\alpha,b_i^{\calX}\cdot \alpha,\bm{v}_{-i}^*)_i \right].$$
Finally, by the definition of $t_{\calY}^*$, we conclude 
$$\mathbb{E}_{\bm{v}_{-i}^*\sim \calD_{-i}}\left[\Gamma(v_i^*,t_{\calY}^*,\bm{v}_{-i}^*)_i \right]\geq \mathbb{E}_{\bm{v}_{-i}^*\sim \calD_{-i}}\left[\Gamma(v_i^*,b_i^{\calX}\cdot \alpha,\bm{v}_{-i}^*)_i \right].$$

The optimal choice of $t_{\calX}^*$ and its analysis is analogous. This concludes the proof.
\end{proof}

\section{Implementation Considerations}
\label{app:implementation}

Below, we outline the architecture and workflow of a TEE-based implementation. We assume TEEs achieve confidentiality and integrity; in practice, one must carefully deal with side-channel attacks.

\subsection{Overview}
Figure~\ref{fig:workflow} illustrates the architecture of \AMMName.  As shown, there are four main entities: users, arbitrageurs, a TEE, and a smart contract.

\begin{figure}
    \centering
    \includegraphics[width=0.9\linewidth]{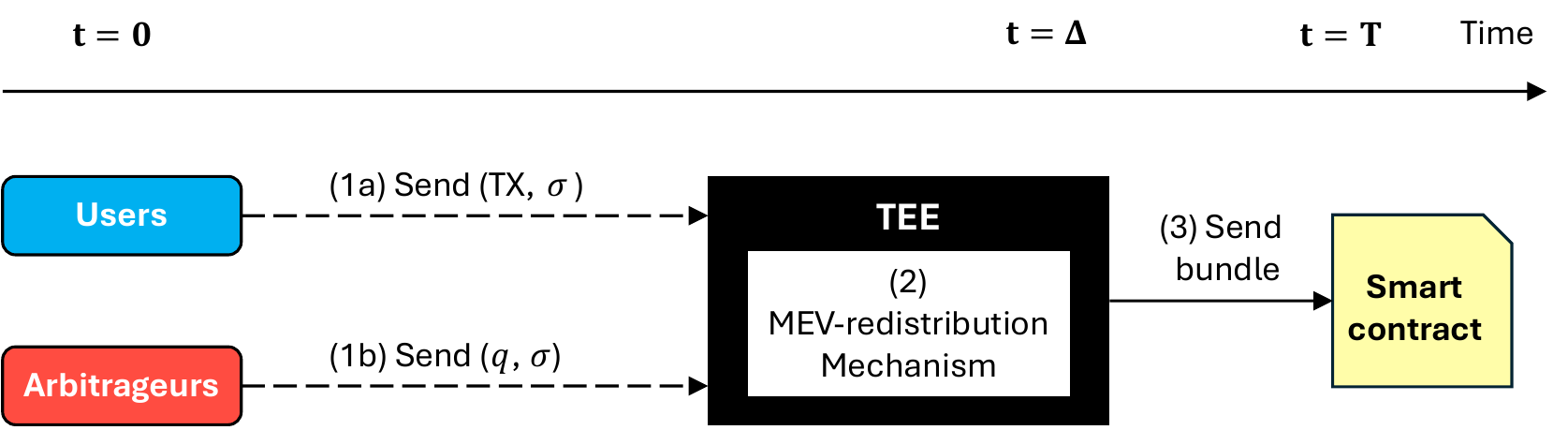}
    \caption{Overview of the \AMMName architecture and workflow. Users and arbitrageurs privately submit transactions and reports to the TEE, where the MEV-redistribution mechanism operates. Once the registration phase ends (as determined by the parameter $\Delta$), the TEE executes the MEV-redistribution process, constructs a bundle containing the swap, payment, and refund details, and then sends it publicly to trigger on-chain operations by invoking smart contracts.}
    \label{fig:workflow}
\end{figure}

\begin{itemize}
    \item \textbf{Users} are traders who use \AMMName for asset swaps. 
    They submit transactions to the TEE, specifying the swap direction, the maximum input amount, and the minimum output amount.
    \item \textbf{Arbitrageurs} are a combination of MEV searchers, market makers, and/or other participants seeking non-atomic arbitrage opportunities. Arbitrageurs must grant the \AMMName smart contract permission to manage a portion of their token holdings in advance. They provide a signed message to the TEE indicating their belief in the external price of the risk asset. 
    
    \item \textbf{TEE} processes submissions from users and arbitrageurs, running the MEV-redistribution mechanism to form bundles that include token swaps (i.e., outcome of the bundle generation rule) and transfers (for payments and refunds). The bundle is forwarded to the \AMMName smart contract for execution. %
    
    \item \textbf{\AMMName smart contract} executes the bundle from the TEE by interacting with the transfer and swap functions to move tokens among users, arbitrageurs, and the CFMM pool.

\end{itemize}

\subsection{Workflow}
We now elaborate on the workflow in Figure~\ref{fig:protocolcode}. \AMMName operates in slots, each with a duration $T$ consistent with the block time (e.g., $T=12s$ on Ethereum). Each slot consists of two stages: the registration stage and the execution stage. During the first registration stage, users and arbitrageurs submit their signed messages to the TEE, indicating their desire to participate. During the second execution stage, the TEE processes these messages, executes the MEV-redistribution mechanism, constructs the bundle accordingly, and invokes the smart contract to complete on-chain operations. The detailed steps of the protocol execution are as follows:

\begin{enumerate}
    \item[(1a)] A user $\puser_j$ prepares a transaction $\TX_j$, specifying the swap direction ($\calX \to \calY$ or $\calY \to \calX$), a maximum input amount $\delta^{in}$, and a minimum output amount $\delta^{out}$.
    The user signs the order with their private key $\mathsf{sk}_{\puser_j}$, producing the signature $\sigma_{\puser_j} \coloneqq \textsf{Sig}(\mathsf{sk}_{\puser_j}, \TX_j)$. The user then sends a message, $(\msgorder, (\TX_j, \sigma_{\puser_j}))$, to the TEE as described in Lines 1-4 of $\protoAMM$. 
    
    \item[(1b)] An arbitrageur $\parbtrgr_i$ provides a report $q_i$ on the external price of the risky asset $\calX$. 
    Like users, the arbitrageur signs the report with their private key $\mathsf{sk}_{\parbtrgr_i}$, generating a signature $\sigma_{\parbtrgr_i} \coloneqq \textsf{Sig}(\mathsf{sk}_{\parbtrgr_i}, q_i)$. The arbitrageur sends the signed message, $(\msgarbitrage, (q_i, \sigma_{\parbtrgr_i}))$, to the TEE, as specified in Lines 5-8.
    
    \item[(2)] Upon receiving the messages from users and arbitrageurs, the TEE maintains two lists: $M$ for pending orders and $\mathbf{q}$ for arbitrage reports. When the protocol-defined time $\Delta$ ($< T$) is reached, the registration stage ends. The TEE then executes the MEV-redistribution mechanism based on the pool's initial state $s_0$, pending transactions $M$, and the arbitrageurs' reports $\mathbf{q}$. It starts by applying Algorithm~\ref{alg:ourMechanism} (the bundle generation rule) to produce an MEV bundle. Next, the TEE implements the payment and refund rules. For each MEV opportunity from a pending transaction $\TX_i$ or the initial state $s_0$, the TEE constructs a transfer transaction from the winner $\mathcal{A}_{w_i}$ to the owner of $\TX_i$ or the pool (i.e., LPs), and adds it to the bundle. The whole process is indicated in Lines 9-30.
    
    \item[(3)] The TEE sends the final bundle to the \AMMName smart contract and calls the \textsf{settle()} function. This function parses the bundle and proceeds each transaction sequentially: for trades, it invokes the pool's \textsf{swap()} function; for token transfers, it calls the \textsf{transfer()} function. This completes the settlement phase, ensuring all token transfers and trades are executed on-chain.
\end{enumerate}

\begin{figure*}
\protocolsidebyside
{$\protoAMM(\Delta, pool)$}
{
\textcolor{mypurple}{\underline{\textbf{Users $\mathcal{P}_j$}}}: \\
    \t $\TX_j =$ (direction, $\delta^{in}$, $\delta^{out}$) \ \pccomment{swap} \\
    \t $\sigma_{\puser_j} \coloneqq \textsf{Sig}(\mathsf{sk}_{\puser_j}, \TX_j)$ \\
    \t send $(\msgorder, (\TX_j, \sigma_{\puser_j}))$ to \textcolor{mypurple}{\textbf{TEE}}  \\[1mm]
\textcolor{mypurple}{\underline{\textbf{Arbitrageurs $\mathcal{A}_i$}}}: \\
    \t $q_i$ = $\parbtrgr_i$'s report on the external price \\  
    \t $\sigma_{\parbtrgr_i} \coloneqq \textsf{Sig}(\mathsf{sk}_{\parbtrgr_i}, q_i)$ \\
    \t send $(\msgarbitrage, (q_i, \sigma_{\parbtrgr_i}))$ to \textcolor{mypurple}{\textbf{TEE}}  \hspace{1cm}  \\[1mm]
\textcolor{mypurple}{\underline{\textbf{TEE}}}: \\
$s_0 \gets$ $pool$'s initial state    \\
Initialize $M = [ \ ]$ \pccomment{pending transactions} \\
Initialize $\mathbf{q} = [ \ ]$  \ \pccomment{arbitrageurs in waiting} \\
\onrecv $(\msgorder, (\TX_j, \sigma_{\puser_j}))$: \\
    \t $M = M \cup \TX_j$ \\[1mm]
\onrecv $(\msgarbitrage, (q_i, \sigma_{\parbtrgr_i}))$: \\
    \t $\mathbf{q} = \mathbf{q} \cup q_i$ 
}
{
\textcolor{mypurple}{\underline{\textbf{TEE}}} (Cont'd): \\
\ontime ($\Delta$): \\
    \t \pccomment{execute the MEV-redistribution mechanism}  \\
    \t \textsf{bundle} $\gets$ Algorithm~\ref{alg:ourMechanism} $(s_0, M, \mathbf{q})$ \\
    \t \textbf{for} $i \in [|M| + 1]$  \ \pccomment{payment and refund rules}  \\
    \t\t $\mathcal{A}_{w_i} \gets winner$, amount $\gets \mathcal{A}_{w_i}$'s payment  \\
    \t\t to $\gets$ owner of $\TX_i$ or \textit{pool}'s address   \\
    \t\t $\TX_{w_i}$ = (token, amount, to)  \ \pccomment{transfer} \\
    \t\t \textsf{bundle}.append$(\TX_{w_i})$    \\
    \t call \textcolor{magenta}{\AMMName.settle} (\textsf{bundle}) \\[1mm]
\textcolor{mypurple}{\underline{\textbf{\AMMName Smart Contract}}}:  \\
\MCMsettle(\textsf{info}):   \\
    \t \textbf{for} $\TX \in \textsf{bundle}$    \\
    \t\t \textbf{if} $\TX.type$ = \textit{swap} \textbf{then}  \\
    \t\t\t call \textcolor{magenta}{pool.swap}(direction, $\delta^{in}$, $\delta^{out}$)    \\
    \t\t \textbf{else}  \\
    \t\t\t call \textcolor{magenta}{token.transfer}(token, amount, to)
}
\caption{\AMMName Protocol.}
\label{fig:protocolcode}
\end{figure*}

\end{document}